\newtheorem{theorem}{Theorem}[section]
\newtheorem{lemma}[theorem]{Lemma}
\newtheorem{proposition}[theorem]{Proposition}
\newtheorem{remark}[theorem]{Remark}
\newtheorem{definition}[theorem]{Definition}
\newtheorem{example}[theorem]{Example}
\title{Qualitative Reachability in Stochastic BPA Games}
\author[fimu]{Tom\'{a}\v{s} Br\'{a}zdil}
\ead{brazdil@fi.muni.cz}
\author[fimu]{V\'{a}clav Bro\v{z}ek}
\ead{brozek@fi.muni.cz}
\author[fimu]{Anton\'{\i}n Ku\v{c}era}
\ead{kucera@fi.muni.cz}
\author[fimu]{Jan Obdr\v{z}\'{a}lek}
\ead{obdrzalek@fi.muni.cz}
\address[fimu]{Faculty of Informatics, Masaryk University,\\
Botanick\'a 68a, 60200 Brno\\
Czech Republic
}
\renewcommand{\bigcirc}{\ocircle}
\tikzstyle{max}=[thick,draw,minimum size=1.2em,inner sep=.2ex]
\tikzstyle{min}=[diamond,thick,draw,minimum size=1.2em,inner sep=.2ex]
\tikzstyle{ran}=[ellipse,thick,draw,minimum size=1.2em,inner sep=.2ex]
\tikzstyle{transition}=[thick,draw,->,>=stealth]
\tikzstyle{loop left}=[transition, to path={.. controls +(150:.5) 
\tikzstyle{loop right}=[transition, to path={.. controls +(30:.5) 
\tikzstyle{loop above}=[transition, to path={.. controls +(60:.5) 
\tikzstyle{loop below}=[transition, to path={.. controls +(230:.5) 
\newcommand{\Att}{\mathit{Att}}
\newcommand{\calF}{\mathcal{F}}
\newcommand{\calO}{\mathcal{O}}
\newcommand{\calP}{\mathcal{P}}
\newcommand{\A}{\mathscr{A}}
\newcommand{\calA}{\mathcal{A}}
\newcommand{\B}{\mathscr{B}}
\newcommand{\C}{\mathscr{C}}
\newcommand{\scrD}{\mathscr{D}}
\newcommand{\D}{\mathcal{D}}
\newcommand{\F}{\mathcal{F}}
\newcommand{\G}{\mathcal{G}}
\newcommand{\U}{\mathcal{U}}
\newcommand{\V}{\mathcal{V}}
\newcommand{\W}{\mathcal{W}}
\newcommand{\M}{\mathcal{M}}
\newcommand{\Prob}{\mathit{Prob}}
\newcommand{\Nset}{\mathbb{N}}
\newcommand{\Qset}{\mathbb{Q}}
\newcommand{\Rset}{\mathbb{R}}
\newcommand{\R}{\mathit{Reach}}
\newcommand{\fpath}{\mathit{FPath}}
\newcommand{\run}{\mathit{Run}}
\newcommand{\price}{\mathit{price}}
\newcommand{\eps}{\varepsilon}
\newcommand{\wsetB}[3]{[{#1}]_{\Box}^{{#2}#3}}%
\newcommand{\wsetD}[3]{[{#1}]_{\Diamond}^{{#2}#3}}%
\newcommand{\NP}{\textbf{NP}}
\newcommand{\PTIME}{\textbf{P}}
\newcommand{\val}{\mathit{val}}
\newcommand{\len}[1]{|#1|}
\newcommand{\coNP}{\textbf{co-NP}}
\newcommand{\dist}{\mathcal{D}}
\newcommand{\pp}[1]{\widetilde{#1}}
\newcommand{\initdelta}{\bar{\Delta}}
\renewcommand{\succ}{\rhd}
\newcommand{\notsucc}{\mathop{\not\!\rhd}}
\newcommand{\coloneqq}{\mathrel{\mathop:}=}
\newcommand{\tran}[1]{{}\mathchoice%
    {\stackrel{#1}{\longrightarrow}}
    {\mathop {\smash\longrightarrow}\limits^{\vrule width 0pt height 0pt 
                                                depth 4pt\smash{#1}}}
    {\stackrel{#1}{\longrightarrow}}
    {\stackrel{#1}{\longrightarrow}}
{}}
\newcommand{\btran}[2][]{{}\mathchoice%
    {\stackrel{#2}{\hookrightarrow_{#1}}}
    {\mathop {\smash{\hookrightarrow_{#1}}}\limits^{\vrule width 0pt height 0pt 
                                                depth 4pt\smash{#2}}}
    {\stackrel{#2}{\hookrightarrow_{#1}}}
    {\stackrel{#2}{\hookrightarrow_{#1}}}
{}}
\newcommand{\bbtran}[2][]{{}\mathchoice%
    {\stackrel{#2}{\leadsto_{#1}}}
    {\mathop {\smash{\leadsto_{#1}}}\limits^{\vrule width 0pt height 0pt 
                                                depth 4pt\smash{#2}}}
    {\stackrel{#2}{\leadsto_{#1}}}
    {\stackrel{#2}{\leadsto_{#1}}}
{}}
\newcommand{\gtran}[2][]{{}\mathchoice%
    {\stackrel{#2}{\mapsto_{#1}}}
    {\mathop {\smash{\mapsto_{#1}}}\limits^{\vrule width 0pt height 0pt 
                                                depth 4pt\smash{#2}}}
    {\stackrel{#2}{\mapsto_{#1}}}
    {\stackrel{#2}{\mapsto_{#1}}}
{}}
\newcommand{\qedv}[1]{\ifmmode\squareforqed\else{\unskip\nobreak\hfil
\penalty50\hskip1em\null\nobreak\hfil(#1)\squareforqed
\parfillskip=0pt\finalhyphendemerits=0\endgraf}\fi}
\begin{document}

\begin{abstract}
We consider a class of infinite-state stochastic games generated by
stateless pushdown automata (or, equivalently, 1-exit recursive state
machines), where the winning objective is specified by a regular set of
target configurations and a qualitative probability constraint `${>}0$'
 or `${=}1$'. The goal of one player is to maximize the
probability of reaching the target set so that the constraint is satisfied,
while the other player aims at the opposite.  We show that the winner in
such games can be determined in $\PTIME$ for the `${>}0$' constraint,
and in $\NP \cap \coNP$ for the `${=}1$' constraint. Further, we prove that the
winning regions for both players are regular, and we design algorithms which
compute the associated finite-state automata.  Finally, we show that winning
strategies can be synthesized effectively.
\end{abstract}

\maketitle

\section{Introduction}
\label{sec-intro}

Stochastic games are a formal model for discrete systems where the
behavior in each state is either controllable, adversarial, or
stochastic. Formally, a stochastic game is a directed graph $G$ 
with a denumerable set of
vertices $V$ which is split into three disjoint subsets $V_\Box$,
$V_\Diamond$, and $V_\bigcirc$.  For every $v \in V_\bigcirc$, there
is a fixed probability distribution over the outgoing edges
of~$v$. We also require that the set of outgoing edges of
every vertex is nonempty.
The game is initiated by putting a token on some
vertex. The token is then moved from vertex to vertex by two players,
$\Box$ and $\Diamond$, who choose the next move in the vertices of
$V_\Box$ and $V_\Diamond$, respectively.  In the vertices of
$V_\bigcirc$, the outgoing edges are chosen according to the
associated fixed probability distribution. 
A \emph{quantitative winning objective} is specified by some Borel 
set $W$ of infinite paths in $G$ and a probability constraint 
${\succ}\varrho$, where ${\succ} \in \{{>},{\geq}\}$ 
is a comparison and $\varrho \in [0,1]$.
An important subclass of quantitative winning objectives are
\emph{qualitative winning objectives} where the constant $\varrho$ 
must be either $0$ or $1$. The goal of player~$\Box$ is to 
maximize the probability of all runs that stay in $W$ so that it
is $\succ$-related to $\varrho$, while player~$\Diamond$ aims at the 
opposite. A \emph{strategy}
specifies how a player should play. In general, a strategy may or may
not depend on the history of a play (we say that a strategy is
\emph{history-dependent (H)} or \emph{memoryless (M)}), and the
edges may be chosen deterministically or randomly
(\emph{deterministic (D)} and \emph{randomized (R)} strategies). In
the case of randomized strategies, a player chooses a probability
distribution on the set of outgoing edges.  Note that
deterministic strategies can be seen as restricted randomized
strategies, where one of the outgoing edges has probability~$1$.
Each pair of strategies $(\sigma,\pi)$ for players $\Box$ and
$\Diamond$ determines a \emph{play}, i.e., a unique Markov chain 
obtained from $G$ by applying the strategies $\sigma$ and $\pi$
in the natural way. The \emph{outcome} of a play initiated in
$v$ is the probability of all runs initiated in $v$ that are contained
in the  set $W$ (this probability is denoted by $\calP_v^{\sigma,\pi}(W)$). 
We say that a play is
$({\succ}\varrho)$-won by player~$\Box$ if its outcome is $\succ$-related
to~$\varrho$;
otherwise, the play is $({\notsucc}\varrho)$-won by player~$\Diamond$. 
A strategy $\sigma$ of player~$\Box$ is \emph{$({\succ}\varrho)$-winning}
if for every strategy $\pi$ of player~$\Diamond$, the  
corresponding play is $({\succ}\varrho)$-won by player~$\Box$.
Similarly, a strategy $\pi$ of player~$\Diamond$ is
\emph{$({\notsucc}\varrho)$-winning} if for every strategy
$\sigma$ of player~$\Box$, the  
corresponding play is $({\notsucc}\varrho)$-won by player~$\Diamond$.
A natural question is whether the game is \emph{determined}, i.e.{}, 
for every choice of ${\succ}$ and $\varrho$, either player~$\Box$ has
a $({\succ}\varrho)$-winning strategy, or player~$\Diamond$
has a $({\notsucc}\varrho)$-winning strategy. The answer is somewhat 
subtle. A celebrated result of 
Martin \cite{Martin:Blackwell-determinacy} (see also 
\cite{MS:stochastic-games-determinacy}) implies that stochastic games 
with Borel winning conditions are \emph{weakly determined}, i.e.,
each vertex $v$ has a \emph{value} given by
\begin{equation}
\label{eq-determinacy}
  \val(v) \quad = \quad \sup_\sigma \inf_\pi \calP_v^{\sigma,\pi}(W) 
  \quad = \quad \inf_\pi \sup_\sigma \calP_v^{\sigma,\pi}(W)
\end{equation}
Here $\sigma$ and $\pi$ range over the sets of all strategies for
player~$\Box$ and player~$\Diamond$, respectively. From this we can
immediately deduce the following: 
\begin{itemize}
\item If both players have \emph{optimal} strategies that guarantee
  the outcome $\val(v)$ or better against every strategy of the
  opponent (for example, this holds for \emph{finite-state} stochastic
  games and the ``usual'' classes of quantitative/qualitative Borel
  objectives), then the game is determined for every choice of 
  $\succ\varrho$.
\item Although optimal strategies are not guaranteed to exists in general,
  Equation~\ref{eq-determinacy} implies the existence
  of $\varepsilon$-optimal strategies  (see 
  Definition~\ref{def-optimal-strategy}) for every $\varepsilon > 0$.
  Hence, the game is determined for every choice of ${\succ}\varrho$
  where $\varrho \neq \val(v)$.  
\end{itemize}
The only problematic case is the situation when optimal strategies do not 
exist and $\varrho = \val(v)$. The example given in 
Figure~\ref{fig-not-determined} at page~\pageref{fig-not-determined} 
witnesses that such games are generally \emph{not} determined, even 
for reachability objectives. On the other hand, we show that 
\emph{finitely-branching} games (such as BPA games considered
in this paper) with reachability objectives \emph{are}
determined, although an optimal strategy for player~$\Box$
in a finitely-branching game 
does \emph{not} necessarily exist. The determinacy question for 
finitely-branching games and other classes of (Borel) winning 
objectives is left open.

Algorithmic issues for stochastic games with quantitative/qualitative 
winning objectives have been studied mainly for finite-state stochastic 
games. A lot of attention has been devoted to quantitative 
\emph{reachability objectives}, including the special case
when $\varrho = \frac{1}{2}$. 
The problem whether player~$\Box$ has a $({>}\frac{1}{2}$)-winning
strategy is known to be in $\NP \cap \coNP$, but its membership
to $\PTIME$ is a long-standing open problems in
algorithmic game theory 
\cite{Condon:simple-stochastic-games-IC,Vieille:stochastic-games-HGT}.
Later, more complicated qualitative/quantitative $\omega$-regular
winning objectives (such as B\"{u}chi, co-B\"{u}chi, Rabin, Street, 
Muller etc.) were considered, and the complexity of the corresponding
decision problems was analyzed. We refer to 
\cite{AM:omega-regular-games-JCSS,CHAH:Rabin-Street-Games,%
CHJH:quant-parity,CHJH:simple-stochastic-parity,%
Walukiewicz:games-background,Thomas:games-verification} for more details. 
As for infinite-state stochastic games, the attention has so far been
focused on stochastic games induced by lossy channel systems
\cite{AHAMS:Stochastic-games-lossy,BBS:MDP-LMC-omega-regular-Trans}
and by pushdown automata (or, equivalently, recursive state machines)
\cite{EY:RMDP-efficient,EY:RSCG,EY:RMC-RMDP,EWY:RSG-Positive-Rewards,%
BBFK:BPA-games-reachability-IC}. In the next paragraphs, 
we discuss the latter model in greater
detail because these results are closely related to the results presented in 
this paper.

A \emph{pushdown automaton (PDA)} (see, e.g., \cite{HU:book})
is equipped with a finite control unit and an unbounded
stack. The dynamics is specified by a finite set of rules of
the form $pX \btran{} q\alpha$, where $p,q$ are control states,
$X$ is a stack symbol, and $\alpha$ is a (possibly empty) sequence
of stack symbols. A rule of the form $pX \btran{} q\alpha$ is applicable
to every configuration of the form $pX\beta$ and produces the
configuration $q\alpha\beta$. If there are several rules with the
same left-hand side, one of them must be chosen, and the choice
is made by player~$\Box$, player~$\Diamond$, or it is randomized. 
Technically, the set of all left-hand sides (i.e., pairs of the form $pX$) 
is split into three disjoint subsets $H_\Box$, $H_\Diamond$, and $H_\bigcirc$,
and for all $pX \in H_\bigcirc$ there is
a fixed probability distribution over the set of all rules of the form
$pX \btran{} q\alpha$. Thus, each PDA induces the associated
infinite-state stochastic game where the vertices are PDA 
configurations and the edges are determined in the natural way. 
An important subclass of PDA is obtained by restricting the number
of control states to~$1$. Such PDA are also known as \emph{stateless}
PDA or (mainly in concurrency theory) as BPA. PDA and BPA correspond to
\emph{recursive state machines (RSM)} and \emph{1-exit RSM} respectively, in the sense
that their descriptive powers are equivalent, and there are effective
linear-time translations between the corresponding models.

In \cite{EY:RMC-RMDP}, the quantitative and qualitative 
\emph{termination objective} for PDA and BPA stochastic games
is examined (a terminating run is a run which hits a
configuration with the empty stack; hence, termination 
is a special form of reachability). For BPA, it is 
shown that the vector of optimal values $(\val(X), X \in \Gamma)$, 
where $\Gamma$ is the stack alphabet, forms the least solution
of an effectively constructible system of min-max equations.
Moreover, both players have \emph{optimal} MD strategies which depend
only on the topmost stack symbol of a given configuration
(such strategies are called SMD, meaning Stackless MD). Hence, 
stochastic BPA games with 
quantitative/qualitative termination objectives are determined.
Since the least solution of the constructed equational system
can be encoded in first order theory of the reals, the existence
of a $({\succ}\varrho)$-winning strategy for player~$\Box$
can be decided in polynomial space. In the same paper
\cite{EY:RMC-RMDP}, the $\Sigma_2^P \cap \Pi_2^P$ upper complexity
bound for the subclass of qualitative termination 
objectives is established. As for PDA games, it is shown that for 
every fixed $\varepsilon>0$,
the problem to distinguish whether the optimal value $\val(pX)$
is equal to~$1$ or less than $\varepsilon$, is undecidable.
The $\Sigma_2^P \cap \Pi_2^P$ upper bound for stochastic BPA games 
with qualitative termination objectives is improved to
$\NP \cap \coNP$ in \cite{EY:RMDP-efficient}. In the same paper, it is also 
shown that the quantitative reachability problem for finite-state stochastic
games (see above) is efficiently reducible to the qualitative termination 
problem for stochastic BPA games. Hence, the $\NP \cap \coNP$ upper bound
cannot be further improved without a major breakthrough in algorithmic game
theory. In the special case of stochastic BPA games where
$H_\Diamond = \emptyset$ or $H_\Box = \emptyset$, the qualitative
termination problem  is shown to be
in $\PTIME$ (observe that if $H_\Diamond = \emptyset$ or $H_\Box = \emptyset$,
then a given BPA induces an infinite-state Markov decision process and 
the goal of the only player is to maximize or minimize the termination
probability, respectively). The results for Markov decision processes
induced by BPA are generalized to (arbitrary) qualitative 
\emph{reachability objectives} in \cite{BBFK:BPA-games-reachability-IC},
retaining the $\PTIME$ upper complexity bound. In the same paper,
it is also noted that the properties of reachability
objectives are quite different from the ones of termination
(in particular, there is no apparent way how to express
the vector of optimal values as a solution of some recursive equational 
system, and the SMD determinacy result (see above) does not hold either).

\textbf{Our contribution:} In this paper, we continue the study initiated
in \cite{EY:RMDP-efficient,EY:RSCG,EY:RMC-RMDP,EWY:RSG-Positive-Rewards,%
BBFK:BPA-games-reachability-IC} and solve the qualitative reachability
problem for unrestricted stochastic BPA games. Thus, we obtain 
a substantial generalization of the previous results. 

We start by resolving the determinacy issue in 
Section~\ref{sec-determinacy}. We observe that general stochastic
games with reachability objectives are \emph{not} determined,
and we also show that \emph{finitely branching} stochastic games
(such as BPA stochastic games) with 
quantitative/qualitative reachability objectives \emph{are} determined, i.e.,
in every vertex, either player~$\Box$ has 
a $({\succ}\varrho)$-winning strategy, or player~$\Diamond$ has 
a $({\notsucc}\varrho)$-winning strategy. This is a consequence
of several observations that are specific to reachability objectives and
perhaps interesting on their own. 

The main results of our paper, presented in 
Sections~\ref{sec:zero},~\ref{sec:one}, and~\ref{sec:proofs}
concern stochastic BPA games with  
qualitative reachability objectives. In the context
of BPA, a reachability objective is specified by a \emph{regular}
set $T$ of target configurations. 
We show that the problem of determining the winner in stochastic BPA games
with qualitative reachability objectives is in $\PTIME$ for the
`${\succ}0$' constraint, and in $\NP \cap \coNP$ for the `${\succ}1$'
constraint. Here we rely on the previously discussed results about 
qualitative termination \cite{EY:RMDP-efficient} and use the corresponding
algorithms as ``black-box procedures'' at appropriate places. 
We also rely on observations presented in  
\cite{BBFK:BPA-games-reachability-IC} which were used to solve the
simpler case with only one player. However, the full (two-player)
case brings completely new complications that need to be tackled by
new methods and ideas. Many ``natural'' hypotheses turned out to be
incorrect (some of the interesting cases are documented by
examples in Section~\ref{sec-BPA-games}). 
We also show that for each $\varrho \in \{0,1\}$, the sets of all 
configurations where player~$\Box$ (or player~$\Diamond$) has a 
\mbox{$({\succ}\varrho)$-winning} (or \mbox{$({\notsucc}\varrho)$-winning})
strategy is effectively regular, and the
corresponding finite-state automaton is effectively constructible
by a deterministic polynomial-time algorithm (for the `${\succ}1$'
constraint, the algorithm needs $\NP \cap \coNP$ oracle). 
Finally, we also give algorithms which
\emph{compute} winning strategies
if they exist. These strategies are memoryless, and they
are also \emph{effectively regular} in the sense that their 
functionality is effectively expressible by finite-state automata
(see Definition~\ref{def-regular-strategy}). Hence, 
winning strategies in stochastic BPA games with qualitative
reachability objectives can be effectively implemented.

For the sake of readability, some of the more involved (and long) proofs
of Section~\ref{sec:one} have been postponed to
Section~\ref{sec:proofs}. In the main body of the paper, we try to
sketch the key ideas and provide some intuition behind the presented
technical constructions.

\section{Basic Definitions}
\label{sec-defs}

In this paper, the sets of all positive integers, non-negative
integers, rational numbers, real numbers, and non-negative real
numbers are denoted by $\Nset$, $\Nset_0$, $\Qset$, $\Rset$, and
$\Rset^{\geq 0}$, respectively. 
For every finite or countably infinite set $S$, the symbol $S^*$
denotes the set of all finite words over $S$.
The length of a given word $u$ is denoted by $|u|$, and the individual
letters in $u$ are denoted by $u(0),\cdots,u(|u|-1)$.
The empty word is denoted by $\varepsilon$, and we set
$|\varepsilon| = 0$.  We also use $S^+$ to denote the
set $S^* \smallsetminus \{\varepsilon\}$. 
For every finite or countably infinite set $M$, a binary relation 
${\to} \subseteq M \times M$ is \emph{total} if for every
$m \in M$ there is some $n \in M$ such that $m \to n$.
A \emph{path} in $\M = (M,{\to})$ is a finite or
infinite sequence $w = m_0,m_1,\ldots$ such that 
$m_i \to m_{i+1}$ for every $i$. The \emph{length} of a finite
path $w = m_0,\ldots,m_i$, denoted by $\len{w}$, is $i+1$.
We also use $w(i)$ to denote the element
$m_i$ of $w$, and $w_i$ to denote the path $m_i,m_{i+1},\ldots$ 
(by writing $w(i) = m$ or $w_i$ we implicitly impose the condition
that $\len{w} \geq i{+}1$). A given $n \in M$ is \emph{reachable} from  
a given $m \in M$, written $m \to^* n$, if there is a finite path 
from $m$ to $n$. A \emph{run} is an infinite path. The sets of all
finite paths and all runs in $\M$ are denoted by 
$\fpath(\M)$ and $\run(\M)$,
respectively.  Similarly, the sets of all finite paths and runs that
start in a given $m \in M$ are denoted by $\fpath(\M,m)$ and $\run(\M,m)$,
respectively.  

Now we recall basic notions of probability theory. 
Let $A$ be a finite or countably infinite set. A 
\emph{probability distribution}
on $A$ is a function $f : A \rightarrow \Rset^{\geq 0}$ such that
\mbox{$\sum_{a \in A} f(a) = 1$}. A distribution $f$ is \emph{rational}
if $f(a) \in \Qset$ for every $a \in A$,
\emph{positive} if $f(a) > 0$ for every $a \in A$, \emph{Dirac}
if $f(a) = 1$ for some $a \in A$, and \emph{uniform} if $A$ is
finite and $f(a) = \frac{1}{|A|}$ for every $a \in A$.
The set of all distributions on $A$ is denoted by $\D(A)$. 

A \emph{$\sigma$-field} over a set $X$ is a set $\calF \subseteq 2^X$
that includes $X$ and is closed under complement and countable union. 
A \emph{measurable space} is a pair $(X,\calF)$ where $X$ is a
set called \emph{sample space} and $\calF$ is a $\sigma$-field over $X$. 
A \emph{probability measure} over a measurable space
$(X,\calF)$ is a function $\calP : \calF \rightarrow \Rset^{\geq 0}$
such that, for each countable collection $\{X_i\}_{i\in I}$ of pairwise
disjoint elements of $\calF$, $\calP(\bigcup_{i\in I} X_i) = 
\sum_{i\in I} \calP(X_i)$, and moreover $\calP(X)=1$. A 
\emph{probability space} is a triple $(X,\calF,\calP)$ where 
$(X,\calF)$ is a measurable space and $\calP$
is a probability measure over $(X,\calF)$. 

\begin{definition}
A \emph{Markov chain} is a triple \mbox{$\M = (M,\tran{},\Prob)$} 
where $M$ is a finite or countably infinite set
of \emph{states}, ${\tran{}} \subseteq M \times M$ is a total 
\emph{transition relation}, and $\Prob$ is a function which to each
$s \in M$ assigns a positive probability distribution over the set
of its outgoing transitions. 
\end{definition}

In the rest of this paper, we write $s \tran{x} t$ whenever
$s \tran{} t$ and $\Prob((s,t)) = x$. 
Each $w \in \fpath(\M)$ determines a \emph{basic cylinder} $\run(\M,w)$
which consists of all runs that start with $w$.  To every $s \in M$
we associate the probability space $(\run(\M,s),\calF,\calP)$ where 
$\calF$ is the $\sigma$-field generated by all basic cylinders
$\run(\M,w)$ where $w$ starts with $s$, and 
\mbox{$\calP: \calF \rightarrow \Rset^{\geq 0}$}
 is the unique probability measure such that
$\calP(\run(\M,w)) = \Pi_{i=0}^{m-1} x_i$ where $w = s_0,\cdots,s_m$ and 
$s_i \tran{x_i} s_{i+1}$ for every $0 \leq i < m$ (if $m=0$, we put
$\calP(\run(\M,w)) = 1$). 

\begin{definition}
A \emph{stochastic game} is a tuple 
$G = (V,\gtran{},(V_{\Box},V_\Diamond,V_{\bigcirc}),\Prob)$ where
$V$ is a finite or countably infinite set of \emph{vertices}, ${\gtran{}}
\subseteq V \times V$ is a total \emph{edge relation},
$(V_{\Box},V_\Diamond,V_{\bigcirc})$ is a partition of $V$, and $\Prob$ is a
\emph{probability assignment} which to each $v \in V_{\bigcirc}$
assigns a positive probability distribution on the set of its outgoing
edges. We say that $G$ is
\emph{finitely branching} if for each $v \in V$ there are only
finitely many $u \in V$ such that $v \gtran{} u$. 
\end{definition}

A stochastic game $G$ is played by two players, $\Box$ and $\Diamond$,
who select the moves in the vertices of $V_{\Box}$ and $V_{\Diamond}$,
respectively. Let $\odot \in \{\Box,\Diamond\}$.
A \emph{strategy} for player~$\odot$ in $G$ is a 
function which to each $wv \in V^*V_{\odot}$ assigns a probability 
distribution on the set of outgoing edges of~$v$. The sets of all 
strategies for player~$\Box$ and player~$\Diamond$ in $G$ are denoted by 
$\Sigma_G$ and $\Pi_G$ (or just by $\Sigma$ and $\Pi$ if $G$ is understood), 
respectively. We say that a strategy $\tau$
is \emph{memoryless (M)} if $\tau(wv)$ depends just on the last
vertex $v$, and \emph{deterministic (D)} if $\tau(wv)$ is a Dirac
distribution for all $wv$. Strategies that are not
necessarily memoryless are called \emph{history-dependent (H)}, and
strategies that are not necessarily deterministic are called
\emph{randomized (R)}.
Thus, we  define the following four classes of strategies:
MD, MR, HD, and HR, where $\mbox{MD} \subseteq \mbox{HD} \subseteq
\mbox{HR}$ and $\mbox{MD} \subseteq \mbox{MR} \subseteq \mbox{HR}$,
but MR and HD are incomparable. 

Each pair of strategies $(\sigma,\pi) \in \Sigma \times \Pi$ 
determines a unique \emph{play} of the game $G$, which is a Markov chain 
$G(\sigma,\pi)$ where
$V^+$ is the set of states, and $wu \tran{x} wuu'$  iff $u \gtran{} u'$
and one of the following conditions holds:
\begin{itemize}
\item $u \in V_{\Box}$ and $\sigma(wu)$ assigns $x$ to 
  $u \gtran{} u'$, where $x > 0$;
\item $u \in V_{\Diamond}$ and $\pi(wu)$ assigns $x$ to 
  $u \gtran{} u'$, where $x > 0$; 
\item $u \in V_{\bigcirc}$ and $u \gtran{x} u'$.
\end{itemize}
Let $T \subseteq V$ be a set of \emph{target} vertices. For each pair 
of strategies $(\sigma,\pi) \in \Sigma \times \Pi$
and every $v \in V$, let $\calP_v^{\sigma,\pi}(\R(T,G))$ be the probability
of all $w \in \run(G(\sigma,\pi),v)$ such that $w$ visits some $u \in T$
(technically, this means that $w(i) \in V^*T$ for some $i \in \Nset_0$).
We write $\calP_v^{\sigma,\pi}(\R(T))$ instead of
$\calP_v^{\sigma,\pi}(\R(T,G))$ if $G$ is understood.

We say that a given $v \in V$ \emph{has a value} in $G$ if
$\sup_{\sigma \in \Sigma} \inf_{\pi \in \Pi} \calP_v^{\sigma,\pi}(\R(T)) 
 = \inf_{\pi \in \Pi} \sup_{\sigma \in \Sigma} \calP_v^{\sigma,\pi}(\R(T))$.
If $v$ has a value, then $\val(v,G)$ denotes the \emph{value of $v$} 
defined by this equality (we write just $\val(v)$ instead of 
$\val(v,G)$ if $G$ is understood).
Since the set of all runs that visit a vertex of $T$ is obviously Borel,
we can apply the powerful result of Martin 
\cite{Martin:Blackwell-determinacy} (see also Theorem~\ref{thm-value})
and conclude that \emph{every} \mbox{$v \in V$} has a value.

\begin{definition}
\label{def-optimal-strategy}
Let $\varepsilon \geq 0$ and $v \in V$. We say that 
\begin{itemize}
\item $\sigma \in \Sigma$ is \emph{$\varepsilon$-optimal} (or 
\emph{$\varepsilon$-optimal maximizing}) in $v$ if 
  $\calP_v^{\sigma,\pi}(\R(T)) \geq \val(v)-\varepsilon$ for all $\pi \in \Pi$;
\item $\pi \in \Pi$ is \emph{$\varepsilon$-optimal} (or 
  \emph{$\varepsilon$-optimal minimizing}) in $v$
  if $\calP_v^{\sigma,\pi}(\R(T)) \leq \val(v)+\varepsilon$ 
  for all $\sigma \in \Sigma$.
\end{itemize}
A $0$-optimal strategy is called \emph{optimal}.
A \emph{(quantitative) reachability objective} is a pair $(T,{\succ}\varrho)$
where $T \subseteq V$ and ${\succ}\varrho$ is a probability
constraint, i.e., ${\succ} \in \{{>},{\geq}\}$ and $\varrho \in [0,1]$. 
If $\varrho \in \{0,1\}$, then the objective is \emph{qualitative}.
We say that
\begin{itemize}
\item $\sigma \in \Sigma$ is \emph{$(T,{\succ}\varrho)$-winning} in $v$ 
  if $\calP_v^{\sigma,\pi}(\R(T)) \succ \varrho$ for all $\pi \in \Pi$;
\item $\pi \in \Pi$ is \emph{$(T,{\notsucc}\varrho)$-winning} in $v$
  if $\calP_v^{\sigma,\pi}(\R(T)) \notsucc \varrho$ for all $\sigma \in \Sigma$.
\end{itemize} 
The \emph{$(T,{\succ}\varrho)$-winning region} of player~$\Box$, denoted
by $\wsetB{T}{\succ}{\varrho}$, is the set of all $v \in V$ such that
player~$\Box$ has a $(T,{\succ}\varrho)$-winning strategy in~$v$. Similarly,
the \emph{$(T,{\notsucc}\varrho)$-winning region} of player~$\Diamond$, 
denoted by $\wsetD{T}{\notsucc}{\varrho}$, consists of all $v \in V$ such that
player~$\Diamond$ has a $(T,{\notsucc}\varrho)$-winning strategy in~$v$.

When writing probability constraints, we usually use ${<}1$, ${=}1$, and ${=}0$ 
instead of ${\not\geq} 1$,  ${\geq} 1$, and ${\not >}0$, respectively. 
\end{definition}

\section{The Determinacy of Stochastic Games with 
  Reachability Objectives}
\label{sec-determinacy}

In this section we show that finitely-branching stochastic games 
with quantitative/qualitative reachability objectives are \emph{determined}
in the sense that for every quantitative reachability objective
$(T,{\succ}\varrho)$, each vertex of the game belongs either to 
$\wsetB{T}{\succ}{\varrho}$ or to $\wsetD{T}{\notsucc}{\varrho}$
(see Definition~\ref{def-optimal-strategy}).
Let us note that
this result cannot be extended to general (infinitely-branching)
stochastic games. A counterexample is given in 
Figure~\ref{fig-not-determined}, where $T = \{t\}$ is the set of
target vertices. Observe that $\val(s) = 0$, $\val(u) = 1$,
and $\val(v) = 1/2$. It is easy to check that none of the 
two players has an optimal strategy in the vertices $v$, $u$, and $s$. 
Now suppose that player~$\Box$ has a $(T,{\succ}\frac{1}{2})$-winning 
strategy $\hat{\sigma}$ in $v$. Obviously, there is some fixed 
$\varepsilon >0$ such that for every $\pi \in \Pi$ we have that
$\calP_u^{\hat{\sigma},\pi}(\R(T)) = 1 - \varepsilon$. 
Further, player~$\Diamond$ has a 
strategy $\hat{\pi}$ which is $\frac{\varepsilon}{2}$-optimal in every
vertex. Hence, $\calP_v^{\hat{\sigma},\hat{\pi}}(\R(T)) < \frac{1}{2}$, which
is a contradiction. Similarly, one can show that there is 
no $(T,{\notsucc}\frac{1}{2})$-winning strategy for player~$\Diamond$
in~$v$.

\begin{figure}[t]
\centering
\newcounter{zzz}\newcounter{jj}\newcounter{jjj}
\begin{tikzpicture}[x=1.45cm,y=1.45cm,font=\footnotesize]
\foreach \x in {0,1,2,3,4,5,6}
  {\foreach \y in {-1,-2,-3,-4} 
      {\node (v\x\y) at (\x,\y) [ran] {};}
   \node (v\x 0) at (\x,0) [max] {};
  }
\foreach \y in {0,-1,-2,-3,-4}
  {\node (v7\y) at (7,\y) [ran,draw=none] {};}
\node at (0,-2) [ran,fill=lightgray] {$t$};
\node at (0,0)  [max,draw=none] {$u$};
\node (dd) at (0,-5) [min] {$s$};
\node (v) at (-1,-2.5) [ran] {$v$};
\setcounter{jj}{1}
\foreach \i in {0,1,2,3,4,5,6}
  {\setcounter{zzz}{\i}\addtocounter{zzz}{1}
   \setcounter{jj}{2*\arabic{jj}}
   \setcounter{jjj}{\arabic{jj}-1}
   \draw [transition] (v\i 0) to (v\arabic{zzz}0);
   \draw [transition] (v\i -4) to node[left] 
      {$\frac{\arabic{jjj}}{\arabic{jj}}$} (v\i -3);
   \draw [transition,rounded corners] 
      (v\i -4) -- +(.4,.35)  -- 
      +(.4,1.65) node[right] {$\frac{1}{\arabic{jj}}$} -- (v\i -2);
  }
\foreach \i in {1,2,3,4,5,6,7}
  {\setcounter{zzz}{\i}\addtocounter{zzz}{-1}
   \draw [transition] (v\i -1) to node[above] {$\frac{1}{2}$} (v\arabic{zzz}-1);
   \draw [transition] (v\i -2) to  (v\arabic{zzz}-2);
   \draw [transition] (v\i -3) to  (v\arabic{zzz}-3);
  }
\foreach \i in {1,2,3,4,5,6}
  {\draw [transition] (v\i -1) to node[left] {$\frac{1}{2}$} (v\i-2);
   \draw [transition] (v\i 0)  to  (v\i-1);
  } 
\draw [transition] (v00) to (v0-1);
\draw [transition,loop left] (v0-1) to (v0-1);
\draw [transition,loop left] (v0-2) to (v0-2);
\foreach \i in {1,2,3,4,5,6,7}
  {\draw [transition,rounded corners] (dd) -- +(\i,0) -- (v\i-4);}
\draw [transition] (dd) to (v0-4);
\foreach \i in {0,-1,-2,-3,-4}
  {\draw [thick,dotted] (v7\i) -- +(1,0);}
\draw [transition,rounded corners] (v) -- node[left] {$\frac{1}{2}$} +(0,2.5) 
   -- (v00);
\draw [transition,rounded corners] (v) -- node[left] {$\frac{1}{2}$} +(0,-2.5) 
   -- (dd);
\draw [transition,rounded corners] (v0-3) -- +(-.6,0) -- +(-.6,-1.5) -- (dd);
\end{tikzpicture}
\caption{A game which is not determined.}
\label{fig-not-determined}
\end{figure}

For the rest of this section, let us fix a 
game \mbox{$G
  = (V,\gtran{},(V_{\Box},V_\Diamond,V_{\bigcirc}),\Prob)$} and a set of
target vertices $T$. Also, for every $n \in \Nset_0$ and every pair of
strategies $(\sigma,\pi) \in \Sigma \times \Pi$, let
$\calP_v^{\sigma,\pi}(\R_n(T))$ be the probability of all runs $w \in
\run(G(\sigma,\pi),v)$ such that $w$ visits some $u \in T$ in at most $n$
transitions (clearly, $\calP_v^{\sigma,\pi}(\R(T)) = \lim_{n \rightarrow
  \infty}\calP_v^{\sigma,\pi}(\R_n(T))$).

To keep this paper self-contained, we start by giving an elementary 
proof of Martin's weak determinacy result (see Equation~\ref{eq-determinacy}) 
for the special case of games with reachability objectives (observe that
the game $G$ fixed above is not required to be finite or finitely-branching). 

\begin{theorem}
\label{thm-value}
Every $v \in V$ has a value. Moreover, if $G$ is finitely-branching,
then there is a MD strategy $\pi \in \Pi$ which is optimal minimizing in
every vertex.
\end{theorem}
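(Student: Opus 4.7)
The plan is to work via finite-horizon approximations. For each $n \in \Nset_0$ let $\val_n(v)$ denote the value of the game whose objective is $\R_n(T)$ (reach $T$ in at most $n$ transitions). Because only the first $n$ moves of any strategy can affect the event $\R_n(T)$, a backward induction on $n$ shows both that $\val_n(v)$ exists (so $\sup_\sigma\inf_\pi = \inf_\pi\sup_\sigma$ for the $n$-step game) and that $\val_n$ satisfies the natural Bellman recursion
\[
\val_n(v) = \begin{cases}
1 & v \in T,\\
0 & v \notin T\text{ and } n = 0,\\
\sup_{v \gtran{} v'}\val_{n-1}(v') & v \in V_\Box \setminus T,\ n \geq 1,\\
\inf_{v \gtran{} v'}\val_{n-1}(v') & v \in V_\Diamond \setminus T,\ n \geq 1,\\
\sum_{v \gtran{} v'}\Prob((v,v'))\,\val_{n-1}(v') & v \in V_\bigcirc \setminus T,\ n \geq 1.
\end{cases}
\]
Since $\R_n(T) \subseteq \R_{n+1}(T)$, the sequence $(\val_n(v))$ is monotone in $[0,1]$, so the limit $\val_\infty(v) := \lim_n \val_n(v)$ exists.

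To establish the first assertion, I combine the trivial inequality $\sup_\sigma\inf_\pi \calP_v^{\sigma,\pi}(\R(T)) \leq \inf_\pi\sup_\sigma \calP_v^{\sigma,\pi}(\R(T))$ with two matching bounds. The lower bound $\sup_\sigma\inf_\pi \calP_v^{\sigma,\pi}(\R(T)) \geq \val_\infty(v)$ follows because for each $n$ and $\varepsilon>0$ an $\varepsilon$-optimal strategy for $\Box$ in the $n$-step game already guarantees $\val_n(v)-\varepsilon$ for the full reachability objective. For the matching upper bound I would build, for each $\varepsilon>0$, an HD strategy $\pi^\varepsilon$ for $\Diamond$ which at the $k$-th transition from any $u \in V_\Diamond$ picks a successor $u'$ with $\val_\infty(u') \leq \inf_{u \gtran{} u''}\val_\infty(u'') + \varepsilon\cdot 2^{-k-1}$; the offset potential $W_k := \val_\infty(w_k) + \varepsilon(1 - 2^{-k})$ is then a bounded super-martingale along every play of $G(\sigma,\pi^\varepsilon)$, and a standard optional-stopping argument at the first visit to $T$ gives $\calP_v^{\sigma,\pi^\varepsilon}(\R(T)) \leq \val_\infty(v) + \varepsilon$. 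Letting $\varepsilon \downarrow 0$ shows that every $v$ has a value equal to $\val_\infty(v)$.

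For the second assertion, assume $G$ is finitely branching. The key structural step is to promote the Bellman recursion from $\val_n$ to $\val_\infty$: at $V_\bigcirc$-vertices this is monotone convergence applied to a finite sum, at $V_\Box$-vertices one obtains the promoted identity $\val_\infty(v) = \sup_{v \gtran{} v'}\val_\infty(v')$ already by a double-sup exchange (no branching assumption needed), and at $V_\Diamond$-vertices it is precisely finite branching that is required to commute the limit with $\inf$ and to obtain a successor $u^*$ attaining $\min_{u \gtran{} u'}\val_\infty(u') = \val_\infty(u)$. Let $\pi^* \in \Pi$ be the MD strategy sending each $u \in V_\Diamond$ to the Dirac on such a $u^*$. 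To show $\pi^*$ is optimal, fix $\sigma \in \Sigma$ and $v \in V$, and along a play of $G(\sigma,\pi^*)$ set $X_k := \val_\infty(w_k)$, where $w_k$ is the last vertex of the state after $k$ steps. The Bellman equation yields that $(X_k)$ is a $[0,1]$-valued super-martingale with respect to the play's natural filtration: at $V_\bigcirc$ and $V_\Diamond$ steps the conditional expectation of $X_{k+1}$ equals $X_k$ (by the expectation equation and by the choice of $\pi^*$ respectively), while at $V_\Box$ steps $\val_\infty(w_k) \geq \val_\infty(w')$ for every $w'$ picked by $\sigma$ with positive probability. Letting $\tau$ be the first time a vertex of $T$ is visited, optional stopping yields $E[X_{\tau\wedge n}] \leq \val_\infty(v)$ for every $n$, and since $X_\tau = 1$ on $\{\tau \leq n\}$ and $X \geq 0$ this gives $\calP_v^{\sigma,\pi^*}(\tau \leq n) \leq \val_\infty(v)$; letting $n \to \infty$ concludes $\calP_v^{\sigma,\pi^*}(\R(T)) \leq \val(v)$. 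The main obstacle is the Bellman promotion at $V_\Diamond$-vertices, which is the only place where finite branching is genuinely used and without which no MD optimal strategy for $\Diamond$ can be expected.
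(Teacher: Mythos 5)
Your second assertion (finitely branching case) is handled correctly: promoting the Bellman recursion to $\val_\infty=\lim_n\val_n$ at $V_\Diamond$-vertices via finite branching, taking a minimizing successor, and closing with optional stopping is a sound (and slightly slicker) packaging of what the paper does by an explicit induction on the horizon. The problem is the first assertion, which the theorem claims for \emph{arbitrary} (possibly infinitely branching) $G$. There your candidate value $\val_\infty$ is simply the wrong function, and the upper-bound argument breaks at exactly the point you flag as delicate in part two. Concretely: for $u\in V_\Diamond$ one always has $\val_{n+1}(u)=\inf_{u\gtran{}u'}\val_n(u')\leq\inf_{u\gtran{}u'}\val_\infty(u')$, hence $\val_\infty(u)\leq\inf_{u\gtran{}u'}\val_\infty(u')$, but the reverse inequality can fail without finite branching. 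Your supermartingale claim for $W_k$ at a $V_\Diamond$-step needs $\val_\infty(u')\leq\val_\infty(u)+\varepsilon 2^{-k-1}$ for the chosen successor, and choosing $u'$ within $\varepsilon 2^{-k-1}$ of $\inf_{u''}\val_\infty(u'')$ only gives this if $\inf_{u''}\val_\infty(u'')\leq\val_\infty(u)$ — the missing direction. A counterexample: let $u\in V_\Diamond$ have successors $u_1,u_2,\ldots$ where each $u_m$ starts a deterministic path of length $m$ into $T$. Then $\val_n(u)=0$ for every $n$ (player $\Diamond$ picks $u_m$ with $m\geq n$), so $\val_\infty(u)=0$, yet every choice of player $\Diamond$ reaches $T$ with probability $1$, so $\val(u)=1$. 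The finite-horizon values do not converge to the value, and no strategy built from $\val_\infty$ can certify the upper bound.

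The paper circumvents this by working with the \emph{least fixed point} $\mu\V$ of the Bellman operator on the complete lattice $V\to[0,1]$ (Knaster--Tarski), rather than with the $\omega$-limit of finite-horizon iterates; these coincide for finitely branching games (that is the content of Lemma~\ref{lem-max-bound}) but not in general, since $\V$ is not continuous at infinitely branching $V_\Diamond$-vertices. Because $\mu\V$ satisfies the fixed-point equation exactly at every vertex, the minimizer can always pick a successor within $\delta$ of $\mu\V(u)=\inf_{u\gtran{}u'}\mu\V(u')$, which is what makes the $\varepsilon$-optimal minimizing strategy (your $\pi^\varepsilon$, the paper's $\pi_j$) work; the matching lower bound $\mu\V\leq\sup_\sigma\inf_\pi\calP^{\sigma,\pi}_v(\R(T))$ is then obtained not from finite-horizon approximations but by showing that $v\mapsto\sup_\sigma\inf_\pi\calP^{\sigma,\pi}_v(\R(T))$ is itself a fixed point of $\V$, hence dominates the least one. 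To repair your proof you would need to replace $\val_\infty$ by $\mu\V$ throughout the first part and supply that fixed-point argument; as written, the first assertion is only established for finitely branching $G$.
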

\begin{proof}
Let $(V\rightarrow [0,1], {\sqsubseteq})$ be the complete lattice of all
functions $f : V\rightarrow [0,1]$ with component-wise ordering. 
We show that the tuple of all values is the least fixed-point of the
following (Bellman) functional 
$\V:(V\rightarrow [0,1])\rightarrow (V\rightarrow [0,1])$ defined by
\begin{equation*}
\V(f)(v) \quad = \quad \begin{cases}
         1 & \text{ if } v\in T \\
         \sup\{f(u) \mid v \gtran{} u\} & \text{ if } v\in V_\Box\setminus T \\
         \inf\{f(u) \mid v \gtran{} u\} & \text{ if } v\in V_\Diamond\setminus T \\
         \sum_{v \gtran{x}u} x \cdot f(u) & \text{ if } v\in V_\bigcirc\setminus T 
         \end{cases}
       \end{equation*}
Since $\V$ is monotone, by Knaster-Tarski theorem~\cite{KT:fixed-complete}
there is the least fixed-point $\mu\V$ of $\V$.
Let  $\calA : V\rightarrow [0,1]$ be a function defined by
$\calA(v) = \sup_{\sigma \in \Sigma} \inf_{\pi \in \Pi} \calP_v^{\sigma,\pi}(\R(T))$. 
We prove the following:
\begin{enumerate}
\item[(i)] $\calA$ is a fixed point of $\V$.

\item[(ii)] For every $\varepsilon>0$ there is 
$\pi\in \Pi$ such that for every $v\in V$ we have that
\begin{equation}
\sup_{\sigma \in \Sigma} 
\calP_v^{\sigma,\pi}(\R(T)) \leq 
\mu\mathcal{V}(v)+\varepsilon
\end{equation}
\end{enumerate}
Observe that~(i) implies $\mu\V(v) \leq  
  \sup_{\sigma \in \Sigma} \inf_{\pi \in \Pi} \calP_v^{\sigma,\pi}(\R(T))$.
Obviously, 
\begin{equation*}
  \sup_{\sigma \in \Sigma} \inf_{\pi \in \Pi} \calP_v^{\sigma,\pi}(\R(T)) 
  \quad \leq \quad
  \inf_{\pi \in \Pi} \sup_{\sigma \in \Sigma} \calP_v^{\sigma,\pi}(\R(T)) 
\end{equation*}
and due to~(ii) we further have that 
$\inf_{\pi \in \Pi} \sup_{\sigma \in \Sigma} \calP_v^{\sigma,\pi}(\R(T))\leq \mu\V(v)$.
Hence, (i) and (ii) together imply that $\mu\V(v)$ is the value of~$v$
for every $v\in V$. It remains to prove (i) and~(ii).
\smallskip

Ad (i). Let $v\in V$. If $v\in T$, then clearly
$\mathcal{A}(v)=1=\mathcal{V}(\mathcal{A})(v)$.  If $v \not\in T$, we
can further distinguish three cases.
\begin{itemize}
\item[(a)] $v\in V_{\Box}$. Then
  \[
  \begin{array}{lclcl}
    \mathcal{V}(\mathcal{A})(v) 
    & = & \sup\{\mathcal{A}(u) \mid v \gtran{} u\}\\
    & = &
    \sup\{\sup_{\sigma \in \Sigma} \inf_{\pi \in \Pi} \calP_u^{\sigma,\pi}(\R(T)) \mid v \gtran{} u\} \\
    & = &
    \sup_{\sigma \in \Sigma} \inf_{\pi \in \Pi} \calP_v^{\sigma,\pi}(\R(T)) \\
    & = & \mathcal{A}(v) 
  \end{array}
  \]
\item[(b)] $v\in V_{\Diamond}$. Let us denote by $\dist(v)$ the set of
  all positive probability distributions on the set of outgoing
  edges of~$v$. Then
  \[
  \begin{array}{lll}
    \mathcal{V}(\mathcal{A})(v) 
    & = &  \inf\{\mathcal{A}(u) \mid v \gtran{} u\} \\
    & = & 
    \inf\{\sup_{\sigma \in \Sigma} \inf_{\pi \in \Pi} \calP_u^{\sigma,\pi}(\R(T)) \mid v \gtran{} u\} \\
    & = & 
    \inf_{\eta\in \dist(v)}\sum_{v \gtran{} u} \eta(v\gtran{} u)\cdot \sup_{\sigma \in \Sigma} \inf_{\pi \in 
      \Pi} \calP_u^{\sigma,\pi}(\R(T)) \\
    & =^{(*)} & 
    \sup_{\sigma \in \Sigma} \inf_{\eta\in \dist(v)}\sum_{v \gtran{} u} \eta(v\gtran{} u)\cdot 
    \inf_{\pi \in \Pi} 
    \calP_u^{\sigma,\pi}(\R(T)) \\
    & = & 
    \sup_{\sigma \in \Sigma} \inf_{\pi \in \Pi} 
    \calP_v^{\sigma,\pi}(\R(T)) \\
    & = & 
    \mathcal{A}(v)
  \end{array}
  \]
  In the equality~$(*)$,
  the `$\geq$' direction is easy, and the `$\leq$' direction can be
  justified as follows: For every $\delta>0$, there is a strategy
  $\bar{\sigma}\in \Sigma$ such that for every $u\in V$ we have that
  \[
  \sup_{\sigma \in \Sigma} \inf_{\pi \in \Pi}
  \calP_u^{\sigma,\pi}(\R(T)) \quad \leq \quad \inf_{\pi \in \Pi}
  \calP_u^{\bar{\sigma},\pi}(\R(T))+\delta
  \]
  This means that, for every $\eta\in \dist(v)$
  \[
  \sum_{v \gtran{} u} \eta(v\gtran{} u)\cdot \sup_{\sigma \in \Sigma}
  \inf_{\pi \in \Pi} \calP_u^{\sigma,\pi}(\R(T))\leq \sum_{v \gtran{}
    u} \eta(v\gtran{} u)\cdot \inf_{\pi \in \Pi}
  \calP_u^{\bar{\sigma},\pi}(\R(T))+\delta
  \]
  and thus  {\small
    \[
    \inf_{\eta\in \dist(v)}\sum_{v \gtran{} u} \eta(v\gtran{} u)\cdot
    \sup_{\sigma \in \Sigma} \inf_{\pi \in \Pi}
    \calP_u^{\sigma,\pi}(\R(T))\leq \inf_{\eta\in \dist(v)}\sum_{v
      \gtran{} u} \eta(v\gtran{} u)\cdot \inf_{\pi \in \Pi}
    \calP_u^{\bar{\sigma},\pi}(\R(T))+\delta
    \]} which implies $(*)$ %
  because $\delta$ was chosen arbitrarily.
\item[(c)] $v\in V_{\bigcirc}$. Then
  \[
  \begin{array}{lll}
    \mathcal{V}(\mathcal{A})(w) 
    & = &  \sum_{v \gtran{x}u} x\cdot \mathcal{A}(u) \\
    & = & 
    \sum_{v \gtran{x}u} x\cdot \sup_{\sigma \in \Sigma} \inf_{\pi \in \Pi} \calP_u^{\sigma,\pi}(\R(T)) \\
    & =^{(**)} & 
    \sup_{\sigma \in \Sigma}
    \inf_{\pi \in \Pi} \sum_{v \gtran{x}u} x\cdot \calP_u^{\sigma,\pi}(\R(T)) \\
    & = &
    \sup_{\sigma \in \Sigma}
    \inf_{\pi \in \Pi} \calP_v^{\sigma,\pi}(\R(T)) \\
    & = & \mathcal{A}(v)
  \end{array}
  \]
  Note that the equality~$(**)$ %
  can be justified similarly
  as~$(*)$ %
  above.
\end{itemize}

Ad (ii).  Let us fix some $\varepsilon>0$. For every $j\in\Nset_0$, we
define a strategy $\pi_j$ as follows: For a given $wv\in V^*
V_{\Diamond}$, we choose (some) edge $u\in V$ such that $\mu\V(u)\leq
\mu\V(v)+\frac{\eps}{2^{|w|+j+1}}$ and put $\pi_j(wv)(v\gtran{}
u)=1$. Note that such an edge must exist, and if $G$ is
finitely-branching, then there is even an edge $v \gtran{} u$ such
that $\mu\V(u) = \mu\V(v)$ (i.e., when $G$ is finitely-branching, we
can also consider the case when $\varepsilon = 0$).  In the sequel we
also write $\pi$ instead of $\pi_0$.  We prove that for all $\sigma\in
\Sigma$, $v\in V$, and $i\geq 0$ we have that
\[
\calP_v^{\sigma,\pi_j}(\R_i(T)) \quad \leq \quad
\mu\mathcal{V}(v)+\sum_{k=j+1}^{j+i} \frac{\varepsilon}{2^k}
\]
In particular, for $j=0$ we get
\[
  \calP_v^{\sigma,\pi}(\R_i(T)) \quad \leq \quad 
  \mu\mathcal{V}(v)+\sum_{k=1}^{i} \frac{\varepsilon}{2^k}
\]
and hence 
\[
\sup_{\sigma\in \Sigma}\calP_v^{\sigma,\pi}(\R(T)) \quad = \quad
\sup_{\sigma\in \Sigma}\lim_{i\rightarrow \infty}
\calP_v^{\sigma,\pi}(\R_i(T)) \quad \leq \quad  \mu\mathcal{V}(v)+\eps
\]
If $v\in T$, then 
$\calP_v^{\sigma,\pi_j}(\R_i(T))=1=\mu\mathcal{V}(v)$ for all $j\in\Nset_0$.
If $v\not \in T$, we proceed by induction on~$i$. If $i = 0$, then
$\calP_v^{\sigma,\pi_j}(\R_0(T))=0\leq \mu\V(v)$ for all $j\in\Nset_0$.
Now assume that $i\geq 1$. For every $\sigma\in\Sigma$, we use $\sigma_v$ 
to denote the strategy such that $\sigma_v(wu)=\sigma(vwu)$ for all
$wu\in\V^*\V_\Box$. We distinguish three cases. 
\begin{itemize}
\item[(a)]
  $v\in V_{\Box}$. Then
  \[
  \begin{array}{lcl}
    \calP_v^{\sigma,\pi_j}(\R_i(T))
    & = & 
    \sum_{v\gtran{} u}\sigma(v)(v\gtran{} u)\cdot
    \calP_u^{\sigma_v,\pi_{j+1}}(\R_{i-1}(T)) \\
    & \leq &
    \sum_{v\gtran{} u}\sigma(v)(v\gtran{} u)\cdot
    \left(\mu\mathcal{V}(u)+\sum_{k=j+2}^{j+i} \frac{\eps}{2^{k}}\right) \\
    & = &
    \left(\sum_{v\gtran{} u}\sigma(v)(v\gtran{} u)\cdot
      \mu\mathcal{V}(u)\right)+\sum_{k=j+2}^{j+i} \frac{\eps}{2^{k}} \\
    & \leq &
    \mu\V(v)+\sum_{k=j+2}^{j+i} \frac{\eps}{2^{k}}
  \end{array}
  \]
\item[(b)] $v\in V_{\Diamond}$. Then
  \[
  \begin{array}{lcl}
    \calP_v^{\sigma,\pi_j}(\R_i(T))
    & = & 
    \sum_{v\gtran{} u}\pi(v)(v\gtran{} u)\cdot
    \calP_u^{\sigma_v,\pi_{j+1}}(\R_{i-1}(T)) \\
    & \leq &
    \sum_{v\gtran{} u}\pi(v)(v\gtran{} u)\cdot
    \left(\mu\mathcal{V}(u)+\sum_{k=j+2}^{j+i} \frac{\eps}{2^{k}}\right) \\
    & = &
    \left(\sum_{v\gtran{} u}\pi(v)(v\gtran{} u)\cdot
      \mu\mathcal{V}(u)\right)+\sum_{k=j+2}^{j+i} \frac{\eps}{2^{k}} \\
    & \leq &
    \mu\V(v)+ \frac{\eps}{2^{j+1}} + \sum_{k=j+2}^{j+i} \frac{\eps}{2^{k}} \\
    & \leq &
    \mu\V(v)+ \sum_{k=j+1}^{j+i} \frac{\eps}{2^{k}}
  \end{array}
  \]           %
\item[(c)] $v\in V_{\bigcirc}$. Then
  \[
  \begin{array}{lcl}
    \calP_v^{\sigma,\pi_j}(\R_i(T))
    & = & 
    \sum_{v\gtran{x} u}x\cdot
    \calP_u^{\sigma_v,\pi_j}(\R_{i-1}(T)) \\
    & \leq &
    \sum_{v\gtran{x} u}x\cdot
    \left(\mu\mathcal{V}(u)+\sum_{k=j+2}^{j+i} \frac{\eps}{2^{k}}\right) \\
    & = &
    \left(\sum_{v\gtran{x} u}x\cdot
      \mu\mathcal{V}(u)\right)+\sum_{k=j+2}^{j+i} \frac{\eps}{2^{k}} \\
    & = &
    \mu\V(v)+\sum_{k=j+2}^{j+i} \frac{\eps}{2^{k}}
  \end{array}
  \]          
\end{itemize}

If $G$ is finitely branching,
then an optimal minimizing strategy $\pi$ is obtained by considering 
$\varepsilon=0$ in the above proof of~(ii).
\end{proof}

\begin{lemma}
\label{lem-max-bound} 
If $G$ is finitely-branching, then for every $v \in V$ we have that
\[ 
  \forall \varepsilon{>}0 \ \ \exists \sigma \in \Sigma \ \
  \exists n \in \Nset \ \  
  \forall \pi \in \Pi \ : \ \calP_v^{\sigma,\pi}(\R_n(T)) 
  > \val(v) - \varepsilon
\]
\end{lemma}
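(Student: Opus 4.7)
The plan is to reduce the problem to finite-horizon sub-games by value iteration. Introduce the finite-horizon values $W_n(v) = \sup_{\sigma \in \Sigma}\inf_{\pi \in \Pi}\calP_v^{\sigma,\pi}(\R_n(T))$. Given (i) $W_n(v) \nearrow \val(v)$ as $n \to \infty$, and (ii) for each $n$ a strategy $\sigma_n \in \Sigma$ with $\calP_v^{\sigma_n,\pi}(\R_n(T)) \geq W_n(v)$ for every $\pi \in \Pi$, the lemma is immediate: given $\varepsilon > 0$, pick $n$ with $W_n(v) > \val(v) - \varepsilon$ and take $\sigma = \sigma_n$.

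For (i), a direct case analysis on $v$ (splitting $\R_{n+1}(T)$ according to the first transition, observing that at a $V_\Box$-vertex a randomized first choice cannot beat the best pure successor, and that at a $V_\Diamond$-vertex the opponent can concentrate all mass on any single successor) yields the value-iteration recurrence $W_0 = \mathbf{1}_T$ and $W_{n+1} = \V(W_n)$, where $\V$ is the Bellman operator from the proof of Theorem~\ref{thm-value}. Hence $W_n = \V^{n+1}(\mathbf{0})$. Since Theorem~\ref{thm-value} already identifies $\val$ with $\mu\V$, the convergence $W_n \nearrow \val$ reduces to Kleene's fixed-point theorem, which applies as soon as $\V$ is $\omega$-continuous. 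This is where finite branching of $G$ is genuinely used: suprema commute with pointwise monotone limits unconditionally (taking care of $V_\Box$-vertices), monotone convergence handles the countable convex combinations at $V_\bigcirc$-vertices, and infima over the \emph{finitely many} successors of a $V_\Diamond$-vertex also commute with monotone limits. For an infinitely-branching $V_\Diamond$-vertex this last commutation can fail, which is consistent with the fact that Figure~\ref{fig-not-determined} already violates the conclusion of the lemma.

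For (ii), since $G$ is finitely branching, the sub-game consisting of all plays of length at most $n$ starting from $v$ contains only finitely many positions, and is a finite stochastic game. Standard backward induction on this sub-game (choosing at each $V_\Box$-node a successor attaining the maximum of its children's backward-induction values) produces a deterministic strategy $\sigma_n$ whose action depends only on the current vertex and the number of transitions already performed. The classical minimax equality for finite stochastic games then gives $\calP_v^{\sigma_n,\pi}(\R_n(T)) \geq W_n(v)$ against every $\pi \in \Pi$; extending $\sigma_n$ arbitrarily to plays of length greater than $n$ yields the required element of $\Sigma$.

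The main obstacle is the $\omega$-continuity step in~(i): it is the sole point where finite branching is essential, and it corresponds exactly to the feature that the counterexample of Figure~\ref{fig-not-determined} lacks. The remaining pieces, namely verifying the recurrence $W_{n+1} = \V(W_n)$ and invoking finite-game determinacy in~(ii), are routine bookkeeping.
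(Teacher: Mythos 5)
Your proposal is correct and follows essentially the same route as the paper: the paper likewise introduces the finite-horizon values $\V_i$ (defined by the recurrence $\V_{i+1}=\V(\V_i)$ and identified with $\max_\sigma\min_\pi\calP_v^{\sigma,\pi}(\R_i(T))$ by a routine induction), obtains the strategies $\sigma_n$ by finite backward induction, and shows $\V_\infty=\lim_i\V_i$ is a fixed point of the Bellman operator — which is exactly your $\omega$-continuity/Kleene step, with finite branching used in precisely the same place, namely commuting the inf over the finitely many successors of a $V_\Diamond$-vertex with the monotone limit.
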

\begin{proof}
For all $v \in V$ and $i \in \Nset_0$, we use $\V_i(v)$ to denote the value 
of $v$ in $G$ with ``reachability in at most $i$-steps'' objective.
More precisely, we put $\V_{i}(v)=1$ for all $v \in T$ and $i\in\Nset_0$.
If $v \not\in T$, we define $\V_i(v)$ inductively as follows: $\V_0(v)=0$, 
and $\V_{i+1}(v)$ is equal either to $\max\{\V_i(v) \mid v \gtran{} u\}$,
$\min\{\V_i(u) \mid v \gtran{} u\}$, or $\sum_{v \gtran{x}u} x \cdot \V_i(u)$,
depending on whether $v \in V_\Box$, $v \in V_\Diamond$, or 
$v \in V_\bigcirc$, respectively. 

A straightforward induction on $i$ reveals that
\[
   \V_i(v) \quad = \quad 
   \max_{\sigma \in \Sigma} \min_{\pi \in \Pi} \calP_v^{\sigma,\pi}(\R_i(T)) 
\]
Also observe that, for every $i \in \Nset_0$, there is a fixed HD strategy
$\sigma_i \in \Sigma$ such that
for every $\pi \in \Pi$ and every $v\in V$ we have that
$\V_i(v) \leq \calP_v^{\sigma_i,\pi}(\R_i(T))$.
Further, put $\V_{\infty}(v) = \lim_{i \rightarrow \infty} \V_i(v)$
(note that the limit exists because the sequence
$\V_0(v),\V_1(v),\ldots$ is non-decreasing and bounded). We show that
$\V_{\infty}$ is a fixed point of the functional $\V$ defined in the
proof of Theorem~\ref{thm-value}. Hence, $\mu\V(v)\leq
\V_{\infty}(v)$ for every $v\in V$, which implies that for every
$\varepsilon>0$ there is $n\in \Nset$ such that for every $\pi\in \Pi$
we have that
\[
\calP_v^{\sigma_n,\pi}(\R_n(T))\geq V_n(v)>\mu\V(v)-\varepsilon=
\val(v) - \varepsilon
\]
So, it remains to prove that $\V(\V_{\infty})=\V_{\infty}$. We distinguish
three cases.
\begin{itemize}
\item[(a)] $v\in V_{\Box}$. Then 
\[
\V(\V_{\infty})(v)  =  \max_{v\gtran{} u} \lim_{i\rightarrow\infty} \V_i(u) 
                    =  \lim_{i\rightarrow \infty} \max_{v\gtran{} u} \V_i(u) 
                    =  \lim_{i\rightarrow \infty} \V_{i+1}(v)
                    =  \V_{\infty}(v)
\]
In the second equality, the `$\leq$' direction is easy, and the `$\geq$'
direction can be justified as follows: For every $u\in V$, the 
sequence $\V_1(u),\V_2(u),\ldots$ is non-decreasing.
Hence, for all $i\in \Nset$ and $u\in V$ we have that
$\lim_{j\rightarrow \infty} \V_j(u)\geq \V_i(u)$ and thus
$\max_{v\gtran{} u} \lim_{j\rightarrow \infty} \V_j(u)\geq \max_{v\gtran{} u}\V_i(u)$ 
which implies the `$\geq$' direction.
\item[(b)] $v\in V_{\Diamond}$. Then 
\[
\V(\V_{\infty})(v)  =  \min_{v\gtran{} u} \lim_{i\rightarrow \infty} \V_i(u) 
                    =  \lim_{i\rightarrow \infty} \min_{v\gtran{} u} \V_i(u) 
                    =  \lim_{i\rightarrow \infty} \V_{i+1}(v)
                    =  \V_{\infty}(v)
\]
In the second equality, the `$\geq$' direction is easy, and the
`$\leq$' direction can be justified as follows: For every $\delta>0$
there is $i\in \Nset$ such that for every $v\gtran{} u$ we have 
that $\lim_{j\rightarrow \infty} V_{j}(u)-\delta\leq
V_i(u)$ (remember that $G$ is finitely-branching).
It follows that
$\min_{v\gtran{} u}\lim_{j\rightarrow{} \infty} V_{j}(u)-\delta\leq \min_{v\gtran{} u} V_i(u)$ and thus
$\min_{v\gtran{} u}\lim_{j\rightarrow{} \infty} V_{j}(u)-\delta\leq 
\lim_{i\rightarrow{} \infty}\min_{v\gtran{} u} V_i(u)$ which implies the 
`$\leq$' direction because $\delta$ was chosen
arbitrarily.

\item[(c)] $v\in V_{\bigcirc}$. Then 
\[
\V(\V_{\infty})(v)  =  \sum_{v\gtran{x} u} x\cdot \lim_{i\rightarrow{} \infty} \V_i(u) 
                    =  \lim_{i\rightarrow \infty} \sum_{v\gtran{x} u} x\cdot \V_i(u) 
                    =  \lim_{i\rightarrow \infty} \V_{i+1}(v)
                    =  \V_{\infty}(v)
\]
by linearity of the limit.
\hfill\qed
\end{itemize}
\renewcommand{\qed}{}
\end{proof}

\noindent
Now we can state and prove the promised determinacy theorem. 

\begin{theorem}[Determinacy]
\label{thm-determinacy}
Assume that $G$ is finitely branching. Let 
$(T,{\succ}\varrho)$ be a (quantitative) 
reachability objective. Then $V = 
\wsetB{T}{\succ}{\varrho} \uplus \wsetD{T}{\notsucc}{\varrho}$.
\end{theorem}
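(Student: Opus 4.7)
The plan is to fix $v \in V$, write $\alpha = \val(v)$ (which exists by Theorem~\ref{thm-value}), and proceed by case analysis on how $\varrho$ compares with $\alpha$. Disjointness of $\wsetB{T}{\succ}{\varrho}$ and $\wsetD{T}{\notsucc}{\varrho}$ is immediate: any pair $(\sigma,\pi)$ of witnessing strategies would force $\calP_v^{\sigma,\pi}(\R(T))$ to satisfy the two contradictory constraints ${\succ}\varrho$ and ${\notsucc}\varrho$ at once, so only the covering statement $V \subseteq \wsetB{T}{\succ}{\varrho} \cup \wsetD{T}{\notsucc}{\varrho}$ requires work.

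The ``generic'' cases are handled by the two tools already assembled. If $\varrho < \alpha$, choosing $\varepsilon = (\alpha - \varrho)/2$ and applying Lemma~\ref{lem-max-bound} produces $\sigma \in \Sigma$ and $n \in \Nset$ with $\calP_v^{\sigma,\pi}(\R(T)) \geq \calP_v^{\sigma,\pi}(\R_n(T)) > \alpha - \varepsilon > \varrho$ for every $\pi$, so $\sigma$ is $(T,{>}\varrho)$-winning, hence a fortiori $(T,{\geq}\varrho)$-winning. If $\varrho > \alpha$, the MD minimizer $\pi^*$ from Theorem~\ref{thm-value} gives $\calP_v^{\sigma,\pi^*}(\R(T)) \leq \alpha < \varrho$ uniformly in $\sigma$, so $\pi^*$ is $(T,{<}\varrho)$-winning. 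The boundary $\varrho = \alpha$ with ${\succ} = {>}$ is analogous: $\pi^*$ delivers $\leq \alpha = \varrho$, i.e., $\not{>}\varrho$. The trivial sub-case $\varrho = \alpha = 0$ with ${\succ} = {\geq}$ is won by $\Box$ since $\calP \geq 0$ always.

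The one remaining case, $\varrho = \alpha > 0$ with ${\succ} = {\geq}$, is where I expect the main obstacle to lie: the $\varepsilon$-optimal strategies of Lemma~\ref{lem-max-bound} approach but need not attain $\alpha$. The proposed dichotomy is based on whether the supremum $\sup_{\sigma \in \Sigma} \calP_v^{\sigma,\pi^*}(\R(T))$, which equals $\alpha$ by optimality of $\pi^*$ and weak determinacy, is attained. If it is \emph{not} attained, then every $\sigma$ gives $\calP_v^{\sigma,\pi^*}(\R(T)) < \alpha$, so $\pi^*$ itself is $(T,{<}\alpha)$-winning and places $v$ in $\wsetD{T}{<}{\alpha}$.

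The attained sub-case is the hard part. Suppose $\sigma^*$ satisfies $\calP_v^{\sigma^*,\pi^*}(\R(T)) = \alpha$. The plan is first to extract a \emph{value-preserving} structure: a recursive unfolding along $(\sigma^*,\pi^*)$-reachable histories, using $\calP_w^{\sigma^*,\pi^*}(\R(T)) \leq \val(w)$ (from optimality of $\pi^*$) and the observation that any strict inequality at a reachable $w$ would drag $\calP_v^{\sigma^*,\pi^*}$ below $\alpha$, forces $\calP_w^{\sigma^*,\pi^*}(\R(T)) = \val(w)$ at every reachable $w$ and $\sigma^*$ at each reachable $\Box$-vertex to assign positive mass only to $\val$-maximal successors. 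Using finite branching, one then extends $\sigma^*$ to a strategy $\hat{\sigma}$ that is value-preserving at every $\Box$-vertex. Against an arbitrary $\pi$, the process $X_t = \val(v_t)$ in $G(\hat{\sigma},\pi)$ becomes a bounded submartingale: preserved at $V_\Box$ (value-preservation) and $V_\bigcirc$ (Bellman equation) and non-decreasing at $V_\Diamond$ (the Bellman min forces $\val(u) \geq \val(v_t)$ for any successor $u$). The crux, which I anticipate to be the real difficulty, is to deduce from the a.s.\ convergence of $X_t$ that the limit $X_\infty$ equals $\mathbf{1}_A$, where $A$ is the event of visiting $T$: the pinning $X_\infty = 1$ on $A$ follows because $X$ is trapped at $1$ from the hitting time of $T$ onward, but securing $X_\infty = 0$ on $A^c$ requires the extension $\hat{\sigma}$ to be made with some care (a naive value-preserving extension can, e.g., loop forever at a $\pi^*$-unreachable $\Box$-vertex carrying a self-loop of the same value), together with a reachability-specific dichotomy exploiting finite branching. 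Granted this, the submartingale bound $E[X_\infty] \geq X_0 = \alpha$ yields $\calP_v^{\hat{\sigma},\pi}(\R(T)) \geq \alpha$, as required.
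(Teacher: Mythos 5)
Your treatment of the generic cases ($\varrho \neq \val(v)$, the strict constraint at $\varrho = \val(v)$, and the trivial case $\varrho = \val(v) = 0$) matches the paper and is fine. The problem is the hard case $\varrho = \val(v) = \alpha > 0$ with ${\succ} = {\geq}$, where your argument has a structural flaw before one even reaches the crux you flag. The dichotomy you propose --- whether $\sup_{\sigma} \calP_v^{\sigma,\pi^*}(\R(T))$ is attained for the fixed MD optimal minimizer $\pi^*$ --- does not separate the two winning regions. Take $v \in V_\Diamond$ with two successors $u_1,u_2$, both of value $1$: from $u_1$ player~$\Box$ reaches $T$ surely, while $u_2$ is the root of a purely $\Box$/random ``gambler'' gadget ($a_i \gtran{} a_{i+1}$ or $a_i \gtran{} b_i$, where $b_i$ reaches $T$ with probability $1-2^{-i}$ and otherwise enters a dead sink), in which $\val(a_i)=1$ but every single strategy of $\Box$ reaches $T$ with probability $\sum_i p_i(1-2^{-i}) < 1$. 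Both edges out of $v$ preserve the value, so the construction of Theorem~\ref{thm-value} may return a $\pi^*$ that selects $u_1$; then the supremum against $\pi^*$ is attained at $v$, yet $v \in \wsetD{T}{<}{1}$, because the $\Diamond$-strategy that moves to $u_2$ is $(T,{<}1)$-winning. In your ``attained'' branch you would therefore be trying to prove a false statement. The paper's dichotomy is instead on whether player~$\Diamond$ has a $(T,{<}\varrho)$-winning strategy at all, i.e., on the negation of its ``good vertex'' condition $\forall\pi\,\exists\sigma:\calP_u^{\sigma,\pi}(\R(T))\geq\val(u)$.

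Second, even with the correct dichotomy, the step you yourself identify as the crux is exactly where the theorem lives, and value preservation plus a submartingale argument cannot deliver it: the gambler gadget shows that a value-preserving $\Box$-strategy can hold $X_t \equiv 1$ forever while never reaching $T$, so $X_\infty = 1$ on runs outside $\R(T)$ and the bound $E[X_\infty]\geq\alpha$ tells you nothing. The paper replaces this with two genuinely different ingredients: (a) it restricts to the subgame $\bar{G}$ of good vertices and value-optimal edges and proves, via an explicitly constructed punishing strategy $\pi'$, that this restriction does not change values (its Claim~(a)); and (b) inside $\bar{G}$ it builds the winning strategy by concatenating, in stages of bounded length, the finite-horizon strategies of Lemma~\ref{lem-max-bound} that each secure more than half of the current value, which forces the probability of reaching $T$ above $(1-2^{-i})\varrho$ after $m_i$ steps. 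Your sketch contains neither the goodness restriction nor any mechanism forcing actual progress toward $T$, so the gap is not a technicality to be ``granted'' but the substance of the proof.
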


\begin{proof}
  First, note that we may safely assume that for each $t \in T$
  there is only one out-going edge $t \gtran{} t$ (this assumption
  simplifies some of the claims presented below).
  Let $v \in V$. If $\varrho > \val(v)$, then $v \in
  \wsetB{T}{\succ}{\varrho}$ because player~$\Box$ has an
  \mbox{$\varepsilon$-optimal} strategy for an arbitrarily small
  $\varepsilon > 0$ (see Theorem~\ref{thm-value}). Similarly, if
  $\varrho < \val(v)$, then $v \in \wsetD{T}{\notsucc}{\varrho}$.  Now
  assume that $\varrho = \val(v)$. Obviously, it suffices to show that
  if player~$\Diamond$ does \emph{not} have a
  $(T,{\notsucc}\varrho)$-winning strategy in~$v$, then player~$\Box$
  has a $(T,{\succ}\varrho)$-winning strategy in~$v$. This means to
  show that
  \begin{equation}\label{eq:ap-impl-det-left}
    \forall \pi\in\Pi \ \ \exists \sigma\in \Sigma \ :\ 
    \calP^{\sigma,\pi}_{v}(\R(T))\succ\varrho
  \end{equation}
  implies
  \begin{equation*}%
    \exists \sigma\in\Sigma \ \forall \pi\in\Pi \ : \ 
    \calP^{\sigma,\pi}_{v}(\R(T))\succ\varrho
  \end{equation*}
  If $\succ$ is $>$ or $\val(v) = 0$, then the above implication
  follows easily. Observe that
  \begin{itemize}
  \item if $\succ$ is $>$, then (\ref{eq:ap-impl-det-left}) does not
    hold, because player~$\Diamond$ has an optimal minimizing strategy
    by Theorem~\ref{thm-value};
  \item for the constraint ${\geq}0$, the statement is trivial.
  \end{itemize}
  Hence, it suffices to consider the case when $\succ$ is $\geq$ and
  $\varrho=\val(v)>0$. Assume that~(\ref{eq:ap-impl-det-left}) holds.
  We say that a vertex $u \in V$ is \emph{good} if 
  \begin{equation}\label{eq:good}
    \forall \pi\in\Pi \ \ \exists \sigma\in \Sigma \ :\ 
    \calP^{\sigma,\pi}_{u}(\R(T)) \geq \val(u)
  \end{equation}
  Note that the vertex $v$ fixed above is good by~(\ref{eq:ap-impl-det-left}).
  Further, we say that an edge $u \gtran{} u'$ of $G$ is \emph{optimal}
  if either $u \in V_{\bigcirc}$, or $u \in V_{\Box} \cup V_{\Diamond}$ and 
  $\val(u) = \val(u')$.
  Observe that for every $u \in V_{\Box} \cup V_{\Diamond}$ there is at least
  one optimal edge $u \gtran{} u'$, because $G$ is finitely branching
  (recall that the tuple of all values
  is the least fixed-point of the functional $\V$ defined in
  the proof of Theorem~\ref{thm-value}). Further,
  note that if $u \in V_{\Box}$ is a good vertex, then there is at least
  one optimal edge $u \gtran{} u'$ where $u'$ is good (otherwise
  we immediately obtain a contradiction with~(\ref{eq:good}); also
  observe that if $u \in T$, then $u \gtran{} u$ by the technical assumption
  above).
  Similarly, if $u \in V_{\Diamond}$ is good then for every optimal edge
  $u \gtran{} u'$ we have that $u'$ is good, and if   
  $u \in V_{\bigcirc}$ is good and $u \gtran{} u'$ then $u'$ is good.
  Hence, we can define a game $\bar{G}$, where the set of vertices 
  $\bar{V}$ consists of all good vertices of~$G$, and for all 
  $u,u' \in \bar{V}$ we have that $(u,u')$ is
  an edge of $\bar{G}$ iff $u \gtran{} u'$ is an optimal edge of~$G$.
  The edge probabilities in $\bar{G}$ are the same as in~$G$.
  The rest of the proof proceeds by proving the following three claims:
  \begin{enumerate}
  \item[(a)] For every $u \in \bar{V}$  we have that
    $\val(u,\bar{G}) = \val(u,G)$.
  \item[(b)] There is $\bar{\sigma} \in \Sigma_{\bar{G}}$
    such that for every
    $\bar{\pi} \in \Pi_{\bar{G}}$ we have that
    $\calP^{\bar{\sigma},\bar{\pi}}_{v}(\R(T,\bar{G})) \geq 
    \val(v,\bar{G}) = \varrho$.
  \item[(c)] The strategy $\bar{\sigma}$ can be modified
    into a strategy $\sigma \in \Sigma_{G}$ such that
    for every
    $\pi \in \Pi_G$ we have that
    $\calP^{\sigma,\pi}_{v}(\R(T,G)) \geq \varrho$.
  \end{enumerate}
  We start by proving Claim~(a). Let $u \in \bar{V}$. 
  Due to Theorem~\ref{thm-value},
  there is a MD strategy $\pi \in \Pi_G$ which
  is optimal minimizing in every vertex of $G$ (particularly 
  in~$u$) and selects only the optimal edges. Hence, the strategy
  $\pi$ can also be used in the restricted game $\bar{G}$ and thus
  we obtain $\val(u,\bar{G}) \leq \val(u,G)$. Now suppose that
  $\val(u,\bar{G}) < \val(u,G)$. By applying 
  Theorem~\ref{thm-value} to $\bar{G}$, there is an 
  optimal minimizing MD strategy $\bar{\pi} \in \Pi_{\bar{G}}$.
  Further, for every vertex $t$ of $G$ which is not good there
  is a strategy $\pi_t \in \Pi_G$ such that for every 
  $\sigma \in \Sigma_G$ we have that 
  $\calP^{\sigma,\pi_t}_{t}(\R(T,G)) < \val(t,G)$ (this follows immediately
  from~(\ref{eq:good})).
  Now consider a strategy $\pi' \in \Pi_G$ which for every play of $G$ 
  initiated in $u$ behaves in the following way:
  \begin{itemize}
  \item As long as player~$\Box$ uses only the edges of $G$ that are 
    preserved in $\bar{G}$, the strategy $\pi'$ behaves exactly like the 
    strategy $\bar{\pi}$.
  \item When player~$\Box$ uses an edge $r \gtran{} r'$ which 
    is not an edge in $\bar{G}$ for the first time, then the strategy
    $\pi'$ starts to behave either like the optimal minimizing strategy
    $\pi$ or the strategy $\pi_{r'}$, depending on whether $r'$ is good 
    or not (observe that if $r'$ is good, then $\val(r',G) < \val(r,G)$).
  \end{itemize}
  Now it is easy to check that for every $\sigma \in \Sigma_{G}$ we
  have that $\calP^{\sigma,\pi'}_{u}(\R(T,G)) < \val(u,G)$, which 
  contradicts the assumption that $u$ is good.  

  Now we prove Claim~(b).  Due to Lemma~\ref{lem-max-bound}, for every
  $u \in \bar{V}$ we can fix a strategy 
  $\bar{\sigma}_u \in \Sigma_{\bar{G}}$ and 
  $n_u \in \Nset$ such that for every $\bar{\pi} \in \Pi_{\bar{G}}$ we have that
  $\calP^{\bar{\sigma}_u,\bar{\pi}}_{u}(\R_{n_u}(T,\bar{G}))> \val(u,\bar{G})/2$.
  For every $k \in \Nset_0$, let $B(k)$ be the set of all vertices $u$
  reachable from $v$ in $\bar{G}$ via a path of length exactly~$k$
  which does not visit~$T$. Observe that $B(k)$ is finite because
  $\bar{G}$ is finitely-branching. Further, for every $i \in \Nset_0$
  we define a bound $m_i \in \Nset$ inductively as follows: $m_0 = 1$,
  and $m_{i+1} = m_i + \max\{n_{u} \mid u \in B(m_i)\}$.  Now we
  define a strategy $\bar{\sigma} \in \Sigma_{\bar{G}}$ which turns out to be
  $(T,{\geq}\varrho)$-winning in the vertex $v$ of $\bar{G}$. 
  For every $w \in \bar{V}^*\bar{V}_\Box$  such that $m_i \leq |w| < m_{i+1}$ we 
  put $\bar{\sigma}(w) = \bar{\sigma}_u(uw_2)$, where 
  $w=w_1 u w_2$, $|w_1|=m_i-1$ and $u \in \bar{V}$.
  Now it is easy to check that for every $i \in \Nset$ and every
  strategy $\bar{\pi} \in \Pi_{\bar{G}}$ we have that
  $\calP^{\bar{\sigma},\bar{\pi}}_{v}(\R_{m_i}(T,\bar{G})) >
  (1-\frac{1}{2^i})\varrho$.  This means that the strategy $\bar{\sigma}$ is
  $(T,{\geq}\varrho)$-winning in~$v$.

  It remains to prove Claim~(c). Consider a strategy $\sigma \in \Sigma_G$
  which for every play of $G$ initiated in $v$ behaves as follows:
  \begin{itemize}
  \item As long as player~$\Diamond$ uses only the optimal edges, the
    strategy $\sigma$ behaves exactly like the strategy $\bar{\sigma}$.
  \item When player~$\Diamond$ uses a non-optimal edge $r \gtran{} r'$
    for the first time, the strategy $\sigma$ starts to behave 
    like an $\varepsilon$-optimal maximizing strategy in $r'$,
    where $\varepsilon = (\val(r',G) - \val(r,G))/2$. Note that
    since $r \gtran{} r'$ is not optimal, we have that
    $\val(r',G) > \val(r,G)$.
  \end{itemize}
  It is easy to check that $\sigma$ is $(T,{\geq}\varrho)$-winning in~$v$.
\end{proof}

\section{Stochastic BPA Games}
\label{sec-BPA-games}

Stochastic BPA games correspond to stochastic games induced by 
stateless pushdown automata or 1-exit recursive state machines
(see Section~\ref{sec-intro}). A formal definition follows.

\begin{definition}
  A \emph{stochastic BPA} game is a tuple 
  $\Delta = (\Gamma,\btran{},(\Gamma_\Box,\Gamma_\Diamond,\Gamma_\bigcirc),
  \Prob)$ where $\Gamma$ is a finite \emph{stack alphabet},
  ${\btran{}} \subseteq \Gamma \times \Gamma^{\leq 2}$ is a finite set
  of \emph{rules} (where $\Gamma^{\leq 2} = \{w \in \Gamma^* : |w| \leq 2\}$)
  such that for each $X \in \Gamma$ there is some rule $X \btran{} \alpha$,
  $(\Gamma_\Box,\Gamma_\Diamond,\Gamma_\bigcirc)$ is a partition of $\Gamma$,
  and $\Prob$ is a \emph{probability assignment} which to each 
  $X \in \Gamma_\bigcirc$ assigns a rational positive probability distribution
  on the set of all rules of the form $X \btran{} \alpha$.
\end{definition}

A \emph{configuration} of $\Delta$ is a word $\alpha \in \Gamma^*$, which
can intuitively be interpreted as the current stack content where
the leftmost symbol of $\alpha$ is on top of the stack.
Each stochastic BPA game $\Delta = 
(\Gamma,\btran{},(\Gamma_{\Box},\Gamma_\Diamond,\Gamma_{\bigcirc}),\Prob)$
determines a unique stochastic game 
\mbox{$G_\Delta = 
(\Gamma^*,\gtran{},(\Gamma_{\Box}\Gamma^*,\Gamma_{\Diamond}\Gamma^*,
\Gamma_{\bigcirc}\Gamma^* \cup \{\varepsilon\}),\Prob_\Delta)$},
where the edges of $\gtran{}$ are determined as follows:
$\varepsilon \gtran{} \varepsilon$, and
$X\beta \gtran{} \alpha\beta$ iff $X \btran{} \alpha$. The probability
assignment $\Prob_\Delta$ is the natural extension of $\Prob$, i.e.,
$\varepsilon \gtran{1} \varepsilon$ and
for all $X \in \Gamma_{\bigcirc}$
we have that 
$X\beta \gtran{x} \alpha\beta$ iff  $X \btran{x} \alpha$. 
The size of $\Delta$, denoted by $|\Delta|$, is the length of the
corresponding binary encoding.

In this section we consider stochastic BPA games with qualitative 
reachability objectives $(T,{\succ}\varrho)$ where $T \subseteq \Gamma^*$ 
is a \emph{regular} set of configurations. For technical convenience, 
we define the size of $T$ as the size of the minimal deterministic 
finite-state automaton $\A_T = (Q,q_0,\delta,F)$ which recognizes
the \emph{reverse} of $T$ (if we view configurations as stacks, this 
corresponds to the bottom-up direction). Note that the automaton
$\A_T$ can be simulated on-the-fly in $\Delta$ by employing standard 
techniques (see, e.g., \cite{EKS:PDA-regular-valuations-IC}). That is,
the stack alphabet is extended to $\Gamma \times Q$ and the rules
are adjusted accordingly (for example, if $X \btran{} YZ$, then for every
$q \in Q$ the extended BPA game has a rule 
$(X,q) \btran{} (Y,r)(Z,q)$ where $\delta(q,Z) = r$).
Note that the on-the-fly simulation of $\A_T$ in $\Delta$ does not affect 
the way how the game is played, and the size of the extended game is polynomial
in $|\Delta|$ and $|\A_T|$. The main advantage of this simulation is
that the information whether a current configuration belongs
to $T$ or not can now be deduced just by looking at the symbol on top
of the stack. This leads to an important technical simplification
in the definition of $T$.

\begin{definition}
  We say that $T \subseteq \Gamma^*$ is \emph{simple} if
  $\varepsilon \notin T$ and there is
  $\Gamma_T \subseteq \Gamma$ such that for every $X\alpha \in \Gamma^+$
  we have that $X\alpha \in T$ iff $X \in \Gamma_T$.
\end{definition}

Note that the requirement $\varepsilon \notin T$ in the previous 
definition is not truly restrictive, because each BPA can be equipped
with a fresh bottom-of-the-stack symbol which cannot be removed. Hence,
we can safely restrict ourselves just to simple sets of target 
configurations. All of the obtained results (including the complexity
bounds) are valid also for regular sets of target configurations.

Since stochastic BPA games have infinitely many vertices, even 
memoryless strategies are not necessarily finitely representable.
It turns out that the winning strategies for both players in
stochastic BPA games with qualitative reachability objectives
are (effectively) \emph{regular} in the following sense:

\begin{definition}
\label{def-regular-strategy}
  Let $\Delta = (\Gamma,\btran{},(\Gamma_\Box,\Gamma_\Diamond,\Gamma_\bigcirc),
  \Prob)$ be a stochastic BPA game, and let ${\odot} \in \{\Box,\Diamond\}$.
  We say that a strategy $\tau$ for player~$\odot$
  is \emph{regular} if there is a deterministic finite-state automaton
  $\A$ over the alphabet $\Gamma$ such that,
  for every $X\alpha \in \Gamma_\odot\Gamma^*$, the value of 
  $\tau(X\alpha)$ depends just on the control state entered by
  $\A$ after reading the reverse of $X\alpha$ (i.e., the automaton
  $\A$ reads the stack bottom-up). Note that regular strategies are
  not necessarily deterministic.

  A special type of regular strategies are \emph{stackless MD (SMD)} 
  strategies, where $\tau(X\alpha)$ depends just on the symbol $X$
  on top of the stack. Note that SMD strategies are deterministic.

\end{definition}

We use $T_\varepsilon$ to denote the 
set $T \cup \{\varepsilon\}$, and we also slightly abuse the notation 
by writing $\varepsilon$ instead of $\{\varepsilon\}$ (particularly 
in expressions such as $\R(\varepsilon)$ or $\wsetD{\varepsilon}{<}{1}$).

In the next sections, we consider the two meaningful qualitative
probability constraints ${>}0$ and ${=}1$. We show that the winning
regions $\wsetB{T}{>}{0}$, $\wsetD{T}{=}{0}$, $\wsetB{T}{=}{1}$, and
$\wsetD{T}{<}{1}$ are effectively regular. Further, we show that 
the membership to $\wsetB{T}{>}{0}$ and $\wsetD{T}{=}{0}$ is in
$\PTIME$, and the membership to $\wsetB{T}{=}{1}$ and $\wsetD{T}{<}{1}$
is in  $\NP \cap \coNP$. Finally, we show that the associated winning
strategies are regular and effectively constructible (for both players).

\section{Computing the Regions 
    \protect{$\wsetB{T}{>}{0}$} and 
    \protect{$\wsetD{T}{=}{0}$}} 
\label{sec:zero}

For the rest of this section, we fix a stochastic BPA game
\mbox{$\Delta = (\Gamma,\btran{},(\Gamma_\Box,\Gamma_\Diamond,%
\Gamma_\bigcirc),\Prob)$}
and a simple set $T$ of target configurations. Since we are interested
only in reachability objectives, we can safely assume that for
every $R \in \Gamma_T$, the only rule where $R$ appears on the
left-hand side is $R \btran{} R$ (this assumption simplifies the
formulation of some claims). 

We start by observing that the sets  $\wsetB{T}{>}{0}$ and
$\wsetD{T}{=}{0}$ are regular,
and the associated finite-state automata have a fixed number of
control states.

\begin{proposition}
\label{prop-greater-zero-regular}
  Let $\A = \wsetB{T}{>}{0} \cap \Gamma$ and 
  $\B = \wsetB{T_\varepsilon}{>}{0} \cap \Gamma$. Then
  $\wsetB{T}{>}{0} = \B^*\A \Gamma^*$ and 
  $\wsetB{T_\varepsilon}{>}{0} = \B^*\A\Gamma^* \cup \B^*$.
  Consequently, $\wsetD{T}{=}{0} = \Gamma^* \smallsetminus \wsetB{T}{>}{0} =
  (\B \smallsetminus \A)^* \cup (\B \smallsetminus \A)^*(\Gamma 
  {\smallsetminus} \B)\Gamma^*$ and
  $\wsetD{T_\varepsilon}{=}{0} = \Gamma^* \smallsetminus 
  \wsetB{T_\varepsilon}{>}{0} = (\B \smallsetminus \A)^*(\Gamma 
  {\smallsetminus} \B)\Gamma^*$.
\end{proposition}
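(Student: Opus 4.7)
My plan is to leverage the defining structural feature of BPA games: a play starting from $Y\beta$ is ``isolated'' above the suffix $\beta$ until the initial symbol $Y$ (together with everything subsequently pushed on top of it) has been popped, at which point the stack becomes exactly $\beta$ and play continues. Thus any strategy defined on the single-symbol configuration $Y$ lifts canonically to any configuration $Y\beta$ by being applied to the part of the stack above $\beta$, with trajectory probabilities preserved: reaching $T$ in the single-symbol game from $Y$ corresponds to reaching $T$ before exposing $\beta$ in the lifted play, and reaching $\varepsilon$ corresponds to exposing $\beta$. This lifting principle, together with the trivial observation that $\A\subseteq\B$, is the workhorse of the whole argument.

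I would then prove the inclusion $\B^*\A\Gamma^* \subseteq \wsetB{T}{>}{0}$ by constructing a phase-based strategy for player~$\Box$. Given $\alpha = Y_1 \cdots Y_{i-1} Y_i \beta$ with $Y_1,\ldots,Y_{i-1} \in \B$ and $Y_i \in \A$, fix $(T_\varepsilon,{>}0)$-winning strategies $\sigma_1,\ldots,\sigma_{i-1}$ from the respective single-symbol configurations and a $(T,{>}0)$-winning strategy $\sigma_i$ from $Y_i$. Compose them by phases: in phase $j$ play the lifted version of $\sigma_j$, advancing to phase $j+1$ exactly when the stack shrinks to the tail $Y_{j+1}\cdots Y_i\beta$. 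Against every opponent strategy $\pi$, the lifting identity ensures that conditional on entering phase $j < i$ there is positive probability either of hitting $T$ or of advancing to phase $j{+}1$, and conditional on entering phase $i$ there is positive probability of hitting $T$. These positive probabilities compose into a positive lower bound on $\calP^{\sigma,\pi}_\alpha(\R(T))$.

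For the converse $\wsetB{T}{>}{0} \subseteq \B^*\A\Gamma^*$ I would invoke determinacy (Theorem~\ref{thm-determinacy}) and instead show $\Gamma^* \smallsetminus \B^*\A\Gamma^* \subseteq \wsetD{T}{=}{0}$. Given $\alpha = Y_1\cdots Y_k \notin \B^*\A\Gamma^*$, let $j$ be the smallest index with $Y_j \notin \B$, or $j=k{+}1$ if no such index exists. A short inspection of the definition of $\B^*\A\Gamma^*$ shows that $Y_1,\ldots,Y_{j-1} \in \B \smallsetminus \A$. I would then build a blocking strategy for player~$\Diamond$ by the same phase discipline: in each phase $i \le j{-}1$ use a $(T,{=}0)$-winning strategy from $Y_i$ (which exists because $Y_i \notin \A$), and in phase $j \le k$ use a $(T_\varepsilon,{=}0)$-winning strategy from $Y_j$ (which exists because $Y_j \notin \B$). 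By the lifting identity, in the early phases no configuration in $T$ is ever visited (though the phase may advance), and in the terminal phase neither $T$ nor the bottom suffix is exposed, so the play diverges safely above that suffix. If $j=k{+}1$ all phases may resolve and the empty stack may be reached, but $\varepsilon \notin T$, so $T$ is still avoided. The same skeleton proves $\wsetB{T_\varepsilon}{>}{0} = \B^*\A\Gamma^*\cup\B^*$, the extra summand $\B^*$ accounting for plays that are permitted to reach $\varepsilon$ without visiting $T$.

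The final ``Consequently'' statements are immediate: determinacy yields $\wsetD{T}{=}{0} = \Gamma^* \smallsetminus \wsetB{T}{>}{0}$ and $\wsetD{T_\varepsilon}{=}{0} = \Gamma^* \smallsetminus \wsetB{T_\varepsilon}{>}{0}$, and a direct set-theoretic computation (separating words by the position of their first symbol outside $\B$) yields the claimed expressions. The main technical obstacle is making the phase-based strategies precise: they are intrinsically history-dependent, since the current phase is encoded by which suffix of the original stack still sits at the bottom. One must take care that the lifting identity applies uniformly within each phase and that the positive (resp.\ vanishing) phase probabilities compose correctly against \emph{every} opponent strategy, rather than just some opponent response.
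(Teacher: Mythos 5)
Your proof is correct, but it takes a genuinely different route from the paper for the harder inclusion. The paper proves $\wsetB{T}{>}{0}\subseteq\B^*\A\Gamma^*$ \emph{directly}, by induction on stack length: given a $(T,{>}0)$-winning strategy $\sigma$ in $X\alpha$, it splits on whether some $\Box$-strategy forces positive probability of hitting $T$ \emph{before exposing} $\alpha$ (then $X\in\A$) or not (then the projection $\sigma[{-}\alpha]$ still witnesses $X\in\B$, and $\alpha$ itself must lie in $\wsetB{T}{>}{0}$, closing the induction); determinacy is only needed afterwards for the ``Consequently'' part. You instead invoke determinacy up front and establish the converse by exhibiting an explicit phase-based blocking strategy for player~$\Diamond$ on the complement $(\B\smallsetminus\A)^*\cup(\B\smallsetminus\A)^*(\Gamma\smallsetminus\B)\Gamma^*$, using $(T,{=}0)$- and $(T_\varepsilon,{=}0)$-winning single-symbol strategies whose existence is itself a consequence of Theorem~\ref{thm-determinacy}. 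Both arguments are sound (there is no circularity, since determinacy is proved independently in Section~\ref{sec-determinacy}); yours has the merit of producing winning strategies for \emph{both} players in one pass and of treating the whole word in a single multi-phase composition rather than peeling one symbol per induction step, while the paper's version is self-contained at this point and isolates the one genuinely new case analysis (reaching $T$ before versus after exposing the suffix). Your closing caveat about composing the phase probabilities is the right one to flag: the positive probabilities are not uniform in the opponent's play, and the composition must go through ``some positive-probability history entering the next phase'' conditioned on which the next lifted strategy again wins with positive probability --- but this is exactly the same level of care the paper's own two-phase composition requires, so it is not a gap relative to the published argument.
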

\begin{proof}
Note that $\A\subseteq\B$. We start by introducing some notation. 
For every strategy $\sigma \in \Sigma$ and every $\alpha \in \Gamma^*$, let
\begin{itemize}
\item $\sigma[{-}\alpha]$ be a strategy such that for every
  finite sequence of configurations
  $\gamma_1,\ldots,\gamma_n,\gamma$, where $n \geq 0$ and 
  $\gamma \in \Gamma_\Box\Gamma^*$, and every edge
  $\gamma \gtran{} \delta$ we have that 
  $\sigma[{-}\alpha](\gamma_1,\ldots,\gamma_n,\gamma)(\gamma \gtran{} \delta) = 
   \sigma(\gamma_1\alpha,\ldots,\gamma_n\alpha,\gamma\alpha)(\gamma\alpha \gtran{} \delta\alpha)$
\item $\sigma[{+}\alpha]$ be a strategy such that for every
  finite sequence of configurations
  $\gamma_1\alpha,\ldots,\gamma_n\alpha,\gamma\alpha$, where  $n \geq 0$ and
   $\gamma\alpha \in \Gamma_\Box\Gamma^*$, and every edge
  $\gamma\alpha \gtran{} \delta\alpha$ we have that 
  $\sigma[{+}\alpha](\gamma_1\alpha,\ldots,\gamma_n\alpha,\gamma\alpha)(\gamma\alpha \gtran{} \delta\alpha) = 
   \sigma(\gamma_1,\ldots,\gamma_n,\gamma)(\gamma \gtran{} \delta)$
\end{itemize}

By induction on the length of  $\alpha\in\Gamma^*$,
we prove that $\alpha\in\wsetB{T}{>}{0}$ iff $\alpha\in\B^*\A\Gamma^*$.
For $\alpha=\varepsilon$, both sides of the equivalence are false.
Now assume that the equivalence holds for all configurations of
length $k$ and consider an arbitrary $X\alpha\in\Gamma^+$ where
$|\alpha|=k$. If $X\alpha\in\wsetB{T}{>}{0}$, then there are two
possibilities:
\begin{itemize}
\item There is a strategy $\sigma\in\Sigma$ such that for all
  $\pi\in\Pi$, the probability of reaching $T$ without prior reaching
  $\alpha$ is positive in the play $G_\Delta(\sigma,\pi)$
  initiated in $X\alpha$.  Then
  $\sigma[{-}\alpha]$ is $(T,{>}0)$-winning in $X$, which 
  means that $X\in\wsetB{T}{>}{0}$, i.e., $X\in\A$.
\item There is a strategy $\sigma\in\Sigma$ such that for all
  $\pi\in\Pi$, the probability of reaching $T$ is positive
  in the play $G_\Delta(\sigma,\pi)$ initiated in $X\alpha$,
  but for some $\hat{\pi} \in \Pi$, the configuration $\alpha$ is
  always reached before reaching $T$. In this case, consider again the
  strategy $\sigma[{-}\alpha]$.
  Then $\sigma[{-}\alpha]$ is $(T_\varepsilon,{>}0)$-winning
  in $X$, which means
  $X\in\wsetB{T_\varepsilon}{>}{0}$, i.e., $X\in\B$.  Moreover, observe
  that the strategy $\sigma$ is $(T,{>}0)$-winning in $\alpha$.
  Thus, $\alpha\in\wsetB{T}{>}{0}$ and by induction hypothesis we obtain
  $\alpha\in\B^*\A\Gamma^*$.
\end{itemize}
In both cases, we obtained $X\alpha\in\B^*\A\Gamma^*$.
If $X\alpha\in\B^*\A\Gamma^*$, we can again distinguish two possibilities: 
\begin{itemize}
\item  $X\in\A$ and there is a $(T,{>}0)$-winning strategy
$\sigma\in\Sigma$ for the initial configuration $X$.
Then the strategy $\sigma[{+}\alpha]$ is $(T,{>}0)$-winning in $X\alpha$. 
Thus, $X\alpha\in\wsetB{T}{>}{0}$.

\item  $X\in\B$ and $\alpha\in\B^*\A\Gamma^*$. Then there exists a
$(T_\varepsilon,{>}0)$-winning strategy $\sigma_1\in\Sigma$ in $X$.
By induction hypothesis, there is a 
$(T,{>}0)$-winning strategy $\sigma_2\in\Sigma$ in $\alpha$.
We construct a strategy $\sigma'$ which behaves like $\sigma_1[{+}\alpha]$
until $\alpha$ is reached, and from that point on it behaves like
$\sigma_2$. Obviously, $\sigma'$ is $(T,{>}0)$-winning, which means
that $X\alpha\in\wsetB{T}{>}{0}$.
\end{itemize}
The proof of $\wsetB{T_\varepsilon}{>}{0} = \B^*\A\Gamma^* \cup \B^*$
is similar.
\end{proof}

Our next proposition says how to compute the sets $\A$ and $\B$. 

\begin{proposition}
\label{prop-greater-than-zero}
The pair $(\A,\B)$ is  the least fixed-point
of the function \mbox{$F : (2^\Gamma \times 2^\Gamma) \rightarrow
(2^\Gamma \times 2^\Gamma)$} defined as follows: 
$F(A,B) = (\hat{A},\hat{B})$, where
\begin{eqnarray*}
  \hat{A} & = & \Gamma_T \cup A  \cup 
                \{X \in \Gamma_\Box \cup \Gamma_\bigcirc \mid 
                  \mbox{ there is } X \btran{} \beta 
                  \mbox{ such that } \beta \in B^*A\Gamma^*\}\\
         & \cup & \{X \in \Gamma_\Diamond  \mid 
                  \mbox{ for all } X \btran{} \beta 
                  \mbox{ we have that } \beta \in B^*A\Gamma^*\}\\
  \hat{B} & = & \Gamma_T \cup B \cup 
                \{X \in \Gamma_\Box \cup \Gamma_\bigcirc \mid 
                  \mbox{ there is } X \btran{} \beta 
                  \mbox{ such that } \beta \in B^*A\Gamma^* \cup B^*\}\\
         & \cup & \{X \in \Gamma_\Diamond  \mid 
                  \mbox{ for all } X \btran{} \beta 
                  \mbox{ we have that } \beta \in B^*A\Gamma^* \cup B^*\}
\end{eqnarray*}
\end{proposition}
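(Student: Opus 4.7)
The plan is to prove that $(\A, \B)$ equals the least fixed point of $F$ by verifying both defining properties: that $(\A, \B)$ is itself a fixed point of $F$, and that $(\A, \B)$ is contained in every other fixed point. Combined with the existence of the least fixed point via Knaster-Tarski, this identifies $(\A, \B)$ as the required object.

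For the first property, the inclusion $(\A, \B) \subseteq F(\A, \B)$ is immediate from the definition of $F$ (which includes $A$ and $B$ in both unions). For the reverse inclusion, I verify each disjunct in the definitions of $\hat\A$ and $\hat\B$, relying on Proposition~\ref{prop-greater-zero-regular} to pass between single-symbol membership and full regular-expression membership. The $\Gamma_T$ case is immediate because any configuration with a target symbol on top is in $T$. For the $\Gamma_\Box \cup \Gamma_\bigcirc$ clause, $\beta \in \B^*\A\Gamma^*$ gives $\beta \in \wsetB{T}{>}{0}$ by Proposition~\ref{prop-greater-zero-regular}, and a winning strategy at $\beta$ lifts to one at $X$ by first committing to the rule $X \btran{} \beta$ (for $X \in \Gamma_\Box$) or relying on its positive probability (for $X \in \Gamma_\bigcirc$), then playing the winning strategy at $\beta$; this is essentially the $\sigma[{+}\varepsilon]$ construction from the proof of Proposition~\ref{prop-greater-zero-regular}. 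For the $\Gamma_\Diamond$ clause, player~$\Box$'s history-dependent strategy reacts to whichever rule $\Diamond$ chooses and then plays the corresponding winning strategy. The argument for $\hat\B$ is parallel, with the extra clause $\beta \in \B^*$ accommodating reachability of $\varepsilon$.

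For the second property, let $(A, B)$ be an arbitrary fixed point of $F$. To prove $\A \subseteq A$ I argue by contraposition: given $X \in \Gamma \setminus A$, I construct an SMD strategy $\pi$ for player~$\Diamond$ such that $\calP_X^{\sigma, \pi}(\R(T)) = 0$ for every $\sigma \in \Sigma$, which forces $X \notin \wsetB{T}{>}{0} = \A$. Define $\pi$ by choosing, for each $Y \in \Gamma_\Diamond \setminus A$, a rule $Y \btran{} \beta_Y$ with $\beta_Y \notin B^*A\Gamma^*$; such a rule exists because $Y \notin \hat A = A$ falsifies the universally quantified $\Gamma_\Diamond$-clause. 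Setting $W = \Gamma^* \setminus B^*A\Gamma^*$, the initial single-symbol configuration $X$ lies in $W$. I claim every play consistent with $\pi$ that starts in a configuration in $W$ remains in $W$. The verification is a case analysis on the top-of-stack symbol $Y$ and on whether $Y \in B$: when $Y \notin B$, all rules $Y \btran{} \beta$ produce $\beta \notin B^*A\Gamma^* \cup B^*$ (as $Y \notin \hat B$), and the non-$B$ structure of $\beta$ guarantees $\beta\gamma \notin B^*A\Gamma^*$ regardless of $\gamma$; when $Y \in B \setminus A$, the hypothesis $Y\gamma \in W$ forces $\gamma \in W$, and any $\beta$ with $\beta \notin B^*A\Gamma^*$ yields $\beta\gamma \notin B^*A\Gamma^*$ by a decomposition argument. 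Since $\Gamma_T \subseteq A$, no configuration in $W$ exposes a target symbol on top, so $T$ is unreachable and the claim follows. The argument for $\B \subseteq B$ is parallel, with $W' = \Gamma^* \setminus (B^*A\Gamma^* \cup B^*)$, using $\varepsilon \notin W'$ to exclude reachability of the empty stack.

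The main obstacle is verifying the invariance of $W$ under $\pi$: membership of $\beta\gamma$ in $B^*A\Gamma^*$ depends simultaneously on the internal shape of $\beta$ and on the location of any $A$-symbol in $\gamma$, so one must separately handle each combination of BPA rule shape (push, swap, pop) with the ownership and $B$-status of the top symbol. The verification is mechanical once the case analysis is laid out, but proper bookkeeping is essential—particularly the observation that when the top symbol lies in $B$ the remainder of the stack must already be in $W$, which is exactly the recursive invariant needed to propagate ``badness'' through pop transitions.
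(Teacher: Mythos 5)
Your overall route is genuinely different from the paper's. You show that $(\A,\B)$ is itself a fixed point of $F$ (by lifting winning strategies one step through Proposition~\ref{prop-greater-zero-regular}) and then show it lies below every fixed point by exhibiting a spoiling SMD strategy for player~$\Diamond$ that traps the play in the complement of $B^*A\Gamma^*$. The paper instead works with the iteration stages $F^i(\emptyset,\emptyset)$: it builds an explicit rank-decreasing MD strategy for player~$\Box$ to show the iterated sets are contained in $\A$ and $\B$, and proves the converse by induction on a norm (the length of a shortest forced path to $T$, finite by K\"onig's lemma), with no spoiling strategy for player~$\Diamond$ at all. Your scheme is logically sound and even slightly stronger (minimality over \emph{all} fixed points), but one step fails as written.

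The gap is in the definition of $\pi$ together with the claim ``when $Y\notin B$, all rules $Y\btran{}\beta$ produce $\beta\notin B^*A\Gamma^*\cup B^*$ (as $Y\notin\hat B$).'' That implication holds only for $Y\in\Gamma_\Box\cup\Gamma_\bigcirc$; for $Y\in\Gamma_\Diamond$, the negation of the universally quantified $\Gamma_\Diamond$-clause of $\hat B$ merely guarantees the \emph{existence} of one rule with $\beta\notin B^*A\Gamma^*\cup B^*$, whereas your $\pi$ is only required to pick some $\beta_Y\notin B^*A\Gamma^*$, which may lie in $B^*$. Then from a configuration $Y\gamma$ with $Y\notin A\cup B$ and $\gamma\in B^*A\Gamma^*$ (such $\gamma$ is not excluded by $Y\gamma\in W$, since the top symbol alone already keeps $Y\gamma$ outside $B^*A\Gamma^*$), the move to $\beta_Y\gamma\in B^*B^*A\Gamma^*\subseteq B^*A\Gamma^*$ leaves $W$. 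Concretely: take $\Gamma_T=\{R\}$ with $R\btran{}R$, $Z\btran{}\varepsilon$, $U\btran{}U$ (all in $\Gamma_\bigcirc$), $Y\in\Gamma_\Diamond$ with rules $Y\btran{}U$ and $Y\btran{}Z$, and $X\in\Gamma_\Box$ with the single rule $X\btran{}YR$. Here $\A=\{R\}$, $\B=\{R,Z\}$, so taking $(A,B)=(\A,\B)$ we have $X\notin A$ and $Y\notin A\cup B$; your criterion permits $\pi$ to choose $Y\btran{}Z$ (since the one-letter word $Z$ is not in $B^*A\Gamma^*$), and then $X\to YR\to ZR\to R$ reaches $T$ with probability one. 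The repair is local: for $Y\in\Gamma_\Diamond\smallsetminus B$ choose $\beta_Y\notin B^*A\Gamma^*\cup B^*$ (available precisely because $Y\notin\hat B$), reserving the weaker choice $\beta_Y\notin B^*A\Gamma^*$ for $Y\in\Gamma_\Diamond\cap(B\smallsetminus A)$, where your observation that the tail of the stack must itself lie in $W$ does the work. With this correction the rest of your case analysis, and the parallel argument for $\B\subseteq B$ with $W'=\Gamma^*\smallsetminus(B^*A\Gamma^*\cup B^*)$, goes through.
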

\begin{proof}
For every $i \in \Nset_0$, let $(A_i,B_i) = F^i(\emptyset,\emptyset)$.
The set $2^\Gamma \times 2^\Gamma$ with the component-wise inclusion
forms a finite lattice. The longest chain in this lattice has length
$2|\Gamma|+1$.
Since $F$ is clearly monotone, by Knaster-Tarski theorem 
$(\A_F,\B_F) = (\bigcup_{i=0}^{2|\Gamma|} A_i,\bigcup_{i=0}^{2|\Gamma|} B_i)$
is the least fixed-point of $F$.
We show that 
$(\A_F,\B_F) = (\A,\B)$.

We start with the ``$\subseteq$'' direction. We use the following notation:
\begin{itemize}
\item for every $X \in \A_F$, let $I_A(X)$ be the 
  least $i \in \Nset$ such that $X \in A_i$;
\item for every $X \in \B_F$, let $I_B(X)$ be the 
  least $i \in \Nset$ such that $X \in B_i$;
\item for every $\alpha Y \in \B_F^* \A_F$, let 
  $I(\alpha Y) = \max (\{I_A(Y)\} \cup \{I_B(Z) \mid Z 
  \mbox{ appears in } \alpha\})$;
\item for every $\beta \in \Gamma^*$, let
  $\price(\beta) = \min\{I(\gamma) \mid  \gamma \mbox{ is a prefix of } 
  \beta, \gamma \in \B_F^* \A_F\}$, where $\min(\emptyset) {=} \infty$.
\end{itemize}

First observe that $\Gamma_T$ is a subset of both $\A$ and $\B$.
For every $X \in (\A_F \cap \Gamma_\Box) \smallsetminus \Gamma_T$, we fix some
$X \btran{} \alpha$  (the ``$A$-rule'') such that  $\price(\alpha) < I_A(X)$.
It follows directly from the definition
of $F$ that there must by such a rule. Similarly, for every
$X \in (\B_F \cap \Gamma_\Box) \smallsetminus \Gamma_T$, we fix some
$X \btran{} \alpha$ (the ``$B$-rule'') such that either 
$\price(\alpha) < I_B(X)$, or
$\alpha \in \B_F^*$ and $I_B(Y) < I_B(X)$ for every $Y$ of $\alpha$.

Now consider a MD strategy $\sigma \in \Sigma$ which for a given
$X\alpha \in \B_F^*\A_F\Gamma^* \cap \Gamma_\Box \Gamma^*$ selects
\begin{itemize}
\item an arbitrary outgoing rule if $X\in \Gamma_T$;
\item the $A$-rule of $X$ if $X \in \A_F$ and $I_A(X) = \price(X\alpha)$;
\item the $B$-rule of $X$ otherwise.
\end{itemize}
We claim that $\sigma$ is $(T,{>}0)$-winning in 
every configuration
of $\B_F^*\A_F\Gamma^*$. In particular, this
means that $\A_F \subseteq \A$. To see this, realize that 
for every $\pi \in \Pi$, the play $G_\Delta(\sigma,\pi)$ contains
a path along which every transition either decreases the price, 
or maintains the price but decreases either the length or replaces the
first symbol with a sequence of symbols whose 
$I_B$-value is strictly smaller. Hence, this
path must inevitably visit $T$ after performing a finite number 
of transitions.

Similar arguments show that
$\sigma$ is $(T,{>}0)$-winning 
in every configuration
of $\B_F^*\A_F\Gamma^* \cup \B_F^*$. In particular, this
means that $\B_F \subseteq \B$.

Now we prove the ``$\supseteq$'' direction, i.e., $\A_F\supseteq\A$ and
$\B_F\supseteq\B$.  Let us define the \emph{$\A$-norm} of a given $X\in
\Gamma$, $N_A(X)$, to be the least $n$ such that for some $\sigma\in \Sigma$
and for all $\pi\in \Pi$ there is a path in $G_\Delta(\sigma,\pi)$ of length at
most $n$ from $X$ to $T$.
Similarly, define the \emph{$\B$-norm} of a given $X\in \Gamma$, $N_B(X)$,
to be the least $n$ such that for some $\sigma\in \Sigma$ and for all
$\pi\in \Pi$ there is a path in $G_\Delta(\sigma,\pi)$ of length at most $n$
from $X$ to $T_\varepsilon$ (if there are no such paths, then we put
$N_A(X)=\infty$ and $N_B(X)=\infty$, respectively).

It follows from K\"onig's lemma and the fact that the game is finitely
branching that $N_A(X)$ is finite for every $X\in\A$, and $N_B(X)$ is finite
for every $X\in\B$. Also note that for all $X\in\Gamma$ we have that
$N_A(X)\geq N_B(X)$.

We show, by induction on $n$, 
that every $X\in\A$ s.t.{} $N_A(X)=n$ belongs to $A_n$, and
that every $X\in\B$ s.t.{} $N_B(X)=n$ belongs to $B_n$.
The base case is easy since
$N_A(X)=1$ iff $N_B(X)=1$ iff $X\in\Gamma_T$,
and $(A_1,B_1)=(\Gamma_T,\Gamma_T)$.
The inductive step follows:
\begin{itemize}
\item $X\in\A$. If $X\in\Gamma_\Box$ (or $X\in\Gamma_\Diamond$), then some
  (or every) rule of the form $X\btran{} \beta Y \gamma$ satisfies
  $\beta\in \B^*$, $Y\in \A$, $N_A(Y)<n$, and $N_B(Z)<n$ for 
  all $Z$ which appear in $\beta$.  By induction hypothesis, 
  $\beta\in B_{n-1}^*$ and $Y\in A_{n-1}$. Hence, $X\in A_n$.
\item $X\in\B$. If $X\in\Gamma_\Box$ (or $X\in\Gamma_\Diamond$), then some
  (or every) rule of the form $X\btran{}\bar{\beta}$ satisfies
  one of the following conditions:
  \begin{itemize}
  \item $\bar{\beta}=\beta Y \gamma$ where $\beta\in \B^*$, 
    $Y\in \A$, $N_A(Y)<n$, and $N_B(Z)<n$ for all $Z$ which 
    appear in $\beta$.  By induction hypothesis,
    $\beta\in B_{n-1}^*$ and $Y\in A_{n-1}$. Hence, $X\in A_n\subseteq B_n$.
  \item $\bar{\beta}\in \B^*$ where $N_B(Z)<n$ for all $Z$ which appear
  in $\bar\beta$.
    By induction hypothesis, $\bar\beta\in B_{n-1}^*$, and hence $X\in B_n$.
  \end{itemize}
\end{itemize}
\end{proof}

Since the least fixed-point of the function~$F$ defined in
Proposition~\ref{prop-greater-than-zero} is computable in polynomial
time, the finite-state automata recognizing the sets 
$\wsetB{T}{>}{0}$ and $\wsetD{T}{=}{0}$ are computable in polynomial
time. Thus, we obtain the following theorem:

\begin{theorem}
\label{thm-BPA-case-zero}
  The membership to $\wsetB{T}{>}{0}$ and $\wsetD{T}{=}{0}$ is decidable 
  in polynomial time. Both sets are effectively regular, and the associated
  finite-state automata are constructible
  in polynomial time.
  Further, there is a regular strategy $\sigma \in \Sigma$ 
  and a SMD strategy $\pi \in \Pi$ constructible in polynomial time
  such that $\sigma$ and $\pi$ is 
  $(T,{>}0)$-winning and $(T,{=}0)$-winning in every
  configuration of $\wsetB{T}{>}{0}$ and $\wsetD{T}{=}{0}$, respectively. 
\end{theorem}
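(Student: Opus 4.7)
The plan is to take Propositions~\ref{prop-greater-zero-regular} and~\ref{prop-greater-than-zero} as the backbone, adding explicit effective strategy constructions. First I would compute the pair $(\A,\B)$ as the least fixed-point of the monotone function $F$ from Proposition~\ref{prop-greater-than-zero}. Since the lattice $2^\Gamma\times 2^\Gamma$ has height at most $2|\Gamma|{+}1$ and each application of $F$ reduces to checking, for every rule $X\btran{}\beta$ with $|\beta|\leq 2$, whether $\beta$ lies in $B^*A\Gamma^*$ or in $B^*$ for the current $A,B$, the iteration terminates in polynomial time. Feeding $\A,\B$ into Proposition~\ref{prop-greater-zero-regular} then yields explicit regular expressions for $\wsetB{T}{>}{0}$ and $\wsetD{T}{=}{0}$, from which three-state bottom-up DFAs are immediate; a membership test reduces to running one of them on the input configuration.

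For player~$\Box$'s regular $(T,{>}0)$-winning strategy $\sigma$ I would use the MD strategy from the proof of Proposition~\ref{prop-greater-than-zero}: at $X\alpha\in\B^*\A\Gamma^*\cap\Gamma_\Box\Gamma^*$, play the ``$A$-rule'' of $X$ when $X\in\A$ and $I_A(X)=\price(X\alpha)$, and the ``$B$-rule'' otherwise. This decision is computed by a polynomial-size bottom-up DFA whose state, after reading a prefix of the reversed stack, records a value $p\in\{1,\dots,2|\Gamma|{+}1,\infty\}$ equal to $\price$ of the stack assembled so far. Upon pushing a new symbol $Y$ on top, $p$ is updated to $\min(I_A(Y),\max(I_B(Y),p))$ if $Y\in\A$, to $\max(I_B(Y),p)$ if $Y\in\B\smallsetminus\A$, and to $\infty$ if $Y\notin\B$. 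By also retaining the last-read symbol in the final state, the DFA can output the chosen action; and since $I_A,I_B$ are precomputed from $\A,\B$ in polynomial time, so is the DFA, making $\sigma$ regular per Definition~\ref{def-regular-strategy}.

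For player~$\Diamond$'s SMD $(T,{=}0)$-winning strategy $\pi$ I would define $\pi$ on each $X\in\Gamma_\Diamond$ by: if $X\in\B\smallsetminus\A$, pick any rule $X\btran{}\beta$ with $\beta\notin\B^*\A\Gamma^*$ (which exists because $X\notin\hat A$); if $X\notin\B$, pick any rule with $\beta\notin\B^*\A\Gamma^*\cup\B^*$ (which exists because $X\notin\hat B$); on $X\in\A$ arbitrary. Correctness rests on an invariance claim: for any $\sigma\in\Sigma$ and any $v\in\wsetD{T}{=}{0}=\Gamma^*\smallsetminus\B^*\A\Gamma^*$, every one-step successor of $v$ in $G_\Delta(\sigma,\pi)$ again avoids $\B^*\A\Gamma^*$. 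For a top symbol $X\in\Gamma_\Box\cup\Gamma_\bigcirc$ the needed property of the produced $\beta$ is \emph{forced} by the fixed-point equations ($X\notin\hat A$ when $X\in\B\smallsetminus\A$, or $X\notin\hat B$ when $X\notin\B$); for $X\in\Gamma_\Diamond$ it is enforced by the definition of $\pi$. Since $T\subseteq\A\Gamma^*$ is disjoint from $\wsetD{T}{=}{0}$, the invariance gives $\calP_v^{\sigma,\pi}(\R(T))=0$.

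The main obstacle is the invariance argument for $\pi$: after rewriting the top symbol, $X\alpha'\mapsto\beta\alpha'$, one must verify that no $\A$-symbol of $\beta\alpha'$ becomes preceded exclusively by $\B$-symbols. The check splits according to whether the offending $\A$-symbol sits inside $\beta$ (handled directly by $\beta\notin\B^*\A\Gamma^*$) or inside $\alpha'$ (handled by combining $\alpha'\notin\B^*\A\Gamma^*$ with the non-$\B$ symbol whose presence inside $\beta$ is guaranteed when $X\notin\B$), using $\A\subseteq\B$ throughout. Pushing this through for all three owners of $X$ and both membership cases of $X$ in $\B$ is the only real content beyond the fixed-point machinery of the two preceding propositions.
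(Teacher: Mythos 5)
Your proposal is correct and takes essentially the same route as the paper: decidability and the automata come from Propositions~\ref{prop-greater-zero-regular} and~\ref{prop-greater-than-zero}, $\sigma$ is the price-guided MD strategy from the proof of Proposition~\ref{prop-greater-than-zero} made regular by a bottom-up automaton tracking $\price$, and $\pi$ is the SMD strategy that keeps the configuration outside $\B^*\A\Gamma^*$ (your conditions $\beta\notin\B^*\A\Gamma^*$ and $\beta\notin\B^*\A\Gamma^*\cup\B^*$ coincide with the paper's regular expressions $(\B\smallsetminus\A)^*\cup(\B\smallsetminus\A)^*(\Gamma\smallsetminus\B)\Gamma^*$ and $(\B\smallsetminus\A)^*(\Gamma\smallsetminus\B)\Gamma^*$). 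The additional detail you supply --- the explicit price-update recurrence and the one-step invariance argument for $\pi$, which the paper dismisses as ``easy to check'' --- is accurate.
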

\begin{proof}
Due to Proposition~\ref{prop-greater-than-zero}, it only remains to show
that $\sigma$ is regular, $\pi$ is SMD, and both $\sigma$ and $\pi$
are effectively constructible in polynomial time. Observe
that the MD strategy $\sigma$
defined in the proof of Proposition~\ref{prop-greater-than-zero}
is $(T,{>}0)$-winning for player~$\Box$. Moreover, $\sigma$
is regular, because the $\price$ of a given configuration can 
be determined by an effectively constructible finite-state 
automaton which reads configurations from right to left. Since
the $\price$ of a given configuration is bounded by $2|\Gamma|$, 
the automaton needs only $\calO(|\Gamma|)$ control states and can
be easily computed in polynomial time.

A SMD $(T,{=}0)$-winning strategy $\pi$ for  player~$\Diamond$
is easy to construct. Consider a strategy $\pi$ such that for
every $X\alpha \in \Gamma_\Diamond\Gamma^*$ we have that
\begin{itemize}
\item if $X\in(\B\smallsetminus\A)$, then $\pi(X\alpha)$ selects 
  an edge $X\alpha \gtran{} \beta\alpha$ where $X\btran{} \beta$ and
  $\beta\in(\B\smallsetminus\A)^*\cup(\B\smallsetminus\A)^*(\Gamma\smallsetminus\B)\Gamma^*$;
\item if $X\in(\Gamma\smallsetminus\B)$, then $\pi(X\alpha)$ selects 
  an edge $X\alpha \gtran{} \beta\alpha$ where $X\btran{} \beta$ and
  $\beta\in(\B\smallsetminus\A)^*(\Gamma\smallsetminus\B)\Gamma^*$;
\item otherwise, $\pi$ is defined arbitrarily.
\end{itemize}
It is easy to check that $\pi$ is $(T,{=}0)$-winning in every configuration
of $\wsetD{T}{=}{0}= (\B\smallsetminus\A)^*\cup
 (\B\smallsetminus\A)^*(\Gamma\smallsetminus\B)\Gamma^*$.
\end{proof}

\begin{remark}
\label{rem-extend-case-zero}
Note that Theorem~\ref{thm-BPA-case-zero} holds also for the winning regions
$\wsetB{T_\varepsilon}{>}{0}$ and $\wsetD{T_\varepsilon}{=}{0}$. The argument
is particularly simple in the case of $\wsetD{T_\varepsilon}{=}{0}$, where
we only need to modify the strategy $\pi$ constructed in the proof of
Theorem~\ref{thm-BPA-case-zero} so that if
$X \in (\B \smallsetminus \A)$, then $\pi(X\alpha)$ selects an edge
$X\alpha \gtran{} \beta\alpha$ where $X \btran{} \beta$ and
$\beta \in (\B\smallsetminus\A)^*(\Gamma\smallsetminus\B)\Gamma^*$.
\end{remark}

\section{Computing the Regions 
   \protect{$\wsetB{T}{=}{1}$} and 
   \protect{$\wsetD{T}{<}{1}$}} 
\label{sec:one}

The results presented in this subsection constitute the very
core of this paper. The problems are more complicated
than in the case of $\wsetB{T}{>}{0}$ and $\wsetD{T}{=}{0}$, 
and several deep observations
are needed to tackle them. As in Section~\ref{sec:zero}, we 
fix a stochastic BPA game
\mbox{$\Delta = (\Gamma,\btran{},(\Gamma_\Box,\Gamma_\Diamond,%
\Gamma_\bigcirc),\Prob)$} and a simple set $T$ of target configurations
such that, for every $R \in \Gamma_T$, the only rule where $R$ 
appears on the left-hand side is $R \btran{} R$.

The regularity of the
sets $\wsetB{T}{=}{1}$ and $\wsetD{T}{<}{1}$ is revealed in the
next proposition.

\begin{proposition}
\label{prop-equal-one-regular}
  Let $\A = \wsetD{T_\varepsilon}{<}{1} \cap \Gamma$,
  $\B = \wsetB{T_\varepsilon}{=}{1} \cap \Gamma$,
  $\C = \wsetD{T}{<}{1} \cap \Gamma$, and
  $\scrD = \wsetB{T}{=}{1} \cap \Gamma$. Then
  $\wsetB{T}{=}{1} = \B^*\scrD \Gamma^*$ and 
  $\wsetD{T}{<}{1} = \C^*\A\Gamma^* \cup \C^*$.
\end{proposition}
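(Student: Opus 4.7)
The plan is to prove $\wsetB{T}{=}{1} = \B^*\scrD\Gamma^*$ directly by induction on $|\alpha|$, and then obtain the formula for $\wsetD{T}{<}{1}$ from determinacy (Theorem~\ref{thm-determinacy}) together with an algebraic rewriting of the complement. The base case $\alpha=\varepsilon$ is immediate, since $\varepsilon\not\in T$ and $\varepsilon$ self-loops, while $\varepsilon\not\in\B^*\scrD\Gamma^*$. In both directions of the inductive step the shift operators $[-\gamma]$ and $[+\gamma]$ from Proposition~\ref{prop-greater-zero-regular} will be the main technical device.

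For the inductive step $\alpha=X\gamma$, the ``$\supseteq$'' direction splits according to whether $X\in\scrD$ (lift a $(T,{=}1)$-winning strategy for $X$ via $[+\gamma]$; it reaches $T$ with probability~$1$ in the $X$-game and so never pops $X$) or $X\in\B$ and $\gamma\in\B^*\scrD\Gamma^*$ (combine a $(T_\varepsilon,{=}1)$-winning strategy for $X$ with the inductively available $(T,{=}1)$-winning strategy for $\gamma$, switching at the moment $X$ is popped). For the ``$\subseteq$'' direction, fix $X\gamma\in\wsetB{T}{=}{1}$ with Box winning strategy $\sigma$. If $X\in\scrD$ we are done, so assume $X\notin\scrD$. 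First I establish $X\in\B$: for an arbitrary Diamond strategy $\bar\pi$ in the $X$-game, lift it to $X\gamma$ by playing its shift while $\gamma$ is at the bottom of the stack and arbitrarily afterwards. Since reaching $T$ in $X\gamma$ entails either reaching $T$ before $X$ is popped or popping $X$ first, the shift gives
\[
\calP^{\sigma[-\gamma],\bar\pi}_X(\R(T_\varepsilon)) \;\geq\; \calP^{\sigma,\pi}_{X\gamma}(\R(T)) \;=\; 1,
\]
so $\sigma[-\gamma]$ is $(T_\varepsilon,{=}1)$-winning from $X$ and hence $X\in\B$.

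The hard part is to show $\gamma\in\wsetB{T}{=}{1}$; by the inductive hypothesis this will give $\gamma\in\B^*\scrD\Gamma^*$ and therefore $X\gamma\in\B\cdot\B^*\scrD\Gamma^*\subseteq\B^*\scrD\Gamma^*$. I argue by contradiction. Suppose $\gamma\notin\wsetB{T}{=}{1}$; by determinacy (Theorem~\ref{thm-determinacy}) pick a Diamond strategy $\pi^\star$ in the $\gamma$-game with $\calP^{\sigma',\pi^\star}_\gamma(\R(T)) < 1$ for every $\sigma'$. Since $X\notin\scrD$ but $\sigma[-\gamma]$ is $(T_\varepsilon,{=}1)$-winning in $X$, some Diamond strategy $\pi_0$ witnesses $\calP^{\sigma[-\gamma],\pi_0}_X(\R(\varepsilon)) = p > 0$. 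Assemble the Diamond strategy $\pi^\dagger$ in $X\gamma$ that plays the lift of $\pi_0$ until $X$ is popped and then switches to $\pi^\star$. Because $\sigma[-\gamma]$ is $(T_\varepsilon,{=}1)$-winning, the pre-pop contribution to $\calP^{\sigma,\pi^\dagger}_{X\gamma}(\R(T))$ is exactly $1-p$, and the post-pop contribution decomposes as $\sum_h \mu(h)\cdot\calP^{\sigma_h,\pi^\star}_\gamma(\R(T))$, where $h$ ranges over the countably many finite histories ending in the pop, $\mu(h)$ is the probability of reaching configuration $\gamma$ via history $h$ under $(\sigma,\pi^\dagger)$ (so $\sum_h\mu(h)=p$), and $\sigma_h$ is the continuation of $\sigma$. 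Each summand $\mu(h)\cdot\calP^{\sigma_h,\pi^\star}_\gamma(\R(T))$ is strictly below $\mu(h)$ whenever $\mu(h)>0$, and at least one such $h$ exists; hence the second sum is strictly less than~$p$. Thus $\calP^{\sigma,\pi^\dagger}_{X\gamma}(\R(T)) < 1$, contradicting $\sigma$ being $(T,{=}1)$-winning.

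For $\wsetD{T}{<}{1}$, determinacy gives $\wsetD{T}{<}{1} = \Gamma^*\smallsetminus\B^*\scrD\Gamma^*$. A direct computation of this complement (using $\scrD\subseteq\B$, which holds because $T\subseteq T_\varepsilon$) yields $(\B\smallsetminus\scrD)^*\cup(\B\smallsetminus\scrD)^*(\Gamma\smallsetminus\B)\Gamma^*$. Determinacy applied to the objectives $(T_\varepsilon,{=}1)$ and $(T,{=}1)$ gives $\A=\Gamma\smallsetminus\B$ and $\C=\Gamma\smallsetminus\scrD$, so $\B\smallsetminus\scrD=\B\cap\C$, and the expression becomes $(\B\cap\C)^*\cup(\B\cap\C)^*\A\Gamma^*$. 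A short rewriting then shows this equals $\C^*\A\Gamma^*\cup\C^*$: any word in $\C^*$ either has all letters in $\B\cap\C$, or its first letter outside $\B$ lies in $\C\cap\A\subseteq\A$ and is preceded only by letters in $\B\cap\C$, and conversely $\B\cap\C\subseteq\C$.
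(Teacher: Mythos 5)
Your proof is correct, and for the equality $\wsetB{T}{=}{1} = \B^*\scrD\Gamma^*$ it follows the same basic route as the paper: induction on the length of the configuration, with the shift operators $[-\gamma]$ and $[+\gamma]$ doing the work in both directions. Two points differ. First, in the ``$\subseteq$'' inductive step the paper splits on whether some winning $\sigma$ reaches $T$ before popping $X$ with probability~$1$, and then merely \emph{observes} that $\sigma$ is $(T,{=}1)$-winning in the tail $\alpha$; your version instead assumes $\gamma\notin\wsetB{T}{=}{1}$, invokes determinacy to obtain a spoiling strategy $\pi^\star$ from $\gamma$, and composes it with a strategy that pops $X$ with probability $p>0$ to force $\calP^{\sigma,\pi^\dagger}_{X\gamma}(\R(T))<(1-p)+p=1$. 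This is a legitimate and in fact more fully justified rendering of the step the paper leaves implicit (the decomposition over first-hitting histories $h$ and the strict inequality in the countable sum are both sound, using that $T$ and $\varepsilon$ are disjoint absorbing sets). Second, for $\wsetD{T}{<}{1}=\C^*\A\Gamma^*\cup\C^*$ the paper runs a second, parallel induction (``the proof of the other equality is similar''), whereas you complement the regular expression $\B^*\scrD\Gamma^*$ using Theorem~\ref{thm-determinacy} together with $\scrD\subseteq\B$ and the identities $\A=\Gamma\smallsetminus\B$, $\C=\Gamma\smallsetminus\scrD$; your language-theoretic rewriting of the complement into $(\B\cap\C)^*\cup(\B\cap\C)^*\A\Gamma^*$ and then into $\C^*\A\Gamma^*\cup\C^*$ checks out. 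The complementation route avoids duplicating the induction at the cost of leaning on determinacy, but the paper itself invokes exactly these determinacy consequences immediately after the proposition, so nothing circular or otherwise problematic is introduced.
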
 
\begin{proof}
We prove just the equality $\wsetB{T}{=}{1} = \B^*\scrD \Gamma^*$
(a proof of the other equality is similar).
By induction on the length of $\alpha\in\Gamma^*$,
we show that $\alpha\in\wsetB{T}{=}{1}$ iff $\alpha\in\B^*\scrD\Gamma^*$,
using the notation $\sigma[{-}\alpha]$ and $\sigma[{+}\alpha]$
that was introduced in the proof of 
Proposition~\ref{prop-greater-zero-regular}.
For $\alpha=\varepsilon$, both sides of the equivalence are false.
Now assume that the equivalence holds for all configurations of
length $k$, and consider an arbitrary $X\alpha\in\Gamma^+$ where
$|\alpha|=k$. If $X\alpha\in\wsetB{T}{=}{1}$, we distinguish two 
possibilities: 
\begin{itemize}

\item There is a strategy $\sigma\in\Sigma$ such that for all $\pi\in\Pi$,
the probability of reaching $T$ from $X\alpha$ without prior reaching 
$\alpha$ is $1$ in $G_\Delta(\sigma,\pi)$.
Then $\sigma[{-}\alpha]$ is $(T,{=}1)$-winning in $X$, 
which means that $X\in\wsetB{T}{=}{1}$, i.e., $X\in\scrD$.

\item There is a strategy $\sigma\in\Sigma$ such that for all $\pi\in\Pi$,
the probability of reaching $T$ from $X\alpha$ in the 
play $G_\Delta(\sigma,\pi)$ is $1$, but for some $\hat{\pi} \in \Pi$, the
configuration $\alpha$ is reached with a positive probability before
reaching $T$. In this case, consider again the strategy 
$\sigma[{-}\alpha]$, which is 
\mbox{$(T_\varepsilon,{=}1)$-winning} in $X$ and hence $X\in\B$.
Moreover, observe that the strategy $\sigma$ is $(T,{=}1)$-winning in 
$\alpha$.
Hence, $\alpha\in\wsetB{T}{=}{1}$ and by applying induction hypothesis
we obtain $\alpha\in\B^*\scrD\Gamma^*$.
\end{itemize}

For the opposite direction, we assume $X\alpha\in\B^*\scrD\Gamma^*$, and 
distinguish the following possibilities:
\begin{itemize}
\item $X\in\scrD$ and there is a $(T,{=}1)$-winning strategy
  $\sigma\in\Sigma$ in $X$. Then  $\sigma[{+}\alpha]$ 
  is $(T,{=}1)$-winning in $X\alpha$. Thus,
  $X\alpha\in\wsetB{T}{=}{1}$.
\item  $X\in\B$ and $\alpha\in\B^*\scrD\Gamma^*$.
  Then there is a
$(T_\varepsilon,{=}1)$-winning strategy $\sigma_1\in\Sigma$ in $X$.
By applying induction hypothesis, there is a
$(T,{=}1)$-winning strategy $\sigma_2\in\Sigma$ in $\alpha$.
Now we can set up a $(T,{=}1)$-winning strategy in $X\alpha$,
which behaves like $\sigma_1[{+}\alpha]$ until $\alpha$ is reached,
and from that point on it behaves like $\sigma_2$.
Hence, $X\alpha\in\wsetB{T}{=}{1}$.
\qed
\end{itemize}
\renewcommand{\qed}{}
\end{proof}

\label{page-discussion-B-C}
By Theorem~\ref{thm-determinacy}, $\B = \Gamma \smallsetminus \A$
and $\scrD = \Gamma \smallsetminus \C$. Hence, it suffices to compute
the sets $\A$ and $\C$. Further, observe that if the set $\A$ 
is computable for an arbitrary stochastic BPA game, then the
set $\C$ is also computable with the same complexity.
This is because
$X \in \wsetD{T}{<}{1}$ iff 
$\tilde{X} \in \wsetD{\tilde{T}_\varepsilon}{<}{1}$, where
$\wsetD{\tilde{T}_\varepsilon}{<}{1}$ is considered in 
a stochastic BPA
game $\tilde{\Delta}$  obtained from $\Delta$ by adding 
two fresh symbols $\tilde{X}$ and $Z$ to $\Gamma_\bigcirc$
together with the rules 
$\tilde{X} \btran{1} XZ$, $Z \btran{1} Z$, and setting 
$\tilde{T} = T$.
Hence, the core of the whole problem is to design an algorithm which
computes the set $\A$. 

In the next definition we introduce the crucial notion of a 
\emph{terminal} set of stack symbols, which plays a key role
in our considerations. 

\begin{definition}
  A set $M \subseteq \Gamma$ is \emph{terminal} if the following
  conditions are satisfied:
  \begin{itemize}
  \item $\Gamma_T \cap M = \emptyset$;
  \item for every $Z \in M \cap (\Gamma_\Box \cup \Gamma_\bigcirc)$
     and every rule of the form $Z \btran{} \alpha$ we have that
     $\alpha \in M^*$;
  \item for every $Z \in M \cap \Gamma_\Diamond$
     there is a rule $Z \btran{} \alpha$ such that $\alpha \in M^*$.
  \end{itemize}
\end{definition}

Since the empty set is terminal and the union of two terminal sets
is terminal, there is the greatest terminal set that will be
denoted by~$C$ in the rest of this section. Also note that $C$
determines a stochastic BPA game $\Delta_C$ obtained from 
$\Delta$ by restricting the set of stack symbols to $C$ and 
including all rules $X \btran{} \alpha$ where $X,\alpha \in C^*$.
The set of rules of $\Delta_C$ is denoted by $\btran[C]{}$.
The probability of stochastic rules in $\Delta_C$ is the same as 
in~$\Delta$.

\begin{definition}
\label{def-witness}
  A stack symbol $Y \in \Gamma$ is a \emph{witness} if one of the
  following conditions is satisfied:
  \begin{itemize}
  \item[(1)] $Y \in \wsetD{T_\varepsilon}{=}{0}$;
  \item[(2)] $Y \in C$ and $Y \in \wsetD{\varepsilon}{<}{1}$, where
     the set $\wsetD{\varepsilon}{<}{1}$ is computed in $\Delta_C$.
  \end{itemize}
  The set of all witnesses is denoted by $W$.
\end{definition}

In the next lemma we show that every witness belongs to the set~$\A$.

\begin{lemma}
\label{lem-witness-strategy}
  The problem whether $Y \in W$ for a given $Y \in \Gamma$ is in 
  $\NP \cap \coNP$. Further, there is a SMD strategy $\pi \in \Pi$
  constructible by a deterministic polynomial-time algorithm 
  with $\NP \cap \coNP$
  oracle such that for all $Y \in W$ and $\sigma \in \Sigma$ we
  have that $\calP_{Y}^{\sigma,\pi}(\R(T_\varepsilon)) < 1$.
\end{lemma}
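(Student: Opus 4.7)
The plan is to establish both claims in parallel via an explicit construction.

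First, I compute the greatest terminal set $C$ in polynomial time by iteratively removing from $\Gamma$ every symbol that violates one of the three conditions of terminality. The two defining clauses of $W$ can then be decided separately: clause~(1), $Y \in \wsetD{T_\varepsilon}{=}{0}$, is in $\PTIME$ by Theorem~\ref{thm-BPA-case-zero} combined with Remark~\ref{rem-extend-case-zero}; clause~(2), $Y \in \wsetD{\varepsilon}{<}{1}$ computed in $\Delta_C$, is the qualitative termination problem for stochastic BPA games and hence lies in $\NP \cap \coNP$ by the cited results of~\cite{EY:RMDP-efficient}. Consequently $Y \in W$ is decidable in $\NP \cap \coNP$.

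Next, Theorem~\ref{thm-BPA-case-zero} and Remark~\ref{rem-extend-case-zero} also supply a SMD Diamond strategy $\pi_1$ that is $(T_\varepsilon,{=}0)$-winning on every configuration of $\wsetD{T_\varepsilon}{=}{0}$, while the results of~\cite{EY:RMDP-efficient} supply an SMD Diamond strategy $\pi_2$ in $\Delta_C$ ensuring termination probability strictly less than~$1$ from every $Y \in C$ satisfying clause~(2); both are effectively constructible in deterministic polynomial time with an $\NP \cap \coNP$ oracle. I combine these into one SMD strategy~$\pi$: for each $X \in \Gamma_\Diamond$, if $X \in \Gamma \setminus \B$ (equivalently, the single-symbol configuration $X$ already lies in $\wsetD{T_\varepsilon}{=}{0}$), set $\pi(X\alpha) = \pi_1(X\alpha)$; else if $X \in C$, set $\pi(X\alpha) = \pi_2(X)$, which is a rule of $\Delta_C$ and thus has right-hand side in $C^*$; otherwise define $\pi(X\alpha)$ arbitrarily.

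For correctness, I distinguish two cases. For a type-(1) witness $Y \in \Gamma \setminus \B$, the strategy begins in $\pi_1$-mode, and using the structural description $\wsetD{T_\varepsilon}{=}{0} = (\B \setminus \A)^*(\Gamma \setminus \B)\Gamma^*$ from Proposition~\ref{prop-greater-zero-regular} I argue that the invariant ``the current configuration lies in $\wsetD{T_\varepsilon}{=}{0}$'' is preserved during play, giving probability zero of reaching $T_\varepsilon$. For a type-(2)-only witness $Y \in (\B \setminus \A) \cap C$ satisfying clause~(2), the key fact is that terminality of $C$ forces every rule of each $X \in C \cap (\Gamma_\Box \cup \Gamma_\bigcirc)$ to have right-hand side in $C^*$; combined with $\pi_2$'s $C^*$-preserving choices, the run stays in $C^*$ for as long as $\pi_2$ remains in use. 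Since $C \cap \Gamma_T = \emptyset$, the only way to reach $T_\varepsilon$ is by termination, which has probability strictly less than~$1$ by the guarantee of $\pi_2$.

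The main obstacle is the bookkeeping when $\pi$ switches between $\pi_1$ and $\pi_2$, which can happen because a top symbol in $C \cap (\Gamma \setminus \B)$ triggers $\pi_1$ while one in $C \cap (\B \setminus \A)$ triggers $\pi_2$. The crucial structural observation $C \subseteq \Gamma \setminus \A$---since from any $C$-configuration Diamond can keep the play within $C^*$ and thereby avoid $\Gamma_T$ entirely---ensures that any switch from $\pi_2$ to $\pi_1$ occurs at a configuration that is already in $\wsetD{T_\varepsilon}{=}{0}$, and any switch from $\pi_1$ to $\pi_2$ stays within configurations where one of the two invariants continues to bound the $T_\varepsilon$-probability. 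Working these facts out in detail yields $\calP_Y^{\sigma,\pi}(\R(T_\varepsilon)) < 1$ for every $Y \in W$ and every $\sigma \in \Sigma$.
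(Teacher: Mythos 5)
Your decomposition of the decision problem and your choice of ingredients are exactly the paper's: type-(1) membership is in $\PTIME$ via Theorem~\ref{thm-BPA-case-zero}, type-(2) membership is in $\NP\cap\coNP$ via the termination results of \cite{EY:RMDP-efficient} applied to $\Delta_C$, and the witness strategy is a combination of the $(T_\varepsilon,{=}0)$-winning SMD strategy with the $(\varepsilon,{<}1)$-winning SMD strategy of $\Delta_C$. The problem is your combination rule. The clause ``otherwise define $\pi(X\alpha)$ arbitrarily'' covers the symbols of $\Gamma_\Diamond\cap(\B\smallsetminus C)$, and such a symbol can perfectly well occur as the top symbol of a configuration of $\wsetD{T_\varepsilon}{=}{0}=(\B\smallsetminus\A)^*(\Gamma\smallsetminus\B)\Gamma^*$ reached during a play from a type-(1) witness. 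For $X\in\B\smallsetminus\A$ only \emph{some} rule $X\btran{}\beta$ keeps the play inside $\wsetD{T_\varepsilon}{=}{0}$, while other rules may lead into $\B^*\A\Gamma^*\subseteq\wsetB{T_\varepsilon}{>}{0}$; an arbitrary choice there destroys the invariant on which your whole type-(1) argument rests, and nothing else in your construction bounds the probability afterwards. Concretely: let $\Gamma_T=\{R\}$ with $R\btran{1}R$, let $V\btran{1}R$ and $N\btran{1}N$ be random, let $U\in\Gamma_\Box$ with $U\btran{}NV$ and $U\btran{}\varepsilon$, let $X\in\Gamma_\Diamond$ with $X\btran{}U$ and $X\btran{}V$, and let $Y$ be random with $Y\btran{1}XN$. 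Then $C=\{N\}$, $U,X\in\B\smallsetminus\A$, and $Y\in\Gamma\smallsetminus\B$ is a type-(1) witness; but $X\notin\Gamma\smallsetminus\B$ and $X\notin C$, so your $\pi$ is ``arbitrary'' at $XN$, and if it selects $X\btran{}V$ then $T$ is reached from $Y$ with probability~$1$, contradicting the conclusion of the lemma.

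The repair is small and brings you back in line with what the paper implicitly does: at every $X\in\Gamma_\Diamond\cap(\B\smallsetminus\A)$ not governed by $\pi_2$, the strategy must select a rule $X\btran{}\beta$ with $\beta\in\wsetD{T}{=}{0}$ (such a rule exists by Proposition~\ref{prop-greater-than-zero}, and it preserves membership in $\wsetD{T_\varepsilon}{=}{0}$ of the whole configuration); this is precisely the role of the strategy of Remark~\ref{rem-extend-case-zero}, which is defined on all top symbols of $\Gamma\smallsetminus\A$, not only on $\Gamma\smallsetminus\B$. With that change your argument goes through: your observation that $C\subseteq\Gamma\smallsetminus\A$ is correct and does show that the $\pi_2$-moves, which push words of $C^*$, also preserve membership in $\wsetD{T_\varepsilon}{=}{0}$, so your priority rule (letting $\pi_1$ win on symbols that are simultaneously type-(1) and type-(2), where the paper gives priority to the termination-avoiding strategy on all type-(2) witnesses) is a legitimate alternative that even yields the slightly stronger conclusion $\calP_Y^{\sigma,\pi}(\R(T_\varepsilon))=0$ for type-(1) witnesses.
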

\begin{proof}
  Let $W_2$ be the set of all type~(2) witnesses of~$\Delta$, and let
  $W_1$ be the set of all type~(1) witnesses that are not type~(2) 
  witnesses (see Definition~\ref{def-witness}).

  Let us first consider the BPA game $\Delta_C$ (note that $\Delta_C$
  is constructible in polynomial time). By the results of 
  \cite{EY:RMDP-efficient}, there are SMD strategies $\sigma'$ and $\pi'$ 
  in $G(\Delta_C)$ such that $\sigma'$ 
  is $(\varepsilon,{=}1)$-winning in every configuration of
  $\wsetB{\varepsilon}{=}{1}$ and $\pi'$ is $(\varepsilon,{<}1)$-winning
  in every configuration of $\wsetD{\varepsilon}{<}{1}$ (here
  the sets $\wsetB{\varepsilon}{=}{1}$ and $\wsetD{\varepsilon}{<}{1}$
  are considered in $\Delta_C$). In \cite{EY:RMDP-efficient}, it is also 
  shown that the problem whether a given SMD strategy is
  $(\varepsilon,{=}1)$-winning (or $(\varepsilon,{<}1)$-winning)
  in every configuration of $\wsetB{\varepsilon}{=}{1}$
  (or $\wsetD{\varepsilon}{<}{1}$) is decidable in polynomial time.
  Hence, the problem whether a given $Y \in \Gamma$ belongs to $W_2$
  is in $\NP \cap \coNP$,
  and the strategy $\pi'$ is constructible by an algorithm which
  successively fixes one of the available rules for every  
  $Y \in \Gamma_\Diamond \cap C$ so that the set $\wsetD{\varepsilon}{<}{1}$
  remains unchanged when all of the other rules with $Y$ on the left-hand
  side are removed from $\Delta_C$. Obviously, this algorithm needs 
  only $\calO(|\Delta_C|)$ time to fix such a rule for every 
  $Y \in \Gamma_\Diamond \cap C$ (i.e., to construct the strategy $\pi'$)
  if it is equipped with a $\NP \cap \coNP$ oracle which can be used to
  verify that the currently considered rule is a correct one.

  The strategy $\pi'$ can also be applied in the game $G(\Delta)$
  (for every $Z \in \Gamma_\Diamond \smallsetminus C$ we just define
  $\pi'(Z)$ arbitrarily). Since $\Gamma_T \cap C = \emptyset$, 
  for all $Y \in W_2$ and $\sigma \in \Sigma$
  we have that $\calP_{Y}^{\sigma,\pi'}(\R(T_\varepsilon)) < 1$.
  
  The remaining witnesses of $W_1$ can be discovered in polynomial time, and 
  there is a SMD strategy $\pi'' \in \Pi$ constructible 
  in polynomial time such that for all $Y \in W_1$ and
  $\sigma \in \Sigma$ we have that
  $\calP_{Y}^{\sigma,\pi''}(\R(T_\varepsilon)) = 0$ or
  $\calP_{Y}^{\sigma,\pi''}(\R(W_2\Gamma^*)) > 0$. This follows
  directly from Theorem~\ref{thm-BPA-case-zero} and 
  Remark~\ref{rem-extend-case-zero}. 

  The strategy $\pi$ is constructed simply by ``combining'' the
  strategies $\pi'$ and $\pi''$. That is, $\pi$ behaves like $\pi'$
  (or $\pi''$) in all configurations $Y\alpha$ where $Y \in W_2$ 
  (or $Y \in W_1$).
\end{proof}

Due to Lemma~\ref{lem-witness-strategy}, we have that $W \subseteq \A$. 
One may be tempted
to think that the set $\A$ is just the \emph{attractor} of
$W$, denoted $\Att(W)$, which consists of all stack symbols from which 
player~$\Diamond$ can enforce visiting a witness with a positive 
probability. However, this (natural) hypothesis
is false, as demonstrated by the following example:

\begin{example}
\label{exa-no-reach}
Consider a stochastic BPA game 
$\hat{\Delta} = (\{X,Y,Z,R\},\btran{},(\{X\},\emptyset,
\{Y,Z,R\}),\Prob)$, where $X \btran{} X$, $X \btran{} Y$, $X \btran{} Z$,
$Y \btran{1} Y$, $Z \btran{1/2} Y$, $Z \btran{1/2} R$, $R \btran{1} R$,
and the set $T_\Gamma$ contains just $R$. The game is initiated in $X$,
and the relevant part of $G_{\hat{\Delta}}$
(reachable from $X$) is shown in the following figure:
\begin{center}
\begin{tikzpicture}[x=1.5cm,y=1.5cm,font=\small]
\node (Y) at (0,0)   [ran] {$Y$};
\node (X) at (1,0)   [max] {$X$};
\node (Z) at (2,0)   [ran] {$Z$};
\node (R) at (3,0)   [ran] {$R$};
\draw [transition] (Y) to [loop left]  node[left]  {$1$} (Y); 
\draw [transition] (X) to [loop above] (X); 
\draw [transition] (X) to (Y); 
\draw [transition] (X) to (Z); 
\draw [transition] (Z) to node[above] {$\frac{1}{2}$} (R); 
\draw [transition] (R) to [loop right] node[right]  {$1$} (R); 
\draw [transition, rounded corners] (Z) -- node[right] {$\frac{1}{2}$} +(0,-.5) 
  -- +(-2,-.5) -- (Y); 
\end{tikzpicture}
\end{center}
Observe that $\A = \{X,Y,Z\}$, $C = W = \{Y\}$, but $\Att(\{Y\}) = \{Z,Y\}$.
\end{example}
The problem is that, in general, 
player~$\Box$ cannot be ``forced'' to enter $\Att(W)$ (in 
Example~\ref{exa-no-reach}, player~$\Box$ can always select the rule
$X \btran{} X$ and thus avoid entering $\Att(\{Y\})$). Nevertheless,
observe that player~$\Box$ has essentially only two options: 
she either enters a symbol of $\Att(W)$, or avoids visiting the symbols
of $\Att(W)$ completely. The second possibility is analyzed by
``cutting off'' the set $\Att(W)$ from the considered BPA game,
and  recomputing the set of all witnesses together with its attractor 
in the resulting BPA game which is smaller than the original one.
In Example~\ref{exa-no-reach}, we ``cut off'' the attractor
$\Att(\{Y\})$ and thus obtain a smaller BPA game with just one
symbol $X$ and the rule $X \btran{} X$. Since that $X$ is a witness 
in this game, it can be safely added to the set $\A$.
In general, the algorithm for computing the set $\A$
proceeds by putting $\A := \emptyset$ and then repeatedly computing 
the set $\Att(W)$, setting $\A :=  \A \cup \Att(W)$, and ``cutting off''
the set $\Att(W)$ from the game. This goes on until the set $\Att(W)$ 
becomes empty. 

We start by demonstrating that if $\A \neq \emptyset$ then there is at 
least one witness.  This is an important (and highly non-trivial) result,
whose proof is postponed to Section~\ref{sec:witness-proof}.

\begin{proposition}
  If $\A \neq \emptyset$, then $W \neq \emptyset$. 
\label{prop:witnesses}
\end{proposition}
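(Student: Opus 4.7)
The plan is to prove the proposition by contraposition: assume $W=\emptyset$ and derive $\A=\emptyset$, i.e., show that from every $Y\in\Gamma$ player~$\Box$ has a strategy forcing $T_\varepsilon$ with probability~$1$. Since no symbol is a type-(1) witness, $\wsetD{T_\varepsilon}{=}{0}\cap\Gamma=\emptyset$, so Theorem~\ref{thm-determinacy} applied to the qualitative objective $(T_\varepsilon,{>}0)$ gives $\Gamma\subseteq\wsetB{T_\varepsilon}{>}{0}$; Theorem~\ref{thm-BPA-case-zero} together with Remark~\ref{rem-extend-case-zero} then supplies a regular $\Box$-strategy $\sigma_+$ that is $(T_\varepsilon,{>}0)$-winning from every configuration. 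Since no $Y\in C$ is a type-(2) witness, the cited results of~\cite{EY:RMDP-efficient} applied inside the BPA game $\Delta_C$ supply a SMD $\Box$-strategy $\sigma_C$ such that $\calP_Y^{\sigma_C,\pi'}(\R(\varepsilon))=1$ in $\Delta_C$ for every $Y\in C$ and every $\Diamond$-strategy $\pi'$ in $\Delta_C$; crucially, $\sigma_C$ only selects rules available in $\Delta_C$, i.e., of the form $X\btran{}\alpha$ with $X,\alpha\in C^*$, so its moves keep the play inside $C^*$ as long as $\Diamond$ cooperates.

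I would then combine these into a single $\Box$-strategy $\sigma$ in $\Delta$ that plays $\sigma_C$ whenever the top-of-stack symbol is in $C\cap\Gamma_\Box$ and plays $\sigma_+$ (possibly with bounded auxiliary memory) whenever it is in $(\Gamma\smallsetminus C)\cap\Gamma_\Box$. Against any $\Diamond$-response $\pi$, a play under $\sigma$ alternates between a ``$C$-regime''---the top is a $C$-symbol and $\Diamond$ uses only $\Delta_C$-rules, so the play mimics a play of $\Delta_C$ in which $\sigma_C$ terminates almost surely, eventually popping the $C$-prefix down to expose whatever lies below---and a ``$\bar C$-regime'' entered whenever a non-$C$ symbol reaches the top, either from the outset, or after $\Diamond$ uses a $\Delta_C$-external rule at a $C\cap\Gamma_\Diamond$-vertex to push non-$C$ symbols onto the stack; there $\sigma_+$ takes over.

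The core of the argument is a uniform-progress lemma: there exist $n\in\Nset$ and $p>0$ such that, whenever the top of the stack is in $\Gamma\smallsetminus C$, the combined strategy reaches $T_\varepsilon$ within the next $n$ steps with probability at least $p$, regardless of $\pi$ and of what lies below the top. Granted this lemma, a Borel--Cantelli-style bookkeeping over the almost-surely infinite sequence of $\bar C$-regime entries, combined with $\sigma_C$'s almost-sure termination inside $C$-regimes, shows that $T_\varepsilon$ is reached with probability~$1$, contradicting $Y\in\A$. The hard part, and the reason the proof is deferred to Section~\ref{sec:proofs}, is establishing this uniform bound: $\sigma_+$ a priori guarantees only strictly positive reachability probability, not a uniform one, and $\Diamond$ may defeat naive iterations by persistently pushing fresh non-$C$ symbols on top to reset the attempt. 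I would attack this by induction on the rank of the top non-$C$ symbol in the least-fixed-point characterization dual to the greatest-fixed-point $C$: each non-$C$ symbol has a rank bounded by $|\Gamma|$, with $\Gamma_T$ as the trivial base case, and at each inductive step the regularity of $\sigma_+$ together with the finite branching of $\Delta$ should yield a uniform $p_r$ at rank level $r$, whence a global $p=\min_r p_r$ suffices.
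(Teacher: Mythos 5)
Your high-level plan --- contraposition, a combined strategy for player~$\Box$ that plays an almost-surely terminating strategy of $\Delta_C$ on $C$-prefixes and a positive-reachability strategy otherwise, and a progress argument showing $T_\varepsilon$ is hit almost surely --- is the same as the paper's (Proposition~\ref{prop:no-wit-reg-win}). But the two steps where the actual work lies both have gaps. Your uniform-progress lemma is in fact true, yet not by the argument you propose and not for the strategy you play. It holds because when $W=\emptyset$ we have $\Gamma=\B$ (no type-(1) witnesses plus determinacy), and then the fixed-point characterization of Proposition~\ref{prop-greater-than-zero} shows that $\B\smallsetminus\A$ is a terminal set, hence $\B\smallsetminus\A\subseteq C$ and every $X\in\Gamma\smallsetminus C$ lies in $\A=\wsetB{T}{>}{0}\cap\Gamma$; Proposition~\ref{prop:stack-bound} then gives, for the \emph{uniform} strategy, a path from $X\alpha$ into $T$ of length at most $2^{2|\Gamma|}$ and probability at least a fixed $\xi>0$, independently of $\alpha$. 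Your proposed induction on the rank of a non-$C$ symbol in the dual least fixed point does not deliver this: a symbol acquires rank $r{+}1$ because some rule leads to a word containing a rank-$\le r$ symbol \emph{somewhere}, possibly buried under symbols that cannot be popped within bounded time or with uniformly positive probability, so the rank does not decrease along any controllable bounded path. Worse, the strategy $\sigma_+$ you actually play on non-$C$ tops is the price-based $(T_\varepsilon,{>}0)$-winning strategy of Theorem~\ref{thm-BPA-case-zero}, which may head for $\varepsilon$ by popping an arbitrarily long prefix before ever aiming at $T$, so it admits no step bound uniform in the stack content; this is precisely why the paper plays uniformly at random whenever the $C^*$-prefix is shorter than $2|\Gamma|$ and switches to the termination strategy only above that height.

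Second, the sequence of entries into your ``$\bar C$-regime'' is not almost surely infinite, so the Borel--Cantelli bookkeeping does not apply directly. The runs that escape it are those that eventually keep a $C$-symbol on top forever while player~$\Diamond$ randomizes between rules of $\Delta_C$ and rules leaving it; for these you cannot simply invoke the almost-sure termination of $\sigma_C$, because conditioning on never leaving $\Delta_C$ distorts the transition probabilities. One must prove that the non-terminating runs which eventually use only rules of $\btran[C]{}$ form a null set, via an explicit projection of the play onto $G_{\Delta_C}$ with suitably re-normalized strategies --- this is the content of the paper's Lemma~\ref{lem:wit-3} together with the decomposition into the sets $U_v$, and it is absent from your sketch.
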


In other words, the non-emptiness of $\A$ is always certified by at least
one witness, and hence each stochastic BPA game with a non-empty $\A$
can be made smaller by ``cutting off'' $\Att(W)$. The procedure which ``cuts
off'' the symbols $\Att(W)$ is not completely trivial. A naive idea of
removing the symbols of $\Att(W)$ together with the rules where they appear
(this was used for the stochastic BPA game of Example~\ref{exa-no-reach})
does not always work. This is illustrated in the following example:

\begin{example}
\label{ex:cut}
  Consider a stochastic BPA game 
  $\hat{\Delta} = (\{X,Y,Z,R\},\btran{},(\{X\},\emptyset,\{Y,Z,R\}),\Prob)$,
  where 
  \[
  X \btran{} X,\ X \btran{} Y,\ X \btran{} ZY,\
  Y \btran{1} Y,\ Z \btran{1/2} X,\ Z \btran{1/2} R,\ R \btran{1} R
  \]
  and ${\hat{\Gamma}}_T =  \{R\}$. The game is initiated in $X$ 
  (see Fig.~\ref{fig:ex_cut}).
  We have that $\A = \{Y\}$ (observe that 
  $X,Z,R \in \wsetB{T_\varepsilon}{=}{1}$, because the strategy $\sigma$ 
  of player~$\Box$ which always selects the rule  $X \btran{} ZY$
  is $(T,{=}1)$-winning). Further, we have that 
  $C = W = \Att(W) = \{Y\}$. If we remove
  $Y$ together with all rules where $Y$ appears, we obtain the game
  $\Delta' = (\{X,Z,R\},\btran{},(\{X\},\emptyset,\{Z,R\}),\Prob)$,
  where $X \btran{} X$, $Z \btran{1/2} X$, $Z \btran{1/2} R$, $R \btran{1} R$.
  In the game $\Delta'$, $X$ becomes a witness and hence the
  algorithm would incorrectly put $X$ into $\A$.
\end{example}

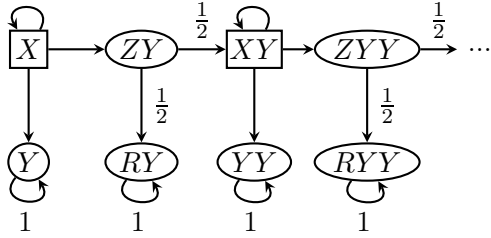
\begin{figure}[ht]
  \centering
\begin{tikzpicture}[x=1.5cm,y=1.5cm,font=\small]
\node (X)   at (0,1) [max] {$X$};
\node (ZY)  at (1,1) [ran] {$ZY$};
\node (XY)  at (2,1) [max] {$XY$};
\node (ZYY) at (3,1) [ran] {$ZYY$};
\node (Y)   at (0,0) [ran] {$Y$};
\node (RY)  at (1,0) [ran] {$RY$};
\node (YY)  at (2,0) [ran] {$YY$};
\node (RYY) at (3,0) [ran] {$RYY$};
\node (etc) at (4,1) [] {...};
\draw [transition] (X) to [loop above] (X); 
\draw [transition] (XY) to [loop above] (XY); 
\draw [transition] (Y) to [loop below] node[below] {1} (Y); 
\draw [transition] (RY) to [loop below] node[below] {1} (RY); 
\draw [transition] (YY) to [loop below] node[below] {1} (YY); 
\draw [transition] (RYY) to [loop below] node[below] {1} (RYY); 
\draw [transition] (X) to (ZY); 
\draw [transition] (X) to (Y); 
\draw [transition] (ZY) to node[above] {$\frac{1}{2}$} (XY); 
\draw [transition] (ZY) to node[right] {$\frac{1}{2}$} (RY); 
\draw [transition] (XY) to (ZYY); 
\draw [transition] (XY) to (YY); 
\draw [transition] (ZYY) to node[right] {$\frac{1}{2}$} (RYY); 
\draw [transition] (ZYY) to node[above] {$\frac{1}{2}$} (etc);
\end{tikzpicture}
  \caption{The game of Example~\ref{ex:cut}}
  \label{fig:ex_cut}
\end{figure}

Hence, the ``cutting'' procedure must be designed more carefully. 
Intuitively, we do not remove rules of the form $X \btran{} ZY$,
where $Y \in \Att(W)$, but change them into  $X \btran{} \pp{Z}$, where
the plays initiated in $\pp{Z}$ ``behave'' like the ones initiated in $Z$ 
with the exception that $\varepsilon$ cannot be reached whatever the 
players do.

Now we show how to compute the set $\A$, formalizing the intuition
given above. To simplify the proofs of our claims, we adopt some 
additional (safe) assumptions about the considered BPA game $\Delta$.

\begin{definition}\label{ass:rules}
We say that $\Delta$ is in \emph{special normal form (SNF)}
if all of the following conditions are satisfied:
\begin{itemize}
\item For every $R \in \Gamma_T$ we have that $R \in \Gamma_\bigcirc$
  and $R \btran{1} R$.
\item For every rule $X \btran{} \alpha$ where 
  $X \in \Gamma_\Diamond \cup \Gamma_\bigcirc$ we have that $\alpha \in \Gamma$.
\item The set $\Gamma_\Box$ can be partitioned into three disjoint
  subsets $\Gamma[1]$,  $\Gamma[2]$, and $\Gamma[3]$ so that
  \begin{itemize}
  \item if $X \in \Gamma[1]$ and $X \btran{} \alpha$, then 
     $\alpha \in \Gamma$;
  \item if $X \in \Gamma[2]$, then $X \btran{} \varepsilon$ and
     there is no other rule of the form $X \btran{} \alpha$;
  \item if $X \in \Gamma[3]$, then $X \btran{} YZ$ for some 
     $Y,Z \in \Gamma$, and there is no other rule of the form 
     $X \btran{} \alpha$.
  \end{itemize}
\end{itemize}
\end{definition}

Note that every BPA game can be efficiently transformed into 
an ``equivalent'' BPA game in SNF
by introducing fresh stack symbols (which belong to player~$\Box$) and
adding the corresponding dummy rules. For example, if the original BPA game
contains the rules $X \btran{} \varepsilon$ and $X \btran{} YZ$,
then the newly constructed BPA game in SNF contains the rules
$X \btran{} E$, $X \btran{} P$, $E \btran{} \varepsilon$, 
$P \btran{} YZ$, where $E,P$ are fresh stack symbols that belong to
player~$\Box$. Obviously, the set $\A$ of the original BPA game
is the set $\A$ of the newly constructed BPA game restricted 
to the stack symbols of the original BPA game.

So, from now on we assume that the considered BPA game $\Delta$ is in SNF.
In particular, note that only player~$\Box$ can change the height of 
the stack; and if she can do it, then she cannot do anything else 
for the given stack symbol. 

Our algorithm for computing the set $\A$ consists of two parts,
the procedure \FuncSty{Init} and the procedure \FuncSty{Main}.  
The procedure \FuncSty{Init} transforms the BPA game $\Delta$ into another 
BPA game $\initdelta$, which is then used as an input for the procedure
\FuncSty{Main} which computes the set $\A$ of $\initdelta$.

For every $X \in \Gamma$, let $\pp{X}$ be a fresh ``twin'' of $X$, and
let $\pp{\Gamma} = \{\pp{X} \mid X \in \Gamma\}$. Similarly, for every
$\odot \in \{\bigcirc,\Diamond,\Box\}$ we put
$\pp{\Gamma}_{\odot}=\{\pp{X}\mid X\in \Gamma_{\odot}\}$.
The procedure \FuncSty{Init} inputs the BPA game $\Delta$
and outputs another BPA game 
$\initdelta = (\bar{\Gamma},\btran{},
(\bar{\Gamma}_\Box,\bar{\Gamma}_\Diamond,\bar{\Gamma}_\bigcirc),\Prob)$ 
where $\bar{\Gamma} = \Gamma \cup \pp{\Gamma}$, 
$\bar{\Gamma}_{\odot} = \Gamma_{\odot} \cup  \pp{\Gamma}_{\odot}$ for 
every $\odot \in \{\bigcirc,\Diamond,\Box\}$, and the rules are constructed
as follows: 
\begin{itemize}
\item if $X \btran{} \varepsilon$ is a rule of $\Delta$, then 
  $X \btran{} \varepsilon$ and $\pp{X} \btran{} \pp{X}$ are rules
  of $\initdelta$;
\item if $X \btran{} Y$ is a rule of $\Delta$, then 
  $X \btran{} Y$ and $\pp{X} \btran{} \pp{Y}$ are rules
  of $\initdelta$;
\item if $X \btran{} YZ$ is a rule of $\Delta$, then 
  $X \btran{} YZ$ and $\pp{X} \btran{} Y\pp{Z}$ are rules
  of $\initdelta$;
\item $\initdelta$ has no other rules.
\end{itemize}
Further, if $X \btran{x} Y$ in $\Delta$, then $X \btran{x} Y$ and
$\pp{X} \btran{x} \pp{Y}$ in $\initdelta$. We put
$\bar{\Gamma}_T = \{R,\pp{R} \mid R \in \Gamma_T\}$.

Intuitively, the only difference between $X$ and $\pp{X}$ is that
$\pp{X}$ can never be fully removed from the stack. Also observe that 
the newly added stack symbols of $\pp{\Gamma}$ are unreachable from
the original stack symbols of $\Gamma$. Hence, the set $\A$ of 
$\Delta$ is obtained simply by restricting the set $\A$ of $\initdelta$ 
to the symbols of $\Gamma$. In the rest of this section, we adopt 
the following convention:
the elements of $\Gamma$ are denoted by $X,Y,Z,\ldots$, the corresponding
elements of $\pp{\Gamma}$ are denoted by $\pp{X},\pp{Y},\pp{Z},\ldots$, 
and for every $X \in \Gamma$, the symbol $\bar{X}$ denotes either $X$ 
or $\pp{X}$. 

The set $\A$ of $\initdelta$ is computed by the procedure~\FuncSty{Main}
(see page~\pageref{proc-main}). At line~\ref{alg:att-compute}, we assign 
to $\M$ the least fixed-point of the function 
$\FuncSty{Att}_{\Theta,W} : 2^{\bar{\Gamma}} \rightarrow 2^{\bar{\Gamma}}$,
where $\Theta$ is an auxiliary BPA game maintained by the 
procedure~\FuncSty{Main} and $W$ is a subset of stack symbols 
of~$\Theta$. The function $\FuncSty{Att}_{\Theta,W}$ is defined as follows
(the set of rules of $\Theta$ is denoted by $\bbtran{}$):
  \begin{eqnarray*}
   \FuncSty{Att}_{\Theta,W}(S) & = & W\\
   & \cup & \{\bar{A} \in \bar{\Gamma}_\bigcirc \cup \bar{\Gamma}_\Diamond 
       \mid \mbox{there is a rule } \bar{A} \bbtran{} \bar{B} \mbox{ where }
       \bar{B} \in S\}\\
   & \cup & \{\bar{A} \in \bar{\Gamma}[1]
       \mid \bar{B} \in S 
          \mbox{ for all } \bar{A} \bbtran{} \bar{B} \}\\
   & \cup & \{\bar{A} \in \bar{\Gamma}[3]
       \mid \bar{A} \bbtran{} Y\bar{C} \mbox{ where }
       Y \in S \mbox{ or } \pp{Y},\bar{C} \in S \}
  \end{eqnarray*}
Note that the procedure~\FuncSty{Main} actually computes
the sets $\A$ and $\C$ of $\initdelta$ simultaneously, as stated
in the following proposition. A proof is postponed to 
Section~\ref{sec-correctness}. 

\begin{procedure}
\caption{Main()}
\label{proc-main}
\KwData{A BPA game $\initdelta = (\bar{\Gamma},\btran{},
   (\bar{\Gamma}_\Box,\bar{\Gamma}_\Diamond,\bar{\Gamma}_\bigcirc),\Prob)$.}
\KwResult{The sets $\mathcal{W}$ and $\U$.}
\SetNoFillComment

$\mathcal{W}\coloneqq \emptyset$;\ \ $\U\coloneqq \emptyset$;\ 
$\Theta \coloneqq \bar{\Delta}$\;

\While{the greatest set $W$ of witnesses in $\Theta$ is not empty}
{\nllabel{alg:comp-wit}
  $\M \coloneqq \mbox{the least fixed-point of } \FuncSty{Att}_{\Theta,W}$\;\nllabel{alg:att-compute}

  \For{every  $\bar{A} \in \M$}
    {\nllabel{alg:symbol-remove}
     remove the symbol $\bar{A}$ and all rules with $\bar{A}$
     on the left-hand side\;
    } 

  \For{every rule $\bar{A} \bbtran{} \bar{B}$ where
       $\bar{A} \in \bar{\Gamma}_{\Box} \smallsetminus \M$ and
       $\bar{B} \in \M$}
    {%
     remove the rule $\bar{A} \bbtran{} \bar{B}$\;
    }\nllabel{alg:intdc_add}  %
    
  \For{every rule $\bar{A} \bbtran{} Y\bar{C}$ where
       $\bar{A} \in \bar{\Gamma}_{\Box} \smallsetminus \M$ and
       $\bar{C} \in \M$} 
  {\nllabel{alg:W-under}
     replace the rule $\bar{A} \bbtran{} Y\bar{C}$ with the rule
     $\bar{A} \bbtran{} \pp{Y}$\; 
     \nllabel{alg:remove-finish}
  }
  $\mathcal{W} \coloneqq \mathcal{W}\cup \M$\;\nllabel{alg:add-cal-W}
  $\U \coloneqq \U \cup \{\bar{Y} \mid \pp{Y} \in \mathcal{W}\}$\;
  \nllabel{alg:add-cal-U}
}
\Return{$\mathcal{W},\U$}\;
\end{procedure}

\begin{proposition} 
\label{prop:cor_alg}
The sets $\mathcal{W}$ and $\U$ computed by the procedure~\FuncSty{Main} are
exactly the sets $\A$ and $\C$ of the BPA game $\initdelta$, respectively.
\end{proposition}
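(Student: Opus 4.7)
My plan is to prove Proposition~\ref{prop:cor_alg} by establishing a loop invariant for \FuncSty{Main} that pins down the meaning of $\Theta$, $\mathcal{W}$, and $\U$ throughout the execution. Specifically, I would show that at the start of every iteration, for every symbol $\bar{A}$ still present in the current $\Theta$ one has $\bar{A} \in \A_\Theta$ iff $\bar{A} \in \A_{\initdelta} \smallsetminus \mathcal{W}$, and analogously $\bar{A} \in \C_\Theta$ iff $\bar{A} \in \C_{\initdelta} \smallsetminus \U$. Given this invariant, Proposition~\ref{prop:cor_alg} follows immediately: when the loop exits, the greatest witness set in $\Theta$ is empty; Proposition~\ref{prop:witnesses} then forces $\A_\Theta = \emptyset$, and the invariant yields $\mathcal{W} = \A_{\initdelta}$. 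Moreover, since the twin construction of $\initdelta$ guarantees that $\pp{Y}$ can never be popped, one has $\pp{Y} \in \A_{\initdelta}$ iff $\pp{Y} \in \C_{\initdelta}$ iff $Y \in \C_{\initdelta}$ (using the reduction recalled on page~\pageref{page-discussion-B-C} between $\wsetD{T}{<}{1}$ and $\wsetD{T_\varepsilon}{<}{1}$), so line~\ref{alg:add-cal-U} correctly updates $\U$ in lockstep with $\mathcal{W}$ on the twin symbols.

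The first step is soundness of one iteration: every symbol added to $\M$ lies in $\A_\Theta$. For witnesses $W$ themselves, Lemma~\ref{lem-witness-strategy} supplies a SMD strategy $\pi$ of player~$\Diamond$, computed in the current $\Theta$, such that $\calP^{\sigma,\pi}_Y(\R(T_\varepsilon)) < 1$ for every $Y \in W$. Each of the four clauses defining $\FuncSty{Att}_{\Theta,W}$ gives player~$\Diamond$ a canonical shrinking move towards the already-accepted set: a deterministic step for $\bar{\Gamma}_\bigcirc \cup \bar{\Gamma}_\Diamond$, the guarantee that all moves are good for $\bar{\Gamma}[1]$, and the rule $\bar{A} \bbtran{} Y\bar{C}$ with $Y$ (or with both $\pp{Y}$ and $\bar{C}$) already accepted for $\bar{\Gamma}[3]$. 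A standard attractor argument then produces, for every $\bar{A} \in \M$, a strategy that forces a visit to $W$ with positive probability before $T_\varepsilon$, after which the play is controlled by $\pi$; the concatenated strategy is $(T_\varepsilon,{<}1)$-winning, so $\M \subseteq \A_\Theta$.

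The technical heart, and what I expect to be the main obstacle, is showing that the modifications on lines~\ref{alg:symbol-remove}--\ref{alg:remove-finish} preserve $\A_\Theta$ on the remaining symbols. The $\supseteq$ direction (winning in the new $\Theta'$ implies winning in the old $\Theta$) follows because the modified game gives player~$\Box$ strictly fewer options, and the twin replacement $\bar{A} \bbtran{} \pp{Y}$ faithfully encodes $\bar{A} \bbtran{} Y\bar{C}$ when $\bar{C} \in \M$: by the first step $\bar{C}$ is $\Diamond$-winning in $\Theta$, so exposing it can only hurt player~$\Box$, while $\pp{Y}$ merely forbids popping past $Y$, which is the only further behaviour of $Y\bar{C}$ relevant for $\Diamond$. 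The converse $\subseteq$ direction is subtler. If $\bar{A} \notin \M$ and a rule $\bar{A} \bbtran{} \bar{B}$ with $\bar{B} \in \M$ is removed, then $\bar{A}$ must possess some other rule (else the $\bar{\Gamma}[1]$-clause would have placed $\bar{A}$ into $\M$), and an optimal player~$\Box$ would never use a rule leading into $\A_\Theta$; so removing it cannot shrink $\A_\Theta$. For the replacement $\bar{A} \bbtran{} Y\bar{C} \mapsto \bar{A} \bbtran{} \pp{Y}$, the $\bar{\Gamma}[3]$-clause guarantees $\pp{Y} \notin \M$ (otherwise $\bar{A}$ itself would lie in $\M$), and a strategy-transfer argument exchanges strategies in the two games: a $\Diamond$-strategy in the modified game that performs well from $\pp{Y}$ lifts to one in $\Theta$ that performs equally well along the sub-play from $Y$ above $\bar{C}$, while using $\bar{C} \in \A_\Theta$ as a fallback for the runs that eventually pop to $\bar{C}$.

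Combining these facts, the invariant is maintained by each iteration; together with the trivial initialisation ($\mathcal{W} = \U = \emptyset$, $\Theta = \initdelta$) and termination (each iteration removes at least one symbol because, by Proposition~\ref{prop:witnesses} and the soundness step, $W \neq \emptyset$ implies $\M \neq \emptyset$), the loop exits with $\mathcal{W} = \A_{\initdelta}$ and $\U = \C_{\initdelta}$, as required.
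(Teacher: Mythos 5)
Your plan is sound and arrives at the right conclusion, but it is organized quite differently from the paper's proof, and one step of your sketch conceals most of the actual work. The paper does not maintain a per-iteration invariant $\A_{\Theta_i}=\A_{\initdelta}\smallsetminus\W_i$. Instead it proves the two inclusions end-to-end: for ``$\W_K\subseteq\A$'' it inductively builds SMD strategies $\pi[\W_i]$ and shows for each $\bar{X}\in\W_i$ the disjunctive property that, against every $\sigma_0$, either $T_\varepsilon$ is reached with probability $<1$ or the previously certified region $\U_{i-1}^*\W_{i-1}\bar{\Gamma}^*$ is hit with positive probability (Lemma~\ref{lem-subset-prelim}); for ``$\A\subseteq\W_K$'' it takes the strategy $\sigma_K$ supplied by Proposition~\ref{prop:no-wit-reg-win} in $\Theta_K$ and pulls it back to $\Theta_0$ (Lemma~\ref{lem-superset}). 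Both lemmas rest on explicit configuration-collapsing maps ($f$, respectively $g,h$) that absorb already-certified suffixes and translate the twin replacements, making the reachable parts of the two plays isomorphic. Your decomposition trades these two global simulations for $K$ local ones; it buys a cleaner statement of what each iteration achieves, at the price of having to prove an exact ``iff'' preservation at every step rather than one inclusion per direction overall.

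That preservation step is also where your sketch is thinnest, in two respects. First, it cannot be discharged rule-by-rule: for $\bar{A}\in\bar{\Gamma}[3]$ with $\bar{A}\bbtran{}Y\bar{C}$ and $\bar{C}\in\M$, membership of $\bar{A}$ in $\A_{\Theta'}$ is equivalent to membership of the surviving symbol $\pp{Y}$ in $\A_{\Theta'}$ --- another instance of the very claim being proved --- and such dependencies can be cyclic, so the strategy transfer must be carried out simultaneously for all surviving symbols via a configuration-level simulation (exactly the role of the paper's $f$ and $g$). Relatedly, ``player~$\Box$ has strictly fewer options in $\Theta'$'' by itself argues the \emph{opposite} inclusion; what makes $\A_{\Theta'}\subseteq\A_{\Theta}$ work is that every deleted option leads into already-certified $\Diamond$-winning territory, which you do note. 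Second, your soundness step is not literally an attractor to $W$: in the clause $\pp{Y},\bar{C}\in S$ the play need never expose a $W$-topped configuration, and one must instead split the runs from $Y\bar{C}\alpha$ into those staying above $\bar{C}\alpha$ (handled by the $\pp{Y}$ witness) and those that pop (handled by the $\bar{C}$ witness). Neither issue is fatal --- both are resolved by the same machinery the paper deploys --- but as written the proposal understates where the difficulty lies.
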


Now, let us analyze the complexity of the procedure~\FuncSty{Main}.
Obviously, the main loop initiated at line~\ref{alg:comp-wit}
terminates after $\calO(|\initdelta|)$ iterations. In each iteration,
we need to compute the greatest set of witnesses $W$ of the current game,
which is the only step that needs exponential time. Hence, the running
time of the procedure~\FuncSty{Main} is \emph{exponential} in the
size of $\initdelta$. Nevertheless, the procedure~\FuncSty{Main} can
be easily modified into its \emph{non-deterministic} variant
\FuncSty{Main-NonDet} where every computation terminates
after a polynomial number of steps, and all ``successful'' computations
of \FuncSty{Main-NonDet} output the same sets $\W,\U$ as the 
procedure~\FuncSty{Main}. This means that the membership problem as well as 
the non-membership problem for the set~$\A$ is in $\NP$, which implies 
that both problems are in fact in $\NP \cap \coNP$. The same applies
to the set $\C$. The only difference
between the  procedures~\FuncSty{Main} and \FuncSty{Main-NonDet} is the
way of computing the greatest set of witnesses $W$. 
Due to Lemma~\ref{lem-witness-strategy}, the problem whether
$Y \in W$ for a given $Y \in \Gamma$ is in $\NP \cap \coNP$. 
Hence, the membership as well as the non-membership to $W$ is certified 
by certificates of polynomial size that are verifiable in polynomial time
(in the proof of Lemma~\ref{lem-witness-strategy}, we indicated 
how to construct these certificates, but this is not important now). 
The procedure \FuncSty{Main-NonDet} guesses the set $W$
together with a tuple of certificates that are supposed to prove that
the guess was fully correct (i.e., the guessed set is exactly the set of
all witnesses).
Then, all of these certificates are verified. If some of them turns out
to be invalid, the procedure \FuncSty{Main-NonDet} terminates immediately
(this type of termination is considered ``unsuccessful'').
Otherwise, the procedure \FuncSty{Main-NonDet} proceeds by performing 
the same instructions as the procedure~\FuncSty{Main}.

Since the membership problem for the sets $\A,\C$ is in $\NP \cap \coNP$,
the membership problem for the sets $\B,\scrD$ is also in
$\NP \cap \coNP$ (see the discussion at page~\pageref{page-discussion-B-C}).
Hence, an immediate consequence of the previous observations and 
Proposition~\ref{prop-equal-one-regular} is the following:

\begin{theorem}
\label{thm-BPA-case-one}
  The membership to $\wsetB{T}{=}{1}$ and $\wsetD{T}{<}{1}$ is 
  in $\NP \cap \coNP$. Both sets are effectively regular, and 
  the associated finite-state automata are constructible by a deterministic
  polynomial-time algorithm with $\NP \cap \coNP$ oracle.
\end{theorem}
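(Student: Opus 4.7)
The plan is to assemble the theorem from three previously established ingredients: the regular decomposition of the winning regions (Proposition~\ref{prop-equal-one-regular}), the determinacy result (Theorem~\ref{thm-determinacy}), and the correctness of the procedure \FuncSty{Main} (Proposition~\ref{prop:cor_alg}) together with its nondeterministic variant \FuncSty{Main-NonDet}. First, I would reduce the problem to computing the four subsets $\A,\B,\C,\scrD$ of $\Gamma$. By Proposition~\ref{prop-equal-one-regular}, once these sets are known, the finite-state automata accepting $\wsetB{T}{=}{1}=\B^*\scrD\Gamma^*$ and $\wsetD{T}{<}{1}=\C^*\A\Gamma^*\cup\C^*$ can be written down in polynomial time (they have $O(|\Gamma|)$ control states because the relevant partitions of $\Gamma$ are finite).

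Second, I would note that by Theorem~\ref{thm-determinacy} (applied to the finitely-branching game $G_\Delta$) we have $\B=\Gamma\smallsetminus\A$ and $\scrD=\Gamma\smallsetminus\C$, so it suffices to compute $\A$ and $\C$. Moreover, as remarked just before Definition~\ref{def-witness}, computing $\C$ reduces to computing $\A$ in the augmented game $\tilde\Delta$ obtained by prepending a fresh stochastic bottom-marker $Z$ with $Z\btran{1}Z$ and a fresh head symbol $\tilde X\btran{1}XZ$: we have $X\in\C$ iff $\tilde X\in\wsetD{\tilde T_\varepsilon}{<}{1}$. Since $|\tilde\Delta|$ is polynomial in $|\Delta|$, an $\NP\cap\coNP$ procedure for deciding membership in $\A$ yields the same complexity for $\C$.

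Third, I would invoke Proposition~\ref{prop:cor_alg}: after transforming $\Delta$ to SNF and applying \FuncSty{Init} (both polynomial-time steps), the procedure \FuncSty{Main} on $\initdelta$ correctly returns $\A$ and $\C$ of $\initdelta$, from which the corresponding sets for $\Delta$ are obtained by restriction to $\Gamma$. To obtain the complexity bounds, I would use the nondeterministic variant \FuncSty{Main-NonDet} described above: its main loop runs at most $O(|\initdelta|)$ times, each iteration involves computing the greatest set of witnesses $W$ and then a fixed-point of $\FuncSty{Att}_{\Theta,W}$; by Lemma~\ref{lem-witness-strategy} the membership/non-membership to $W$ is in $\NP\cap\coNP$, hence both directions admit polynomial-size certificates verifiable in polynomial time. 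By guessing $W$ together with certificates for every symbol (inside and outside $W$) and then verifying them, \FuncSty{Main-NonDet} runs in nondeterministic polynomial time and, on successful computations, outputs precisely $\A$ and $\C$. Hence $\A$ is in $\NP$; the same construction applied to the guess of the complement of $\A$ (together with its certificates) shows $\A$ is in $\coNP$; the same applies to $\C$.

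Finally, combining the three ingredients, membership in $\A,\B,\C,\scrD$ — and thus, via the regular decomposition, membership in $\wsetB{T}{=}{1}$ and $\wsetD{T}{<}{1}$ — lies in $\NP\cap\coNP$. To construct the automata, a deterministic polynomial-time algorithm equipped with an $\NP\cap\coNP$ oracle asks, for every $X\in\Gamma$, the four membership queries $X\in\A$, $X\in\B$, $X\in\C$, $X\in\scrD$, and then writes down the automata dictated by Proposition~\ref{prop-equal-one-regular}. The anticipated main obstacle is the witness-testing step: one must be sure that Lemma~\ref{lem-witness-strategy} places \emph{both} membership and non-membership to $W$ in $\NP\cap\coNP$ so that \FuncSty{Main-NonDet} can certify the correctness of its guessed $W$ in both directions; this hinges on the qualitative termination results of \cite{EY:RMDP-efficient} for $\Delta_C$ combined with the polynomial-time treatment of type-(1) witnesses from Theorem~\ref{thm-BPA-case-zero}, and it is precisely this combination that underlies Lemma~\ref{lem-witness-strategy}.
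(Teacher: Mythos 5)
Your proposal is correct and follows essentially the same route as the paper: the theorem is obtained by combining the regular decomposition of Proposition~\ref{prop-equal-one-regular}, determinacy (giving $\B=\Gamma\smallsetminus\A$ and $\scrD=\Gamma\smallsetminus\C$), the reduction of $\C$ to $\A$ via the augmented game, and the nondeterministic variant of \FuncSty{Main} whose guessed witness sets are certified in both directions using Lemma~\ref{lem-witness-strategy}. The only cosmetic difference is in how you phrase the $\coNP$ direction — the paper argues that every successful run of \FuncSty{Main-NonDet} outputs the \emph{same} set $\A$, so non-membership is also in $\NP$, rather than guessing the complement separately — but this is the same argument in substance.
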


Since the arguments used in the proof of 
Proposition~\ref{prop:cor_alg} are mostly constructive, 
the winning strategies for both players are effectively regular.
This is stated in our final theorem (a proof can be found
in Section~\ref{sec-correctness}).

\begin{theorem}\label{thm:box_strat_1}
  There are regular strategies $\sigma \in \Sigma$ and $\pi \in \Pi$
  such that $\sigma$ is $(T,{=}1)$-winning in every 
  configuration of $\wsetB{T}{=}{1}$ and $\pi$ is
  $(T,{<}1)$-winning in every configuration of 
  $\wsetD{T}{<}{1}$. 
  Moreover, the strategies $\sigma$ and $\pi$ are constructible 
  by a deterministic polynomial-time algorithm with $\NP \cap \coNP$ 
  oracle.
\end{theorem}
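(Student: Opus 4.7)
The plan is to extract both winning strategies from the iterative procedure \FuncSty{Main} (whose correctness is the content of Proposition~\ref{prop:cor_alg}), combined with the structural decomposition of Proposition~\ref{prop-equal-one-regular} and the witness strategy provided by Lemma~\ref{lem-witness-strategy}.

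For player~$\Diamond$ in $\wsetD{T}{<}{1} = \C^*\A\Gamma^* \cup \C^*$, I would construct $\pi$ as follows. Each symbol $Y \in \A$ is added to $\mathcal{W}$ at some iteration $i$ of the main loop, either because it is a witness of the current game $\Theta_i$ or because it lies in $\FuncSty{Att}_{\Theta_i,W_i}$. For witnesses, Lemma~\ref{lem-witness-strategy} directly supplies an SMD move that guarantees $\calP_Y^{\sigma,\pi}(\R(T_\varepsilon)) < 1$. For attractor symbols, the standard attractor argument reads off an SMD move from the fixed-point computation at line~\ref{alg:att-compute} that, with positive probability, brings play to a witness without visiting $T$ along the way. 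The global $\pi$ is then memoryless: at configuration $Y\beta$, inspect only $Y$ and play the move prescribed at the iteration when $Y$ entered $\mathcal{W}$ (with the analogous moves for $\C\smallsetminus\A$ extracted from the simultaneous computation of $\U$). Since BPA is stackless, these local guarantees survive under arbitrary suffixes $\beta$: the infinite plays that stay inside $Y$-topped configurations without ever hitting $T$ have the same positive probability as in the one-symbol game, so $\calP_{Y\beta}^{\sigma,\pi}(\R(T)) < 1$ for every $\sigma \in \Sigma$.

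For player~$\Box$ in $\wsetB{T}{=}{1} = \B^*\scrD\Gamma^*$, I would extract from the correctness proof of Proposition~\ref{prop:cor_alg}, for each $Y \in \scrD$, a regular $(T,{=}1)$-winning local strategy $\sigma_Y$, and for each $Y \in \B\smallsetminus\scrD$, a regular $(T_\varepsilon,{=}1)$-winning local strategy (the twin-symbol construction inside \FuncSty{Init} is precisely what makes the $(T,{=}1)$-guarantee distinguishable from the weaker $(T_\varepsilon,{=}1)$-guarantee in $\initdelta$). At configuration $Y\beta$ I would play $\sigma_Y$ whenever $Y \in \scrD$: because $Y \in \scrD = \Gamma\smallsetminus\C$, any positive probability of popping $Y$ would contradict the $(T,{=}1)$-guarantee, so $T$ is reached before popping with probability one. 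When $Y \in \B\smallsetminus\scrD$, play the $(T_\varepsilon,{=}1)$-winning strategy for $Y$, which with probability one either reaches $T$ or pops to $\beta$; in the latter case $\beta$ is again a word in $\B^*\scrD\Gamma^*$ and the strategy continues inductively on the shortened configuration.

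Regularity is immediate in both cases: the move at $Y\beta$ depends only on (i) the class of $Y$ among the finitely many precomputed sets $\A,\B,\C,\scrD$, and, in a single sub-case, on (ii) the position of the first $\scrD$-symbol inside $\beta$, which is detectable by a bottom-up DFA over $\Gamma$ with $\calO(|\Gamma|)$ control states. The complexity bound is obtained by running \FuncSty{Main-NonDet} together with the $\NP\cap\coNP$ oracle for witness detection from Lemma~\ref{lem-witness-strategy}, both of which also yield the DFAs themselves. The main obstacle is that the local moves are prescribed over the successively modified games $\Theta_i$ with their twin symbols $\pp{X}$, so composing them into a single strategy on $\initdelta$ (and, by projection, on $\Delta$) with the claimed global guarantee requires the delicate bookkeeping carried out in Section~\ref{sec:proofs}; this is the only genuinely non-obvious step and is the same invariant that underlies the correctness of Proposition~\ref{prop:cor_alg}.
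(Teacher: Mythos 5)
Your overall route---extract $\pi$ from the iterations of \FuncSty{Main} via Lemma~\ref{lem-witness-strategy} and the attractor, and extract $\sigma$ from the decomposition $\wsetB{T}{=}{1}=\B^*\scrD\Gamma^*$ together with regular local strategies coming from the final witness-free game---is the same as the paper's. The player-$\Box$ half is essentially sound: your inductive composition along $\B^*\scrD\Gamma^*$ is the argument of Proposition~\ref{prop-equal-one-regular}, and your observation that the ``phase'' is recoverable from the position of the first $\scrD$-symbol in the suffix is exactly what the paper's partial functions $\hat{g},\hat{h}$ in Lemma~\ref{lem-superset} encode, making the composed strategy memoryless and regular.

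The gap is on the player-$\Diamond$ side. You claim $\pi$ can ``inspect only $Y$ and play the move prescribed at the iteration when $Y$ entered $\mathcal{W}$,'' justified by ``these local guarantees survive under arbitrary suffixes.'' That justification is valid only for symbols that are witnesses of the \emph{original} game $\Theta_0$. A symbol $\bar{X}$ added to $\mathcal{W}$ at iteration $i\geq 1$ carries a guarantee only in the modified game $\Theta_i$, in which player~$\Box$'s rules leading into previously cut symbols have been removed or rewritten; in $\initdelta$ player~$\Box$ can still use those rules, and the transferred guarantee is only the disjunction of Lemma~\ref{lem-subset-prelim}: either $\calP^{\sigma_0,\pi[\W_i]}_{\bar{X}}(\R(T_\varepsilon))<1$, \emph{or} with positive probability the play reaches a configuration of $\wsetD{T_\varepsilon,i-1}{<}{1}$. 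The second disjunct is not by itself a $({<}1)$-guarantee; turning it into one requires (a)~a choice between the ``$\mathcal{W}$-move'' $\pi[\W_K]$ and the ``$\U$-move'' $\pi[\U_K]$ that depends on the \emph{suffix} (the paper's $\price$ function, which compares $I_{\A}(\bar{Z})$ with the cheapest way the rest of the configuration certifies membership in $\C^*\A\bar{\Gamma}^*\cup\C^*$), and (b)~a well-founded induction on the resulting ordering $\sqsubset$ showing that the ``escape'' configuration reached in the second disjunct is strictly smaller. Your proposed $\pi$ is SMD (top-symbol only); the paper only establishes a regular, suffix-dependent $\pi$, and without the price-based tie-breaking your induction has no decreasing measure. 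You do flag that ``delicate bookkeeping'' is needed, but you locate it in the transfer from $\Theta_i$ to $\initdelta$ (which is Lemma~\ref{lem-subset-prelim} and yields only the disjunction), whereas the missing step is the final stitching of the disjunctive guarantees into a single winning strategy.
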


\section{Proofs of Section~\ref{sec:one}}
\label{sec:proofs}

In this section we present the proofs that were omitted in
Section~\ref{sec:one}.

\subsection{A Proof of Proposition~\ref{prop:witnesses}}
\label{sec:witness-proof}

We start by formulating a simple corollary to 
Proposition~\ref{prop-greater-than-zero}, which turns out to be
useful at several places.

\begin{proposition}
\label{prop:stack-bound}
  Let $\sigma\in \Sigma$ be a strategy of player~$\Box$ which always
  returns a uniform probability distribution over the available
  outgoing edges. Then for every $X\in \wsetB{T}{>}{0} \cap \Gamma$ 
  (or $X\in \wsetB{T_\eps}{>}{0} \cap \Gamma$)
  and every $\pi\in\Pi$ there
  is a path $w$ from $X$ to $T$ (to $T_\eps$, resp.)
  in $G_\Delta(\sigma, \pi)$ such that
  \begin{enumerate}
  \item the length of $w$ is at most $2^{2|\Gamma|}$;
  \item the length of all configurations visited by $w$ 
     is at most $2|\Gamma|$.
  \end{enumerate}
\end{proposition}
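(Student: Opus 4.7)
The plan is to exploit the stratification of $\A$ and $\B$ supplied by Proposition~\ref{prop-greater-than-zero} in order to trace, against an arbitrary $\pi \in \Pi$, an explicit short witnessing path inside $G_\Delta(\sigma,\pi)$. First, I would fix the iterates $(A_0,B_0) \sqsubseteq (A_1,B_1) \sqsubseteq \cdots$ of the monotone operator $F$ starting at $(\emptyset,\emptyset)$. Since the lattice $2^\Gamma \times 2^\Gamma$ has height $2|\Gamma|+1$, the iteration stabilizes by step $2|\Gamma|$, so that $\A = A_{2|\Gamma|}$ and $\B = B_{2|\Gamma|}$; for every $X \in \A$ (resp.\ $X \in \B$) I define an index $I_A(X)$ (resp.\ $I_B(X)$) as the least $n$ with $X \in A_n$ (resp.\ $X \in B_n$), so $I_A,I_B \in \{1,\ldots,2|\Gamma|\}$. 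For each $X \in (\Gamma_\Box \cup \Gamma_\bigcirc) \cap (A_n \setminus A_{n-1})$ the definition of $F$ supplies a rule $X \btran{} \delta$ with $\delta \in B_{n-1}^* A_{n-1} \Gamma^*$; I fix one such ``$A$-rule'', and similarly a ``$B$-rule'' $X \btran{} \delta'$ with $\delta' \in B_{n-1}^* A_{n-1} \Gamma^* \cup B_{n-1}^*$ for every $X \in \B$ of $B$-index $n$. For $X \in \Gamma_\Diamond$ the corresponding property holds for \emph{all} outgoing rules, so no choice is necessary.

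Next, I would prove by simultaneous induction on $n$ the following two assertions: for every $X \in A_n$ (resp.\ $X \in B_n$) and every $\pi \in \Pi$ there is a path in $G_\Delta(\sigma,\pi)$ from $X$ to $T$ (resp.\ to $T_\varepsilon$) of length at most $2^n$ and passing only through configurations of length at most $n$. The case $n = 1$ is immediate since $A_1 = B_1 = \Gamma_T$ and the one-vertex path $X$ already lies in $T$. For the inductive step I fire the chosen $A$- or $B$-rule. This is legitimate in $G_\Delta(\sigma,\pi)$: a $\Gamma_\Box$-vertex has \emph{every} outgoing edge in the support of the uniform distribution; $\Gamma_\bigcirc$-vertices disregard both strategies; at $\Gamma_\Diamond$-vertices \emph{every} outgoing rule already leads into the appropriate set, so whatever $\pi$ picks is good. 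After this single transition the configuration has the form $Y_1 \cdots Y_k\, \eta$ with $k \leq 2$, each $Y_j$ in either $B_{n-1}$ or $A_{n-1}$, and $\eta$ a suffix that is merely carried along. I then invoke the inductive claim on $Y_1$ sitting on top of the suffix to reach either the target (we stop) or $\varepsilon$ (pop $Y_1$ and continue with $Y_2$), and so on until the final $A_{n-1}$-symbol is consumed via the $A$-inductive claim.

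Finally, I would read off the two resource bounds from the recurrence. Writing $L(n)$ for the maximum path length and $H(n)$ for the maximum configuration length guaranteed by the inductive claim, the step above yields
\[
L(n) \leq 1 + 2\,L(n-1) \qquad \text{and} \qquad H(n) \leq H(n-1) + 1,
\]
because any rule right-hand side has length at most two (contributing at most two sub-calls) and the passive suffix sitting below a recursed-into symbol adds at most one to the stack height. With $L(1) = 1$ and $H(1) = 1$ this resolves to $L(n) \leq 2^n$ and $H(n) \leq n$; substituting $n = 2|\Gamma|$ delivers precisely the claimed bounds $2^{2|\Gamma|}$ and $2|\Gamma|$.

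The main technical obstacle is the joint bookkeeping in the inductive step. A single rule $X \btran{} YZ$ may drop the index of $Y$ while leaving $Z$ as a completely arbitrary symbol that becomes part of the passive suffix. One must therefore carefully separate the \emph{active} top of the stack that drives the induction from the passive tail, showing that the tail contributes only to the $+1$ in the $H$-recurrence and to nothing in the $L$-recurrence. Beyond this, the argument is a direct unfolding of the fixed-point construction of Proposition~\ref{prop-greater-than-zero} combined with the fact that the uniform $\sigma$ assigns positive probability to every choice of player~$\Box$.
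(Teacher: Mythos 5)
Your proposal is correct and follows essentially the same route as the paper: both arguments induct on the index of the fixed-point iterates $(A_i,B_i)$ of the operator $F$ from Proposition~\ref{prop-greater-than-zero}, fire a witnessing rule at the top of the stack (available under the uniform $\sigma$ at $\Box$-vertices and for every choice of $\pi$ at $\Diamond$-vertices), and recurse on the at-most-two symbols of the right-hand side to obtain the recurrences $L(n)\leq 1+2L(n-1)$ and $H(n)\leq H(n-1)+1$, yielding $2^{2|\Gamma|}$ and $2|\Gamma|$ at $n=2|\Gamma|$. Your explicit separation of the active top symbol from the passive suffix is a slightly more careful write-up of the same bookkeeping the paper performs implicitly.
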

\begin{proof}
Let us consider the sets $A_i$ and $B_i$ from the proof of 
Proposition~\ref{prop-greater-than-zero}. Recall that
$\wsetB{T}{>}{0}\cap\Gamma=\bigcup_{i=0}^{2|\Gamma|} A_i$ and
$\wsetB{T_\eps}{>}{0}\cap\Gamma=\bigcup_{i=0}^{2|\Gamma|} B_i$.
By induction on~$i$, we prove that for every
$X\in A_i$ (or $X\in B_i$) and every $\pi\in\Pi$ there
is a path $w$ from $X$ to $T$ (or to $T_\eps$, resp.)
in $G_\Delta(\sigma, \pi)$ such that
  \begin{enumerate}
  \item[(1)] the length of $w$ is at most $2^i$;
  \item[(2)] the length of all configurations visited by $w$ 
  is at most $i$.
  \end{enumerate}
The case $i=1$ is trivial, as $\A_1=\B_1=\Gamma_T$. Now assume that $i>1$. If 
$X\in A_i\cap (\Gamma_{\Box}\cup \Gamma_{\bigcirc})$, then
by the definition of $A_i$, there is a transition $X\btran{} \gamma$ such that
$\gamma\in \Gamma_T\cup A_{i-1} \Gamma\cup B_{i-1} A_{i-1}\cup A_{i-1}$.
By induction hypothesis, there is a path $w'$ from $\gamma$ to $T$ in 
$G_\Delta(\sigma,\pi)$
of length at most $2^{i}+2^{i}=2^{i+1}$ such that the length of all
configurations entered by $w'$ 
is at most $\max\{i{+}1,i\}=i+1$.
The rest follows from the fact that $\sigma$ always returns a uniform
probability distribution, and if 
$X \in A_i\cap \Gamma_{\Diamond}$, then all outgoing transitions of $X$
have the form $X\btran{} \gamma$ where
$\gamma\in \Gamma_T\cup A_{i-1} \Gamma\cup B_{i-1} A_{i-1}\cup A_{i-1}$
(we use induction hypothesis to obtain the desired result).
The case when $X\in B_i$ follows similarly.
\end{proof}

Proposition~\ref{prop:witnesses} is obtained as a corollary to the
following (stronger) claim that will also be used later when 
synthesizing a regular $(T,{=}1)$-winning strategy for player~$\Box$.

\begin{proposition}
\label{prop:no-wit-reg-win}
  Let $W$ be the set of all witnesses (see
  Definition~\ref{def-witness}). If $W=\emptyset$, then there is a
  regular strategy $\sigma$ of player~$\Box$, computable
  in polynomial time, which is $(T_{\varepsilon},{=}1)$-winning in
  every configuration of $\Delta$.
\end{proposition}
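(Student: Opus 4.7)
The plan is to combine two sub-strategies whose existence is furnished by the two halves of $W=\emptyset$: inside the greatest terminal set $C$ I will use an SMD almost-sure terminator in $\Delta_C$ (supplied by $W_2=\emptyset$), and outside $C$ I will use the uniform strategy (whose positive probability of reaching $T_\varepsilon$ is supplied by $W_1=\emptyset$ combined with Proposition~\ref{prop:stack-bound}). Unpacking Definition~\ref{def-witness}, $W=\emptyset$ is equivalent to $\Gamma\subseteq \wsetB{T_\varepsilon}{>}{0}$ together with $C\subseteq \wsetB{\varepsilon}{=}{1}$ computed in $\Delta_C$. The results cited in the proof of Lemma~\ref{lem-witness-strategy} turn the second inclusion into a polynomial-time constructible SMD strategy $\sigma^C$ of player~$\Box$ in $\Delta_C$ which is $(\varepsilon,{=}1)$-winning from every $Y\in C$. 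A crucial compatibility observation is that, $C$ being terminal, every $\Box$- or $\bigcirc$-rule $X\btran{}\alpha$ of $\Delta$ with $X\in C$ already satisfies $\alpha\in C^*$; hence $\sigma^C$ is also a legitimate choice for $\Box$ at $C$-symbols inside the full game $\Delta$.

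I then define the SMD strategy $\sigma$ by $\sigma(X\alpha):=\sigma^C(X)$ when $X\in C\cap \Gamma_\Box$ and $\sigma(X\alpha):=$ the uniform distribution on the outgoing rules of $X$ when $X\in \Gamma_\Box\setminus C$. This depends only on the topmost symbol, so it is SMD (hence regular) and computable in polynomial time given $\sigma^C$. The main technical step is a \emph{pop lemma}: under $\sigma$, for every $X\alpha$ and every $\pi\in\Pi$, almost surely either $T$ is hit or the stack is reduced to $\alpha$ (the top $X$ is popped) in finite time. From the pop lemma the proposition follows by induction on the height of the initial configuration: writing $\gamma=X_1\cdots X_n$, the pop-times $\tau_1\leq\tau_2\leq\cdots$ at which the stack first equals $X_{i+1}\cdots X_n$ are each a.s.\ finite (or $T$ is hit at some earlier stage), so $\calP_\gamma^{\sigma,\pi}(\R(T_\varepsilon))=1$.

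The pop lemma is proved by a case split on $X$. When $X\notin C$, the uniform part of $\sigma$ lets me invoke Proposition~\ref{prop:stack-bound} on the sub-game above $\alpha$: against any strategy of $\Diamond$, a path of length at most $2^{2|\Gamma|}$ from $X$ to $T_\varepsilon$ exists in $G_\Delta(\sigma,\pi')$ and has probability at least a uniform constant $q>0$ derived from the minimum stochastic probability and the maximum $\Box$-fanout of $\Delta$; iterating via the strong Markov property gives a.s.\ first passage. When $X\in C$, the projected sub-game stays inside $\Delta_C$ exactly as long as $\Diamond$ refrains from firing a non-$C$ rule at a $C\cap\Gamma_\Diamond$ symbol; along such $C$-safe stretches, $(\varepsilon,{=}1)$-winningness of $\sigma^C$ guarantees almost-sure pop of $X$. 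Whenever $\Diamond$ escapes $C$, the new top is either in $C$ (re-entering the $\sigma^C$ regime) or in $\Gamma\setminus C$ (handled by the previous case), and in both situations the uniform progress bound $q$ forces the excursion to end a.s.\ in finite time. The hardest step will be closing this interaction without circularity, which I plan to handle by a well-founded induction on a lexicographic pair combining the residual $C$-stack height and the size of the non-$C$ prefix currently stacked above, together with the Markov-chain observation that the total number of $C$-play steps under $\sigma^C$ is a.s.\ finite, so at most that many excursions can occur.
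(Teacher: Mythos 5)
Your high-level decomposition (almost-sure termination inside the greatest terminal set $C$, positive progress towards $T_\varepsilon$ outside $C$) is the same as the paper's, but the strategy you define is not the paper's strategy, and both places where your argument has to do real work have holes. The paper's $\sigma$ plays like the terminating strategy $\sigma_T$ only when the configuration begins with a $C^*$-prefix of length \emph{greater than} $2|\Gamma|$, and plays uniformly everywhere else --- in particular at configurations whose top symbol is in $C$ but whose $C^*$-prefix is shallow. That threshold is precisely what makes Proposition~\ref{prop:stack-bound} usable: the witnessing path from a symbol to $T_\varepsilon$ only visits configurations of length at most $2|\Gamma|$, hence never enters the $\sigma_T$-regime, and is therefore still realized with probability at least $\xi$ under the combined strategy. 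Your $\sigma$ instead switches to the deterministic SMD strategy $\sigma^C$ as soon as the \emph{top} symbol lies in $C$. Consequently, in your case ``$X\notin C$'' you cannot invoke Proposition~\ref{prop:stack-bound}: that proposition guarantees a path in $G_\Delta(\sigma_U,\pi)$ for the \emph{everywhere-uniform} $\sigma_U$, and this path may pass through intermediate configurations $Z\gamma$ with $Z\in C\cap\Gamma_\Box$ at which $\sigma^C$ assigns probability $0$ to the rule the path needs. The uniform progress bound $q>0$ you rely on is therefore unproved for your strategy.

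The second gap is the excursion argument for $X\in C$. The measure you propose is not well-founded in the relevant sense: the ``residual $C$-stack height'' can grow without bound under $\sigma^C$, since $\Delta_C$ rules may push. And the assertion that ``the total number of $C$-play steps under $\sigma^C$ is a.s.\ finite'' does not follow from $\sigma^C$ being $(\varepsilon,{=}1)$-winning in $\Delta_C$: once player~$\Diamond$ interleaves escapes out of $C$ and the play later returns to a $C$-topped configuration, the projected process is no longer a play of $G_{\Delta_C}$ against a strategy of $\Pi_{\Delta_C}$, so the almost-sure termination guarantee does not apply to it. This is exactly the circularity you flag, and it is not closed. The paper sidesteps both issues by arguing globally by contradiction: if a positive-measure set of runs misses $T_\varepsilon$, then (Lemma~\ref{lem:wit-3}) the rules used infinitely often on it form, together with $C$, a terminal set, hence lie in $\btran[C]{}$ by maximality of $C$; projecting the tails of those runs into $G_{\Delta_C}$ then forces termination with probability~$1$, a contradiction. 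Your SMD strategy may well be winning, but the argument given does not establish it; the simplest repair is to adopt the paper's threshold-based regular strategy (the proposition only asks for regularity, not SMD), after which Proposition~\ref{prop:stack-bound} applies verbatim.
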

  In particular, if $W=\emptyset$ then $\A = \emptyset$, and thus
  we obtain Proposition~\ref{prop:witnesses}. Now we prove 
  Proposition~\ref{prop:no-wit-reg-win}, relaying on further
  technical observations that are formulated and proved at appropriate places.
    
  As $W = \emptyset$, the two conditions of
  Definition~\ref{def-witness} are not satisfied by any 
  $Y \in \Gamma$. This means that for all $Y \in C$ we have that
  $Y \in \wsetB{\eps}{=}{1}$, where the set $\wsetB{\eps}{=}{1}$
  is computed in $\Delta_C$ (we again use Theorem~\ref{thm-determinacy}). 
  Due to \cite{EY:RMC-RMDP}, there exists a SMD
  strategy $\sigma_T$ for player~$\Box$ in $G_{\Delta_C}$ such that
  for every $Y \in C$ and every strategy $\pi$ of player~$\Diamond$ 
  in $G_{\Delta_C}$ we have that $\calP_Y^{\sigma_T,\pi}(\R(\varepsilon)) = 1$. 

  Let $\sigma_{U}$ be the SMD strategy of player~$\Box$ 
  which always returns the uniform probability distribution over
  the available edges.
  In the proof we use the following simple property of $\sigma_U$, which
  follows easily from Proposition~\ref{prop:stack-bound}.

\begin{lemma}\label{lem:wit-1}
  There is $\xi>0$ such that for every $X\in\Gamma$ and every
  $\pi\in\Pi$ there is a path $w$ from $X$ to 
  a configuration of $T_\varepsilon$ in
  $G_\Delta(\sigma_U,\pi)$ satisfying the following: The length of all
  configurations visited by $w$ is bounded by $2|\Gamma|$, and the 
  probability of $\run(w)$ in
  $G_\Delta(\sigma_U,\pi)$ is at least $\xi$. 
\end{lemma}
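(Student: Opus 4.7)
The plan is to refine the inductive argument underlying Proposition~\ref{prop:stack-bound} so that it tracks a lower bound on the probability of each transition along the path. The starting observation, which is the only place the hypothesis $W=\emptyset$ of the enclosing Proposition~\ref{prop:no-wit-reg-win} enters, is that condition~(1) of Definition~\ref{def-witness} fails for every stack symbol. Hence no $Y\in\Gamma$ lies in $\wsetD{T_\varepsilon}{=}{0}$, and by Theorem~\ref{thm-determinacy} every $Y\in\Gamma$ belongs to $\wsetB{T_\varepsilon}{>}{0}=\bigcup_{i=0}^{2|\Gamma|} B_i$ in the notation of Proposition~\ref{prop-greater-than-zero}.

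Let $q$ be the minimum of the least positive probability appearing in any stochastic rule of $\Delta$ and of $1/k$, where $k$ bounds the number of rules sharing a common left-hand side; I set $\xi:=q^{2^{2|\Gamma|}}>0$. Both quantities depend only on $\Delta$. The strengthened inductive claim reads: for every $X\in B_i$ and every $\pi\in\Pi$ there is a path $w$ from $X$ to $T_\varepsilon$ in $G_\Delta(\sigma_U,\pi)$ of length at most $2^i$, visiting only configurations of length at most $i$, and such that \emph{every} transition of $w$ has probability at least $q$; the cylinder probability of $\run(w)$ is then at least $q^{2^i}\geq\xi$. The base $i=1$ uses $B_1=\Gamma_T$ and the trivial zero-length path. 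For the inductive step, the choice of continuation rule $X\btran{}\beta$ at $X\in\Gamma_\Box\cup\Gamma_\bigcirc$ is taken directly from the proof of Proposition~\ref{prop-greater-than-zero}, and the recursion is applied prefix-by-prefix to $\beta$; the bookkeeping on path length and configuration length is identical to that in Proposition~\ref{prop:stack-bound}.

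The new ingredient is the handling of $\Diamond$-vertices. At a reached configuration $X\alpha$ with $X\in\Gamma_\Diamond\cap B_i$, the definition of $B_i$ forces \emph{every} rule $X\btran{}\beta$ to lie in $B_{i-1}^*A_{i-1}\Gamma^*\cup B_{i-1}^*$, so any rule one picks is ``structurally good''. I will therefore pick a rule on which $\pi(X\alpha)$ puts mass at least $1/k$; such a rule exists by pigeonhole over the at most $k$ rules of $X$. Its target remains good, so the induction hypothesis applies to the prefixes of $\beta\alpha$. At $\Box$-vertices the rule chosen by $\sigma_U$ has probability at least $1/k\geq q$, and at $\bigcirc$-vertices the edge probability is at least $q$ by the definition of $q$. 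All transitions of $w$ therefore have probability at least~$q$, which yields the claimed bound on $\run(w)$.

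The main obstacle, and the reason why quoting Proposition~\ref{prop:stack-bound} as a black box does not suffice, is precisely this $\Diamond$-case: the path produced by that proposition may traverse an edge to which $\pi$ assigns arbitrarily small positive mass, so that the cylinder probability of $\run(w)$ collapses as $\pi$ varies. The resolution is that the structural property enjoyed by $B_i$ at $\Diamond$-symbols (valid under $W=\emptyset$) leaves us free to steer through whichever outgoing rule $\pi$ plays with non-negligible probability, thereby decoupling the bound on $\run(w)$ from the adversary.
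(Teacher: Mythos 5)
Your proof is correct, and it isolates exactly the right obstacle: the path guaranteed by Proposition~\ref{prop:stack-bound} could, a priori, use a $\Diamond$-edge to which $\pi$ assigns vanishingly small mass. You resolve this by re-opening the induction over the sets $B_i$ and inserting a pigeonhole step at $\Diamond$-vertices, exploiting the fact that \emph{every} outgoing rule of a $\Diamond$-symbol in $B_i$ is structurally good, so you are free to follow whichever rule receives mass at least $1/k$ under $\pi$. The paper reaches the same bound $\xi = \bigl(\min\{\mu,\tfrac{1}{|{\btran{}}|}\}\bigr)^{2^{2|\Gamma|}}$ by a slicker packaging of the same pigeonhole observation: it first replaces the given (possibly randomized) $\pi$ by the deterministic strategy $\hat{\pi}$ that always selects an edge of maximal probability under $\pi$, invokes Proposition~\ref{prop:stack-bound} \emph{as a black box} for $\hat{\pi}$, and then observes that the resulting path, read back in $G_\Delta(\sigma_U,\pi)$, has every $\Diamond$-transition of probability at least $1/|{\btran{}}|$. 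So your claim that the black box ``does not suffice'' is accurate only for a direct application to $\pi$; applied to the argmax-derived $\hat{\pi}$ it does suffice, and that is what the paper buys — no duplication of the inductive bookkeeping. What your version buys is self-containment and an explicit strengthened invariant (every transition of $w$ has probability at least $q$), at the cost of repeating the length and stack-height accounting already done in Proposition~\ref{prop:stack-bound}. Both arguments share the same mild informality about shifted histories when the induction descends into suffixes of $\beta\alpha$, so you are not losing rigor relative to the original.
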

\begin{proof}
Since $W=\emptyset$, there are no type~(1) witnesses (see 
Definition~\ref{def-witness}), i.e., 
$\Gamma \cap \wsetD{T_\varepsilon}{=}{0} = \emptyset$, which means that
$\Gamma\subseteq\wsetB{T_\eps}{>}{0}$ by Theorem~\ref{thm-determinacy}.
Let  $\pi\in\Pi$ be an arbitrary
(possibly randomized) strategy. We define the associated deterministic
strategy $\hat{\pi}$, which for every finite sequence of configurations
$\alpha_1,\ldots,\alpha_n$ selects an edge $\alpha_n \gtran{} \beta$
such that $\alpha_n \gtran{} \beta$ is assigned a maximal probability
in the distribution assigned to $\alpha_1,\ldots,\alpha_n$ by 
the strategy $\pi$.
In other words, $\alpha_n \gtran{} \beta$ is an edge selected
with a maximal probability by $\pi$. If there are several candidates 
for $\alpha_n \gtran{} \beta$, any of them can be chosen.
Obviously, every path in $G_\Delta(\sigma_U,\hat{\pi})$ initiated
in $X$ is also a path in $G_\Delta(\sigma_U,\pi)$ initiated
in $X$. Due to Proposition~\ref{prop:stack-bound}, 
there is a path $\hat{w}$ from $X$ to $T_\varepsilon$
in $G_\Delta(\sigma_U,\hat{\pi})$ such that
the length of $\hat{w}$ is bounded by $2^{2|\Gamma|}$ and the stack height
of all configurations visited by $\hat{w}$ is bounded by $2|\Gamma|$. 
Now consider the corresponding path $w$ in $G_\Delta(\sigma_U,\pi)$.
The only difference between $w$ and $\hat{w}$ is that the 
probability of the transitions selected by player~$\Diamond$ 
is not necessarily one in~$w$. However, due to the definition of 
$\hat{w}$ we immediately obtain that the probability of each such 
transition is at least $\frac{1}{|{\btran{}}|}$ (this bound is not
tight but sufficient for our purposes). Since $\sigma_U$ is uniform,
the same bound is valid also for the probability of transitions
selected by player~$\Box$. Let $\mu$ be the least probability weight 
of a probabilistic rule assigned by $\Prob$.
We put 
\[
  \xi = \left(\min\{\mu,\frac{1}{|{\btran{}}|}\}\right)^{2^{2|\Gamma|}}
\]
Obviously, $\calP(\run(w)) \geq \xi$ and we are done.
\end{proof}
  
Now we are ready to define the regular strategy $\sigma \in \Sigma$ 
whose existence was promised in Proposition~\ref{prop:no-wit-reg-win}.
Recall that regular strategies are memoryless, and hence they can be 
formally understood as functions which assign to a given configuration 
$\beta$ a probability distribution on the outgoing edges of $\beta$.
For a given $X\alpha \in \Gamma_\Box\Gamma^*$, we put 
$\sigma(X\alpha) = \sigma_T(X\alpha)$ if $X\alpha$ starts with some 
$\beta \in C^*$ where $|\beta| > 2|\Gamma|$. Otherwise, we 
put $\sigma(X\alpha) = \sigma_U(X\alpha)$.
 
Observe that the strategy $\sigma$ can easily be represented by a
finite state automaton with $\calO(|\Gamma|)$ states in the sense of
Definition~\ref{def-regular-strategy}.  Moreover, such an automaton
is easily constructible in polynomial time because the set $C$ is
computable in polynomial time. So, it remains to prove that $\sigma$ is
\mbox{$(T_{\varepsilon},{=}1)$-winning} in every configuration of $\Delta$.
 
Let us fix some strategy $\pi \in \Pi$. Our goal is to show that for
every $\alpha\in \Gamma^+$ we have that 
$\calP_{\alpha}^{\sigma,\pi}(\R(T_\varepsilon)) = 1$.
Assume the converse, i.e., there is some $\alpha\in\Gamma^+$ such that 
$\calP_{\alpha}^{\sigma,\pi}(\R(T_\varepsilon))<1$. 

\emph{Proof outline:} 
Let $w$ be a run of $G_\Delta(\sigma,\pi)$. We say that given rule 
of $\Delta$ is \emph{used infinitely often in 
$w$} if the rule was used to derive infinitely many transitions of~$w$. 
Further, we say that \emph{$w$ eventually uses only a given subset 
$\leadsto$ of $\btran{}$} if there is some $i \in \Nset$ 
such that all transitions $w(j) \tran{} w(j{+}1)$, where $j \geq i$, 
were derived using a rule of $\leadsto$.

We show that the set of all runs initiated in 
$\alpha$ that do not visit $T_{\varepsilon}$ contains a subset
$V$ of positive probability such that all runs of $V$
eventually use only the rules of $\Delta_C$. Then, we show that 
player~$\Box$, who plays according to the strategy $\sigma$, selects 
the rules of $\Delta_C$ in such a way that almost all runs 
that use only the rules of $\Delta_C$ eventually terminate (i.e.,
visit the configuration~$\varepsilon$). However, this contradicts 
the fact that $V$ contains only non-terminating runs. 
Now we elaborate this outline into a formal proof.

\begin{lemma}
\label{lem:wit-3}
  There is a set of runs $V \subseteq \run(G_\Delta(\sigma,\pi),\alpha)$ 
  such that $\calP_{\alpha}^{\sigma,\pi}(V) > 0$, and for every $w \in V$
  we have that $w$ does not visit $T_\varepsilon$ and all rules that are 
  used infinitely often in $w$ belong to $\btran[C]{}$.
\end{lemma}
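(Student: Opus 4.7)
The plan is to take $V_0 := \run(G_\Delta(\sigma,\pi),\alpha) \setminus \R(T_\varepsilon)$, which has positive probability by the standing assumption, and to exhibit inside $V_0$ a positive-probability sub-event $V$ on which only finitely many rules from $\btran{}\setminus\btran[C]{}$ are fired. The central reduction rests on a structural observation: any rule $X \btran{} \gamma$ not in $\btran[C]{}$ satisfies either $X \notin C$, or $X \in C$ with $\gamma \notin C^*$; the second possibility forces $X \in \Gamma_\Diamond$ (for $\Gamma_\Box \cup \Gamma_\bigcirc$ all rules from $C$ have right-hand side in $C^*$ by terminality of $C$), and in that case the top of $\beta_{i+1}$ lies outside $C$. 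So every application of a non-$\btran[C]{}$ rule makes either $\beta_i$ or $\beta_{i+1}$ \emph{bad} (top not in $C$); consequently, if non-$\btran[C]{}$ rules are used infinitely often along a run, so are bad configurations.

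I would then let $E$ denote the event that bad configurations occur infinitely often, observe that $V_0 \setminus V \subseteq V_0 \cap E$, and reduce the lemma to showing $\calP_\alpha^{\sigma,\pi}(V_0 \cap E) = 0$, after which $V := V_0 \setminus E$ has the required positive probability. The key step is to establish a uniform bound: there exist constants $\xi' > 0$ and $N \in \Nset$ depending only on $\Delta$ such that for every $\beta \in \Gamma^+$ with top outside $C$ and every $\pi' \in \Pi$,
\begin{equation*}
  \calP_\beta^{\sigma,\pi'}(\R_N(T_\varepsilon)) \;\geq\; \xi'.
\end{equation*}
Once this bound is in hand, a standard Borel--Cantelli-type argument using the strong Markov property at consecutive bad times yields that on $E$ the run almost surely visits $T_\varepsilon$, contradicting membership in $V_0$ and forcing $\calP_\alpha^{\sigma,\pi}(V_0 \cap E) = 0$.

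The uniform bound is to be derived from Lemma~\ref{lem:wit-1} applied to the top symbol $X \notin C$ of $\beta$: this yields a path from $X$ to $T_\varepsilon$ in $G_\Delta(\sigma_U,\pi')$ of length at most $2^{2|\Gamma|}$, whose intermediate configurations have height at most $2|\Gamma|$ and which carries probability at least $\xi$. Translating the path to $\beta = X\eta$ gives a path that either reaches a configuration with top in $\Gamma_T$ (a visit to $T_\varepsilon$) or pops $X$ and reaches $\eta$; in the latter case an iteration of the argument starting from the new top symbol completes the bound. The main obstacle is to verify that $\sigma$ actually behaves as $\sigma_U$ at the $\Box$-configurations along the translated path: at configurations whose top part above $\eta$ is not itself a $C^*$-word, this is automatic since the $C^*$-prefix then has length at most $2|\Gamma|$; however, the translated path can temporarily visit $\Box$-configurations whose top part does lie in $C^*$, and if $\eta$ happens to begin with a long $C^*$-prefix, this may push the total $C^*$-prefix above $2|\Gamma|$ and cause $\sigma$ to switch to $\sigma_T$. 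Overcoming this subtlety---either by modifying the auxiliary path so that its intermediate top parts never enter $C^*$, or by a renewal-style decomposition tracking the joint behaviour of the top-path and the $C^*$-prefix of $\eta$---is the technical heart of the argument.
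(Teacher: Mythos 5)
Your reduction to ``bad configurations occur only finitely often'' is a genuinely different route from the paper's (which partitions the non-reaching runs according to the set of rules used infinitely often and then shows that the left-hand sides of the selected rules, together with $C$, form a terminal set), but the key lemma you rest it on does not hold. The uniform bound $\calP_\beta^{\sigma,\pi'}(\R_N(T_\varepsilon)) \geq \xi'$ with a \emph{fixed} horizon $N$ is false: take $\beta = XY^n$ where $X \notin C$, the path supplied by Lemma~\ref{lem:wit-1} for $X$ ends in $\varepsilon$, and $Y \btran{} \varepsilon$ is the only rule for $Y$ (so $Y \in C$). Reaching $T_\varepsilon$ from $\beta$ then requires popping the entire suffix, hence at least $n$ steps, and $\calP_\beta^{\sigma,\pi'}(\R_N(T_\varepsilon)) = 0$ once $n > N$. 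Dropping the horizon repairs the statement formally, but the resulting claim is essentially Proposition~\ref{prop:no-wit-reg-win} restricted to configurations with top outside $C$, and proving it is precisely the difficulty you defer in your last paragraph: once the descent enters a long $C^*$-prefix, $\sigma$ switches to $\sigma_T$ and the $\sigma_U$-based probability estimate of Lemma~\ref{lem:wit-1} no longer applies. That deferred step is not a peripheral subtlety---it is where the substance of the whole argument lies. The paper avoids it inside this lemma by only ever invoking Lemma~\ref{lem:wit-1} for the event ``reach $T$ \emph{or pop back down to the current suffix} $Q\delta$'', which stays within configurations $\hat{\beta}Q\delta$ with $|\hat{\beta}| \leq 2|\Gamma|$ (where $\sigma$ provably coincides with $\sigma_U$, since $Q \notin C$ caps the $C^*$-prefix) and which genuinely has a uniformly bounded horizon; the interaction with $\sigma_T$ is handled only \emph{after} the lemma, by projecting the tail of the play onto the game $G_{\Delta_C}$.

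A secondary gap: your structural observation that every application of a rule outside $\btran[C]{}$ makes $\beta_i$ or $\beta_{i+1}$ bad fails for a rule $X \btran{} YZ$ with $X \in C \cap \Gamma_\Diamond$, $Y \in C$ and $Z \notin C$, where neither the source nor the target configuration has its top symbol outside $C$. The implication ``finitely many bad configurations $\Rightarrow$ finitely many non-$\btran[C]{}$ rule applications'' therefore needs the SNF assumption (under which $\Diamond$-rules have a single symbol on the right-hand side), which you would have to invoke explicitly; the paper's proof of this lemma does not need it.
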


\begin{proof}
Let $A$ be the set of all $w \in \run(G_{\Delta}(\sigma,\pi),\alpha)$ 
such that $w$ does not visit $T_{\varepsilon}$. By our assumption, 
$\calP_{\alpha}^{\sigma,\pi}(A) > 0$. The runs of $A$ can be split
into finitely many disjoint subsets according to the set of rules
which are used infinitely often. Since $\calP_{\alpha}^{\sigma,\pi}(A) > 0$,
at least one of these subsets $V$ must have positive probability.
Let $\btran{}_V$ be the associated set of rules that are used
infinitely often in the runs of~$V$.

We prove that $\btran[V]{} \subseteq \btran[C]{}$.
Let $L \subseteq \Gamma$ be the set of all symbols that appear on the
left-hand side of some rule in $\btran[V]{}$. To show that 
$\btran[V]{} \subseteq \btran[C]{}$, it suffices to prove that
\begin{enumerate}
\item[(a)] for every $Y \in (L \smallsetminus C) \cap (\Gamma_{\bigcirc}
\cup \Gamma_{\Box})$ we have that if $Y \btran{} \beta$, then also
$Y \btran[V]{} \beta$;
\item[(b)] for all rules $Y \btran[V]{} \beta$ we have that 
  $\beta \in (L\cup C)^*$.
\end{enumerate}
Observe that (a) and (b) together imply that
$L\cup C$ is a terminal set. Hence, $L\cup C=C$ by the maximality of $C$,
and thus $\btran[V]{} \subseteq \btran[C]{}$ as needed.

Claim~(a) follows from the fact that player~$\Box$, who plays according 
to the strategy $\sigma$, selects edges uniformly at random in 
all configurations of $((L \smallsetminus C)\cap \Gamma_{\Box})\cdot \Gamma^*$. 
Then every rule $Y \btran{} \beta$, where 
$Y \in (L \smallsetminus C) \cap (\Gamma_{\bigcirc} \cup \Gamma_{\Box})$, 
has the probability of being selected greater than some fixed non-zero
constant, which means that $Y \btran[V]{} \beta$ (otherwise, the probability
of~$V$ would be zero).
 
Now we prove Claim (b). Assume that $Y \btran[V]{} \gamma$. If $\gamma=\eps$,
then $\gamma \in (L\cup C)^{*}$.  If $\gamma=P$, then surely $P\in L$ 
because configurations with $P$ on the top of the stack occur infinitely 
often in all runs of~$V$. If $\gamma=PQ$, then $P\in L$ by applying
the previous argument. If $Q \in C$, we are done. Now assume that 
$Q\notin C$.  Note that then player~$\Box$ selects edges uniformly 
at random in all configurations of the form
$\beta Q \delta$ where $|\beta|\leq 2|\Gamma|$. 
By Lemma~\ref{lem:wit-1}, there is $0<\xi<1$ such that for every
configuration of the form $PQ\delta$ there is a path $w$ from
$PQ\delta$ to $T \cup \{Q\delta\}$ in $G_\Delta(\sigma,\pi)$ satisfying
the following:
\begin{itemize}
\item all configurations in $w$ are of the form $\hat{\beta} Q\delta$ 
  where $|\hat{\beta}|\leq 2|\Gamma|$;
\item the probability of following $w$ in $G_\Delta(\sigma,\pi)$ is
  at least $\xi$.
\end{itemize}
It follows that every run of $V$ enters configurations of 
$\{Q\}\cdot\Gamma^*$ infinitely many times because every run of $V$ 
contains infinitely many occurrences of configurations of the form 
$PQ\delta$ and no run of $V$ enters $T$. Hence, $Q \in L$.
\end{proof}
Now we prove that $\calP_{\alpha}^{\sigma,\pi}(V)=0$ and obtain 
the desired contradiction. By Lemma~\ref{lem:wit-3}, all runs of
$V$ eventually use only the rules of $\btran[C]{}$. 
Each run $w \in V$ uniquely determines its shortest prefix
$v_w$ after which no rules of $\btran{} {\smallsetminus} \btran[C]{}$ are 
used and the length of each configuration visited after the prefix $v_w$ 
is at least as large as the length of the last configuration visited 
by~$v$. For a given finite path $v$ initiated in $\alpha$, let
$U_v = \{w \in V \mid v_w = v\}$. Obviously, $V = \biguplus_v U_v$.
Since there are only countably many $v$'s, it suffices to prove
that $\calP_{\alpha}^{\sigma,\pi}(U_v)=0$ for every~$v$. So, let us 
fix a finite path $v$ initiated in $\alpha$, and let $Y\beta$ be the last 
configuration visited by $v$. Intuitively, we show that after performing 
the prefix $v$, the strategies $\sigma$ and $\pi$
can be ``simulated'' by suitable strategies $\sigma'$ and $\pi'$ in the
game $G_{\Delta_C}$ so that the set of runs $U_v$ is ``projected'' 
(by ignoring the prefix $v$ and cutting off $\beta$ from the bottom 
of the stack) onto the set of runs $U$ in the play 
$G_{\Delta_C}(\sigma',\pi')$  so that
\begin{equation*}
\calP_{\alpha}^{\sigma,\pi}(U_v) \ = \ \calP_{\alpha}^{\sigma,\pi}(\run(v)) \cdot 
\calP_Y^{\sigma',\pi'}(U)
\end{equation*}
Then, we show that $\calP^{\sigma',\pi'}_Y (U) = 0$.
This is because the strategy $\sigma'$ is ``sufficiently similar''
to the strategy $\sigma_T$, and hence the probability 
of visiting $\varepsilon$ in $G_{\Delta_C}(\sigma',\pi')$ is~$1$.

Now we formalize the above intuition. First, let us realize that
every probability distribution $f$ on the outgoing edges of
a BPA configuration $\alpha$ determines a unique \emph{rule distribution}
$f_r$ on the rules of the considered BPA game such that for every
$\alpha \gtran{} \alpha'$ we have that 
$f(\alpha \gtran{} \alpha') = f_r(Z \btran{} \gamma)$, where 
$Z \btran{} \gamma$ is the rule used to derive the edge 
$\alpha \gtran{} \alpha'$.

Observe that $Y\in C$ by the definition of $U_v$. Let $\sigma'$ 
be a MR strategy for player~$\Box$ in $G_{\Delta_C}$
such that for every $\gamma\in C^+$ we have that 
$\sigma'(\gamma) = \sigma(\gamma\beta)$. 
Further, let $\pi'$ be a strategy for player~$\Diamond$ in 
$G_{\Delta_C}$ such that for all $n \in \Nset$ and all
$\alpha_1,\dots,\alpha_n \in C^*$ we have that the rule distribution
of $\pi'(Y,\alpha_1,\ldots,\alpha_n)$ is the same as the
rule distribution of $\pi(v,\alpha_1\beta,\ldots,\alpha_n\beta)$.
Observe that every run $w \in U_v$ determines a unique run
$w_C \in \run(Y)$ in $G_{\Delta_C}(\sigma',\pi')$ obtained from
$w$ by first deleting the prefix $v(0),\ldots, v(|v|-2)$ and
then ``cutting off'' $\beta$ from all configurations in the
resulting run. Let $U = \{w_C \mid w \in U_v\}$. Now it is easy
to see that $\calP_{\alpha}^{\sigma,\pi}(U_v)  = 
\calP_{\alpha}^{\sigma,\pi}(\run(v)) \cdot \calP_Y^{\sigma',\pi'}(U)$.
Note that all runs of $U$ avoid visiting
$\varepsilon$. However, we show that almost all runs of
$G_{\Delta_C}(\sigma',\pi')$ reach $\varepsilon$, which implies
$\calP_Y^{\sigma',\pi'}(U)=0$ and hence also $\calP_{\alpha}^{\sigma,\pi}(U_v) = 0$.

Observe that the strategy $\sigma'$ works as follows. There is a
constant $k\leq 2|\Gamma|$ such that in every $\gamma\in C^+$, where
$|\gamma|\leq k$, player~$\Box$ selects edges uniformly at random. 
Otherwise, player~$\Box$ selects the same edges as if she was
playing according to $\sigma_T$. We show that there is $0<\xi<1$ such 
that for every $\gamma$, where $|\gamma|\leq k$, the probability of 
reaching $\varepsilon$ from $\gamma$ in $G_{\Delta_C}(\sigma',\pi')$ is 
at least~$\xi$. Note that if player~$\Box$ was playing  uniformly in 
all configurations, the existence of such a~$\xi$ would be guaranteed
by Lemma~\ref{lem:wit-1}. However, playing according to $\sigma_T$ in 
configurations whose length exceeds $k$ can only \emph{increase} 
the probability of reaching $\varepsilon$.  Now note that almost all runs of
$\run(Y)$ in $G_{\Delta_C}(\sigma',\pi')$ visit configurations of the
form $\gamma\in C^+$, where $|\gamma|\leq k$, infinitely often.
From this we obtain that almost all runs of 
$\run(Y)$ in $G_{\Delta_C}(\sigma',\pi')$ reach~$\varepsilon$.

\subsection{Proofs of Proposition~\ref{prop:cor_alg}
and Theorem~\ref{thm:box_strat_1}}
\label{sec-correctness}

The procedure \FuncSty{Main} (see page~\pageref{proc-main}) 
starts by initializing $\mathcal{W}$ and
$\U$ to $\emptyset$, and the auxiliary BPA game $\Theta$ to
$\bar{\Delta}$ (the set of rules of $\Theta$ is denoted by
$\bbtran{}$).  In the main loop initiated at line~\ref{alg:comp-wit}
we first compute the greatest set $W$ of witnesses in the current game
$\Theta$. At line~\ref{alg:att-compute}, we assign to $\M$ the least
fixed-point of the function $\FuncSty{Att}_{\Theta,W}$. The BPA game
$\Theta$ is then modified by ``cutting off'' the set $\M$ at
lines~\ref{alg:symbol-remove}--\ref{alg:remove-finish}.  Note that the
resulting BPA game is again in~SNF and it is strictly smaller than the
original~$\Theta$.  Then, the current sets $\mathcal{W}$ and $\U$ are
enlarged at lines~\ref{alg:add-cal-W},\ref{alg:add-cal-U}, and the new
(strictly smaller) game $\Theta$ is processed in the same way. This
goes on until $\mathcal{W}$ and $\U$ stabilize, which obviously
requires only $\calO(|\initdelta|)$ iterations of the main loop.

Let $K$ be the number of iterations of the main loop. For every
\mbox{$0 \leq i \leq K$}, let $\Theta_i$, $\mathcal{W}_i$, and $\U_i$
be the values of $\Theta$, $\mathcal{W}$, and $\U$ after executing
exactly $i$~iterations. Further, $W_i$ denotes the set of all
witnesses in $\Theta_i$, and $\M_i$ denotes the least fixed-point of
$\FuncSty{Att}_{\Theta_i,W_i}$. The symbols $\Sigma_i$ and $\Pi_i$
denote the set of all strategies for player~$\Box$ and
player~$\Diamond$ in $G_{\Theta_i}$, respectively.  Finally,
$\bar{\Gamma}_i$, $\bbtran[i]{}$, $\A_i$, and $\C_i$
denote the stack alphabet, the set of all rules, the set $\A$, and the
set $\C$ of $\Theta_i$, respectively. The edge relation
of~$G_{\Theta_i}$ is denoted by $\gtran[i]{}$.  Observe that $\Theta_0 =
\bar{\Delta}$, $\mathcal{W}_0 = \emptyset$, $\A_0 = \A$, $\C_0 = \C$,
$W_K = \emptyset$, and $\mathcal{W}_K,\U_K$ is the result of the
procedure~\FuncSty{Main}. Let us note that in this section,
the sets $\wsetD{T_\varepsilon}{<}{1}$ and $\wsetD{T}{<}{1}$
are always considered in the game $\bar{\Delta} = \Theta_0$.

We start by a simple observation which formalizes the relationship 
between the symbols $X$ and $\pp{X}$ in $\bar{\Delta} = \Theta_0$.
A proof is straightforward. 

\begin{lemma}
\label{lem-relationship}
  If $X \in \wsetD{T_\varepsilon}{<}{1}$, then $\pp{X} \in \wsetD{T}{<}{1}$.
\end{lemma}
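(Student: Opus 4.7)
The plan is to exploit the tight one-to-one correspondence between plays starting from $X$ and plays starting from $\pp{X}$ in $\initdelta$, which is built directly into the construction of $\initdelta$. All configurations reachable from $\pp{X}$ have the form $\alpha\pp{Z}$ with $\alpha \in \Gamma^*$ and $Z \in \Gamma$, and the projection $\phi(\alpha\pp{Z}) \coloneqq \alpha Z$ sends them onto nonempty $\Delta$-configurations. Inspecting the construction, the rules applicable at $\pp{Z}$ mirror those applicable at $Z$, and the probabilities on stochastic symbols match: $Z \btran{} \varepsilon$ lifts to $\pp{Z} \btran{} \pp{Z}$, $Z \btran{} Y$ to $\pp{Z} \btran{} \pp{Y}$, and $Z \btran{} YW$ to $\pp{Z} \btran{} Y\pp{W}$. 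The only way a play from $\pp{X}$ can diverge (under $\phi$) from the corresponding $\Delta$-play from $X$ is through the self-loop $\pp{Z} \btran{} \pp{Z}$ that stands in for the terminating rule $Z \btran{} \varepsilon$. Our SNF assumption guarantees that any such $Z$ lies in $\Gamma[2]$ and has $Z \btran{} \varepsilon$ as its \emph{only} rule, so in the mirror the symbol $\pp{Z}$ is trapped by its unique rule $\pp{Z} \btran{} \pp{Z}$ with no further strategic choice required.

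I would then fix a strategy $\pi \in \Pi$ witnessing $X \in \wsetD{T_\varepsilon}{<}{1}$ and lift it to a strategy $\pi'$ for plays from $\pp{X}$: given a history $h'$ starting at $\pp{X}$, project it to a $\Delta$-history $h$ via $\phi$ (replacing any tail after a virtual termination by $\varepsilon$), and let $\pi'(h')$ be the distribution that $\pi(h)$ prescribes, translated onto the mirrored $\initdelta$-rules; at histories where the virtual $\Delta$-play has already terminated there is nothing to choose and $\pi'$ may be defined arbitrarily. Symmetrically, every $\sigma' \in \Sigma$ projects to a strategy $\sigma \in \Sigma$ for plays from $X$ by the same recipe.

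Finally, a straightforward induction on path length shows that corresponding finite paths carry equal probability in the two plays $G_{\initdelta}(\sigma',\pi')$ from $\pp{X}$ and $G_{\initdelta}(\sigma,\pi)$ from $X$. Since $\bar{\Gamma}_T = \{R,\pp{R} \mid R \in \Gamma_T\}$, a configuration $\alpha\pp{Z}$ is in the target set of $\initdelta$ iff $\phi(\alpha\pp{Z}) = \alpha Z \in T$, so the reachability events match, and hence $\calP_{\pp{X}}^{\sigma',\pi'}(\R(T)) = \calP_X^{\sigma,\pi}(\R(T)) \leq \calP_X^{\sigma,\pi}(\R(T_\varepsilon)) < 1$. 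As $\sigma'$ was arbitrary, $\pi'$ is $(T,{<}1)$-winning in $\pp{X}$, proving $\pp{X} \in \wsetD{T}{<}{1}$. The one mildly delicate point is the bookkeeping around virtual termination in the path-by-path correspondence, which the SNF hypothesis reduces to a simple case analysis.
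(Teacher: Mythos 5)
Your argument is correct and is precisely the ``straightforward'' proof the paper omits: it formalizes the stated intuition that $\pp{X}$ behaves exactly like $X$ except that the bottom symbol can never be popped, via the measure-preserving correspondence of plays under the projection $\alpha\pp{Z}\mapsto\alpha Z$, with virtual termination landing in the non-target sink $\pp{Z}\btran{}\pp{Z}$. The bookkeeping you flag (SNF forcing $Z\in\Gamma[2]$ to have the single rule $Z\btran{}\varepsilon$, and $\Gamma[2]\cap\Gamma_T=\emptyset$) is exactly what makes the case analysis go through, so nothing is missing.
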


Now we show that $\mathcal{W}_K = \A_0$ and $\U_K = \C_0$. 
For every $0 \leq i \leq K$, let 
$\wsetD{T_\varepsilon,i}{<}{1} = \U_i^*\mathcal{W}_i\bar{\Gamma}^*$
and $\wsetD{T,i}{<}{1} = \U_i^*\mathcal{W}_i\bar{\Gamma}^* \cup  \U_i^*$.
The ``$\subseteq$'' direction of Proposition~\ref{prop:cor_alg} is 
implied by the following lemma:
\begin{lemma}
\label{lem-subset-prelim}
For every $0 \leq i \leq K$, there are SMD strategies 
$\pi[\W_i],\pi[\U_i] \in \Pi_0$ constructible by a polynomial-time
algorithm with $\NP \cap \coNP$ oracle such that
\begin{itemize}
\item[(1)] For every $\bar{X} \in \mathcal{W}_i$ and every 
  $\sigma_0 \in \Sigma_0$ we have that 
  $\calP^{\sigma_0,\pi[\W_i]}_{\bar{X}}(\R(T_\varepsilon),G_{\Theta_0}) < 1$
  or 
  $\calP^{\sigma_0,\pi[\W_i]}_{\bar{X}}(\R(\wsetD{T_\varepsilon,i{-}1}{<}{1},G_{\Theta_0})) > 0$
\item[(2)] For every $\bar{Y} \in \U_i$ and every 
  $\sigma_0 \in \Sigma_0$ we have that 
  $\calP^{\sigma_0,\pi[\U_i]}_{\bar{Y}}(\R(T),G_{\Theta_0}) < 1$
  or   
  $\calP^{\sigma_0,\pi[\U_i]}_{\bar{Y}}(\R(\wsetD{T,i{-}1}{<}{1},G_{\Theta_0})) > 0$
\item[(3)] If $i > 0$, then $\pi[\W_i](\bar{X}) = \pi[\W_{i{-}1}](\bar{X})$ 
  for every
  $\bar{X} \in \W_{i-1}$ and
  $\pi[\U_i](\bar{Y}) = \pi[\U_{i{-}1}](\bar{Y})$ for every
  $\bar{Y} \in \U_{i-1}$.
\end{itemize}
\end{lemma}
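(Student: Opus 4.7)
The plan is induction on $i$. For the base case $i=0$, the sets $\W_0,\U_0$ are empty, so conditions (1) and (2) are vacuous; pick $\pi[\W_0],\pi[\U_0]$ to be arbitrary SMD strategies.

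For the inductive step, assume the lemma for $i-1$. The key inputs at step $i$ are two strategies on $\Theta_i$. First, applying Lemma~\ref{lem-witness-strategy} to the strictly smaller game $\Theta_i$ yields a SMD strategy $\hat\pi_i\in\Pi_i$, constructible in polynomial time with an $\NP\cap\coNP$ oracle, such that for every $\bar X\in W_i$ and every $\sigma\in\Sigma_i$ we have $\calP^{\sigma,\hat\pi_i}_{\bar X}(\R(T_\varepsilon),G_{\Theta_i})<1$. Second, a standard attractor extraction on the fixed-point iteration of $\FuncSty{Att}_{\Theta_i,W_i}$ picks, for each $\bar X\in(\M_i\setminus W_i)\cap\bar\Gamma_\Diamond$, an outgoing edge in $\Theta_i$ that strictly decreases $\bar X$'s attractor-rank. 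Merging these pieces gives a SMD strategy $\pi_i^{\star}\in\Pi_i$ on $\M_i\cap\bar\Gamma_\Diamond$ under which, from any $\bar X\in\M_i$ and any $\sigma\in\Sigma_i$, the event ``reach $W_i$ and subsequently avoid $T_\varepsilon$'' has positive probability in $G_{\Theta_i}$. Define $\pi[\W_i]\in\Pi_0$ to agree with $\pi[\W_{i-1}]$ on $\W_{i-1}\cap\bar\Gamma_\Diamond$, with $\pi_i^{\star}$ on $(\M_i\setminus\W_{i-1})\cap\bar\Gamma_\Diamond$, and arbitrarily elsewhere; define $\pi[\U_i]$ analogously, using the $\U$-variant supplied by the $\C$-computation and Lemma~\ref{lem-relationship}. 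Condition~(3) and the complexity claim are then immediate from the construction.

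To verify~(1), fix $\bar X\in\W_i$ and $\sigma_0\in\Sigma_0$. If $\bar X\in\W_{i-1}$, the induction hypothesis applied to $\pi[\W_{i-1}]$ yields the conclusion, since $\wsetD{T_\varepsilon,i-2}{<}{1}\subseteq\wsetD{T_\varepsilon,i-1}{<}{1}$ and the disagreement of $\pi[\W_i]$ with $\pi[\W_{i-1}]$ is confined to $\bar\Gamma_\Diamond\setminus\W_{i-1}$, which only inflates the probability of the relevant reachability event. Otherwise $\bar X\in\M_i\setminus\W_{i-1}$; I distinguish two subcases in the play $G_{\Theta_0}(\sigma_0,\pi[\W_i])$ from $\bar X$. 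If with positive probability player~$\Box$ never uses a rule removed (line~\ref{alg:intdc_add}) or rewritten (line~\ref{alg:W-under}) when forming $\Theta_i$ from $\Theta_{i-1}$, then this positive-probability sample is simultaneously a play of $G_{\Theta_i}$ under derived strategies, and the attractor-plus-witness property of $\pi_i^{\star}$ forces $\calP(\R(T_\varepsilon),G_{\Theta_0})<1$. Otherwise almost every play eventually uses such a rule; I analyse the first use, say $\bar A\bbtran{0}Y\bar C$ with $\bar A\notin\M_i$ and $\bar C\in\M_i$, and show that with positive probability the continuation exposes $\bar C$ on top of the stack, producing a configuration in $\W_i\bar\Gamma^{*}$ whose structure---by the design of $\initdelta$ and its twin-symbol mechanism---lies in $\U_{i-1}^{*}\W_{i-1}\bar\Gamma^{*}=\wsetD{T_\varepsilon,i-1}{<}{1}$. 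Condition~(2) is proved symmetrically, with $T$, $\U_i$, and $\wsetD{T,i-1}{<}{1}$ replacing $T_\varepsilon$, $\W_i$, and $\wsetD{T_\varepsilon,i-1}{<}{1}$.

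The hard part is the escape subcase above. One must decompose the continuation of the play after the first rewritten rule $\bar A\bbtran{0}Y\bar C$ according to whether the newly pushed $Y$ is eventually popped: in the popping event, $\bar C\in\M_i\subseteq\W_i$ becomes exposed and one verifies that the whole configuration then sits in $\U_{i-1}^{*}\W_{i-1}\bar\Gamma^{*}$; in the non-popping event, the play restricted to the stack above $\bar C$ behaves as a play from the frozen copy $\pp Y$, and one must recruit the inductive bookkeeping on $\U_{i-1}$ and $\W_{i-1}$ to close the loop. Formalising this decomposition, while routine in spirit, is the bulkiest part of the argument and is the reason Proposition~\ref{prop:cor_alg} is deferred to Section~\ref{sec:proofs}.
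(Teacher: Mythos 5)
Your construction of the strategies themselves --- arbitrary in the base case, the witness strategy of Lemma~\ref{lem-witness-strategy} extended over the attractor by fixing rank-decreasing rules, glued together so that Condition~(3) holds --- is the same as the paper's. The gap is in the verification of Conditions~(1) and~(2) for the newly added symbols, which is the whole content of the lemma, and there your case split is the wrong one. You split on whether player~$\Box$ ever uses a rule that the cutting procedure removes or rewrites. But a play of $G_{\Theta_0}$ can use a rewritten rule $\bar{A} \bbtran{} Y\bar{C}$, burying the cut-off symbol $\bar{C}$ in the stack, and still never reach a configuration headed by a cut-off symbol; such plays fall outside your first subcase yet must still be matched against the smaller game --- via the rewritten rule $\bar{A}\bbtran{}\pp{Y}$ and the frozen twin. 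The paper's dichotomy is instead on whether $\calP^{\sigma_0,\pi[\W_{i+1}]}_{\bar{X}}(\R(\wsetD{T_\varepsilon,i}{<}{1}))$ is positive (then the second disjunct holds immediately) or zero; in the zero case it defines a projection $f$ that erases every suffix lying in $\wsetD{T_\varepsilon,i}{<}{1}$ and replaces the symbol just above it by its twin, and shows by induction on histories that $f$ induces an isomorphism between the reachable parts of $G_{\Theta_0}(\sigma_0,\pi[\W_{i+1}])$ and $G_{\Theta_i}(\sigma_i,\pi_i)$ for a mimicking $\sigma_i$. That single isomorphism subsumes your ``first rewritten rule'' analysis and avoids any conditioning on events that depend on the players' future choices.

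Two concrete defects remain even if the index bookkeeping is repaired. In your ``popping'' subcase you assert that exposing $\bar{C}$ yields a configuration in $\U_{i-1}^*\W_{i-1}\bar{\Gamma}^* = \wsetD{T_\varepsilon,i-1}{<}{1}$; but $\bar{C}$ belongs to the attractor being removed at the \emph{current} iteration, so the exposed configuration is headed by a symbol of $\W_i \smallsetminus \W_{i-1}$ and in general does not lie in the previous level's region --- you would have to recurse on it, with no decreasing measure supplied. And the ``non-popping'' subcase, which is exactly where the twin mechanism and the inductive bookkeeping must be deployed, is not proved: you defer it as ``the bulkiest part,'' so the argument is incomplete precisely at its core. (A smaller point: for $\bar{X}\in\W_{i-1}$ your claim that changing the strategy off $\W_{i-1}$ ``only inflates'' the reachability probability is unsound in general; the conclusion is nevertheless trivially true there, because $\bar{X}$ itself already lies in $\wsetD{T_\varepsilon,i-1}{<}{1}$, so the second disjunct holds at time zero.)
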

\begin{proof}
The strategies $\pi[\W_i],\pi[\U_i]$ are constructed inductively on~$i$. 
In the base case, $\pi[\W_0],\pi[\U_0]$ are chosen arbitrarily. Now assume
that $\pi[\W_i],\pi[\U_i] \in \Pi_0$ have already been constructed.
Due to Lemma~\ref{lem-witness-strategy}, there is a SMD 
strategy $\pi_i \in \Pi_i$ constructible by a deterministic
polynomial-time algorithm with $\NP \cap \coNP$ oracle 
such that for every $\bar{Z} \in \M_i$ and every $\sigma_i \in \Sigma_i$ 
we have that $\calP^{\sigma_i,\pi_i}_{\bar{Z}}(\R(T_\varepsilon,G_{\Theta_i})) < 1$.
(Strictly speaking, Lemma~\ref{lem-witness-strategy} guarantees
the existence of a SMD strategy $\pi_i \in \Pi_i$ such that the
above condition is satisfied just for all $\bar{Z} \in W_i$. However,
the strategy $\pi_i$ of Lemma~\ref{lem-witness-strategy} can be easily 
modified so that it works for all $\bar{Z} \in \M_i =
\bigcup_{j=0}^{|\bar{\Gamma}|} \FuncSty{Att}^j_{\Theta_i,W_i}(\emptyset)$;
whenever a new symbol $\bar{A} \in
\bar{\Gamma}_\Diamond$ appears in $\FuncSty{Att}^{j+1}_{\Theta_i,W_i}(\emptyset)$, we 
fix one of the rules $\bar{A} \bbtran[i]{} \bar{B}$ which witness  
the membership of $\bar{A}$ to $\FuncSty{Att}^{j+1}_{\Theta_i,W_i}(\emptyset)$.) 
The strategies $\pi[\W_{i{+}1}]$ and $\pi[\U_{i{+}1}]$ are defined as follows: 
 \begin{itemize}
\item for every $\bar{X} \in \W_i$, we put 
  $\pi[\W_{i{+}1}](\bar{X}) = \pi[\W_i](\bar{X})$;
\item for every $\bar{X} \in \W_{i{+}1} \smallsetminus
  \W_i = \M_i$, we put $\pi[\W_{i{+}1}](\bar{X}) =
  \pi_i(\bar{X})$;
\item for every $\bar{Y} \in \U_{i}$, we put 
  $\pi[\U_{i{+}1}](\bar{Y}) = \pi[\U_i](\bar{Y})$;
\item for every $\bar{Y} \in \U_{i{+}1} \smallsetminus \U_i$, the 
  distribution $\pi[\U_{i{+}1}](\bar{Y})$ selects the
  (unique) rule $\bar{Y} \bbtran[0]{} \bar{Q}$ such that $\pp{Y}
  \bbtran[i]{} \pp{Q}$ is the rule selected by $\pi_i(\pp{Y})$.
\end{itemize}
Observe that for every $0\leq i \leq K$, the strategies 
$\pi[\W_i],\pi[\U_i]$ are constructible by a deterministic polynomial-time
algorithm with $\NP \cap \coNP$ oracle.

Now we show that Conditions~(1)--(3) are satisfied for every
$0 \leq i \leq K$. We proceed by induction on $i$.
The base case ($i=0$) is immediate, because 
$\W_0 = \U_0 = \emptyset$. Now let us assume that
$\pi[\W_i],\pi[\U_i]$ satisfy Conditions~(1)--(3). The
strategies $\pi[\W_{i{+}1}],\pi[\U_{i{+}1}]$ obviously satisfy
Condition~(3). By induction hypothesis, 
Condition~(1) and Condition~(2) are satisfied for all elements of
$\mathcal{W}_i$ and $\U_i$, respectively. We verify that
Condition~(1) and Condition~(2) are satisfied also for the remaining
symbols of $\mathcal{W}_{i{+}1} \smallsetminus \mathcal{W}_i$ and 
$\U_{i{+}1} \smallsetminus \U_i$, respectively.

\emph{Condition~(1)}. Let us fix some 
$\bar{X} \in \mathcal{W}_{i{+}1} \smallsetminus \mathcal{W}_i$ and
$\sigma_0 \in \Sigma_0$. If 
$\calP^{\sigma_0,\pi[\W_{i{+}1}]}_{\bar{X}}(\R(\wsetD{T_\varepsilon,i}{<}{1},
G_{\Theta_0})) > 0$,
we are done. Now assume that 
$\calP^{\sigma_0,\pi[\W_{i{+}1}]}_{\bar{X}}(\R(\wsetD{T_\varepsilon,i}{<}{1},
G_{\Theta_0})) = 0$.
We show that the strategy $\sigma_0$ can be ``mimicked'' by a strategy 
$\sigma_{i} \in \Sigma_{i}$ so that
\begin{equation}
  \calP^{\sigma_0,\pi[\W_{i{+}1}]}_{\bar{X}}(\R(T_\varepsilon),\Theta_0) \ = \
  \calP^{\sigma_{i},\pi_{i}}_{\bar{X}}(\R(T_\varepsilon),\Theta_{i}) \ > \ 0 
\end{equation}
We construct the strategy $\sigma_{i}$ so that the reachable parts of 
the plays $G_{\Theta_0}(\sigma_0,\pi[\W_{i{+}1}])$ and
$G_{\Theta_{i}}(\sigma_{i},\pi_{i})$ initiated in $\bar{X}$ become
isomorphic. Let \mbox{$f : \bar{\Gamma}^* \rightarrow \bar{\Gamma}_{i}^*$} 
be a function defined inductively as follows:
\begin{itemize}
\item $f(\varepsilon) = \varepsilon$;
\item if $\bar{Y}\beta \in \wsetD{T_\varepsilon,i}{<}{1}$, then
  $f(\bar{Y}\beta) = f(\beta)$;
\item if $\bar{Y}\beta,\beta \not\in \wsetD{T_\varepsilon,i}{<}{1}$, 
  then $\bar{Y} \in \bar{\Gamma}_{i}$ (because $\bar{Y}
  \not\in \W_i$) and we put $f(\bar{Y}\beta) =
  \bar{Y}f(\beta)$;
\item if $\bar{Y}\beta \not\in \wsetD{T_\varepsilon,i}{<}{1}$
  and $\beta \in \wsetD{T_\varepsilon,i}{<}{1}$, then $\pp{Y}
  \in \bar{\Gamma}_i$ and we put $f(\bar{Y}\beta) = \pp{Y}f(\beta)$ (observe
  that if $\pp{Y} \not\in \bar{\Gamma}_i$, then $\pp{Y} \in \mathcal{W}_i$
  and $\bar{Y} \in \U_i$, which contradicts the assumption that
  $\bar{Y}\beta \not\in \wsetD{T_\varepsilon,i}{<}{1}$).
\end{itemize}
For every reachable state $\alpha_0,\ldots,\alpha_j$ of
$G_{\Theta_0}(\sigma_0,\pi[\W_{i{+}1}])$ we put $\F(\alpha_0,\ldots,\alpha_j) =
f(\alpha_0),\ldots,f(\alpha_j)$ where $f$ is the function defined
above. Our aim is to setup the strategy $\sigma_i$ so that $\F$
becomes an isomorphism. This means to ensure that for every reachable
state $\alpha_0,\ldots,\alpha_j$ of $G_{\Theta_0}(\sigma_0,\pi[\W_{i{+}1}])$ we
have that $f(\alpha_0),\ldots,f(\alpha_j)$ is a reachable state of
$G_{\Theta_i}(\sigma_i,\pi_i)$, and
$\alpha_0,\ldots,\alpha_{j-1} \tran{x} \alpha_0,\ldots,\alpha_j$
implies $f(\alpha_0),\ldots,f(\alpha_{j-1}) \tran{x}
f(\alpha_0),\ldots,f(\alpha_j)$. 
We proceed by induction on~$j$ and
define the strategy $\sigma_i$ on the fly so that the above
condition is satisfied.  The base case (when $j=0$) is immediate,
because $\F(\bar{X}) = f(\bar{X}) = \bar{X}$ and the root $\bar{X}$ has
no incoming transitions. Now assume that $\alpha_0,\ldots,\alpha_{j}$
is a reachable state of $G_{\Theta_0}(\sigma_0,\pi[\W_{i{+}1}])$ such that
$\alpha_0,\ldots,\alpha_{j-1} \tran{x} \alpha_0,\ldots,\alpha_j$.
Then $\alpha_{j-1} \gtran[0]{} \alpha_j$ is an edge in $G_{\Theta_0}$,
which is assigned the probability~$x$ either by $\Prob$, $\pi_0$, or
$\sigma_0$, depending on whether the first symbol of $\alpha_{j-1}$
belongs to $\bar{\Gamma}_\bigcirc$, $\bar{\Gamma}_\Diamond$, or
$\bar{\Gamma}_\Box$, respectively. By induction hypothesis,
$f(\alpha_0),\ldots,f(\alpha_{j-1})$ is a reachable state of
$G_{\Theta_i}(\sigma_i,\pi_i)$ and hence it suffices to show
that $f(\alpha_{j-1}) \gtran[i]{} f(\alpha_j)$ is an edge in
$G_{\Theta_i}$ which is assigned the same probability~$x$ by $\Prob$,
$\pi_i$, or the newly constructed $\sigma_i$, respectively.  Let
$\alpha_{j-1} = \bar{A}\beta$.  Note that since $\bar{A}\beta \not\in
\wsetD{T_\varepsilon,i}{<}{1}$, we have that $f(\alpha_{j-1}) =
f(\bar{A}\beta) = \hat{A}f(\beta)$, where $\hat{A} = \pp{A}$ or
$\hat{A} = A$ depending on whether $\beta \in \wsetD{T_\varepsilon,i}{<}{1}$ 
or not, respectively. If $\bar{A} \in
\bar{\Gamma}_\bigcirc$, then $\alpha_j = \bar{B}\beta$ for some
$\bar{B}$ such that $\bar{A} \bbtran[0]{x} \bar{B}$. But then also
$\hat{A} \bbtran[i]{x} \hat{B}$, where $\hat{B}$ is either $B$ or
$\pp{B}$ depending on whether $\hat{A} = A$ or $\hat{A} = \pp{A}$,
respectively. Hence, $f(\alpha_{j-1}) = \hat{A}f(\beta) \gtran[i]{x}
\hat{B}f(\beta) = f(\bar{B}\beta) = f(\alpha_j)$ as needed.  If
$\bar{A} \in \bar{\Gamma}_\Diamond$, we argue in a similar way, using
the definitions of $\pi[\W_{i{+}1}]$ and $\pi_i$. The most complicated case is
when $\bar{A} \in \bar{\Gamma}_\Box$. It suffices to show that
$f(\alpha_{j-1}) = \hat{A}f(\beta) \gtran[i]{} f(\alpha_j)$.  The
distribution $\sigma_i(f(\alpha_0),\ldots,f(\alpha_{j-1}))$ can
then safely select the edge $f(\alpha_{j-1}) \gtran[i]{} f(\alpha_j)$
with probability~$x$. According to Definition~\ref{ass:rules}, we can
distinguish the following three possibilities:
\begin{itemize}
\item $\bar{A} \in \bar{\Gamma}[1]$. Then $\alpha_j = \bar{B}\beta$
  for some $\bar{B}$ such that $\bar{A} \bbtran[0]{} \bar{B}$.  If
  $\hat{A} = A$, then $\bar{A} = A$, $\bar{B} = B$, and $\alpha
  \not\in \wsetD{T_\varepsilon,i}{<}{1}$ by the definition
  of~$f$. Further, $B \in \Gamma_i$ because otherwise $B \in
  \mathcal{W}_i$ and $B\beta \in \wsetD{T_\varepsilon,i}{<}{1}$,
  which contradicts the assumption
  $\calP^{\sigma_0,\pi[\W_{i{+}1}]}_{\bar{X}}(\R(\wsetD{T_\varepsilon,i}{<}{1}), \Theta_0) = 0$. 
  Hence, $f(\alpha_{j-1}) = Af(\beta)
  \gtran[i]{} Bf(\beta) = f(\alpha_j)$ as needed.

  If $\hat{A} = \pp{A}$, then either $\bar{A} = A$ or $\bar{A} =
  \pp{A}$, and we consider these two cases separately.  If $\bar{A} =
  A$, then $\bar{B} = B$ and $\alpha \in \wsetD{T_\varepsilon,i}{<}{1}$ 
  by the definition of~$f$. Further, $\pp{B} \in
  \Gamma_i$ because otherwise $B \in \U_i$ and thus 
  $B\beta \in \wsetD{T_\varepsilon,i}{<}{1}$, which contradicts the assumption
  $\calP^{\sigma_0,\pi[\W_{i{+}1}]}_{\bar{X}}(\R(\wsetD{T_\varepsilon,i}{<}{1}), 
  \Theta_0) = 0$. 
   Hence, $f(\alpha_{j-1}) =
  \pp{A}f(\beta) \gtran[i]{} \pp{B}f(\beta) = f(B\beta) =
  f(\alpha_j)$.
  If $\bar{A} = \pp{A}$, then $\bar{B} = \pp{B}$ and $\pp{B} \in
  \Gamma_i$, because otherwise $\pp{B} \in \mathcal{W}_i$ and
  $\pp{B}\beta \in \wsetD{T_\varepsilon,i}{<}{1}$, which
  contradicts the assumption
  $\calP^{\sigma_0,\pi[\W_{i{+}1}]}_{\bar{X}}(\R(\wsetD{T_\varepsilon,i}{<}{1}),
  \Theta_0) = 0$.
   Hence, $f(\alpha_{j-1}) =
  \pp{A}f(\beta) \gtran[i]{} \pp{B}f(\beta) = f(\alpha_j)$.
\item $\bar{A} \in \bar{\Gamma}[2]$. Then $\alpha_j = \beta$. If
  $\hat{A} = A$, then $\beta \not\in \wsetD{T_\varepsilon,i}{<}{1}$ 
  and $f(\alpha_{j-1}) = Af(\beta) \gtran[i]{}
  f(\beta) = f(\alpha_j)$.  If $\hat{A} = \pp{A}$, then $\beta \in
  \wsetD{T_\varepsilon,i}{<}{1}$ by the definition of~$f$ which
  contradicts the assumption
  $\calP^{\sigma_0,\pi[\W_{i{+}1}]}_{\bar{X}}(\R(\wsetD{T_\varepsilon,i}{<}{1}), \Theta_0) = 0$.
\item $\bar{A} \in \bar{\Gamma}[3]$. If $\hat{A} = A$, then $\bar{A} =
  A$ and $\alpha_j = BC\beta$ where $A \bbtran[0]{} BC$ is the only
  available rule with $A$ on the left-hand side.  Further, we either
  have $A \bbtran[i]{} BC$ or $A \bbtran[i]{} \pp{B}$.  In the first
  case we obtain $B,C \not\in \mathcal{W}_i$, which means that
  $f(BC\beta) = BCf(\beta)$ and hence $f(\alpha_{j-1}) = Af(\beta)
  \gtran[i]{} BCf(\beta) = f(\alpha_j)$.  In the latter case, $C \in
  \mathcal{W}_i$ and $\pp{B} \not\in \mathcal{W}_i$, which means that
  $B \not\in \U_i$.  Hence, $f(BC\beta) = \pp{B}f(\beta)$ and
  $f(\alpha_{j-1}) = Af(\beta) \gtran[i]{} \pp{B}f(\beta) =
  f(\alpha_j)$.

  If $\hat{A} = \pp{A}$, then either $\bar{A} = A$ or $\bar{A} =
  \pp{A}$. If $\bar{A} = A$, then $\beta \in \wsetD{T_\varepsilon,i}{<}{1}$ 
  and $\alpha_j = BC\beta$ where $A \bbtran[0]{} BC$
  is the only available rule with $A$ on the left-hand side.  Further,
  we either have $\pp{A} \bbtran[i]{} \pp{B}$ or $\pp{A} \bbtran[i]{}
  B\pp{C}$. In the first case, we have that $\pp{C} \in
  \mathcal{W}_i$, hence $C \in \U_i$ and $C\beta \in 
  \wsetD{T_\varepsilon,i}{<}{1}$. This means $f(BC\beta) =
  \pp{B}f(\beta)$ and hence $f(\alpha_{j-1}) = \pp{A}f(\beta)
  \gtran[i]{} \pp{B}f(\beta) = f(\alpha_j)$.  In the latter case,
  $B,\pp{C} \not\in \mathcal{W}_i$, which means that $\pp{C} \not\in
  \mathcal{W}_i$ and hence $C \not\in \U_i$. Now realize that for
  every $P \in \Gamma$ we have that if $P \in \mathcal{W}_i$, then
  $\pp{P} \in \U_i$.  This follows directly from the ``main''
  induction hypothesis (which states that Conditions~(1) and~(2) hold
  for the symbols of $\mathcal{W}_i$ and $\U_i$, respectively) and
  Lemma~\ref{lem-relationship}. From this and $C \not\in \U_i$ we can
  conclude that $C \not\in \mathcal{W}_i$. This implies $C\beta
  \not\in \wsetD{T_\varepsilon,i}{<}{1}$, which means that
  $f(BC\beta) = B\pp{C}f(\beta)$ and hence $f(\alpha_{j-1}) =
  \pp{A}f(\beta) \gtran[i]{} B\pp{C}f(\beta) = f(\alpha_j)$.
\end{itemize}

\emph{Condition~(2)}. We proceed similarly as in the case of
Condition~(1).  Let $\bar{Y} \in \U_{i{+}1} \smallsetminus \U_i$ and
$\sigma_0 \in \Sigma_0$. If 
$\calP^{\sigma_0,\pi[\U_{i{+}1}]}_{\bar{X}}(\R(\wsetD{T,i}{<}{1},
G_{\Theta_0})) > 0$, we are done. Now  assume that 
$\calP^{\sigma_0,\pi[\U_{i{+}1}]}_{\bar{X}}(\R(\wsetD{T,i}{<}{1},
G_{\Theta_0})) = 0$. We show that the strategy $\sigma_0$ can be 
``mimicked'' by a strategy 
$\hat{\sigma}_{i} \in \Sigma_{i}$ so that
\begin{equation}
  \calP^{\sigma_0,\pi[\U_{i{+}1}]}_{\bar{Y}}(\R(T),\Theta_0) \ = \
  \calP^{\hat{\sigma}_{i},\pi_{i}}_{\pp{Y}}(\R(T),\Theta_{i}) \ > \ 0 
\end{equation}
We construct the strategy $\hat{\sigma}_{i}$ so that the reachable parts of 
the plays $G_{\Theta_0}(\sigma_0,\pi[\U_{i{+}1}])$ and
$G_{\Theta_{i}}(\hat{\sigma}_{i},\pi_{i})$ initiated in $\bar{Y}$ and
$\pp{Y}$ become
isomorphic. Let \mbox{$\hat{f} : \bar{\Gamma}^* \rightarrow \bar{\Gamma}_{i}^*$} 
be a function defined in the same way as $f$ except that 
$\wsetD{T,i}{<}{1}$ is used instead of $\wsetD{T_\varepsilon,i}{<}{1}$.
For every reachable state $\alpha_0,\ldots,\alpha_j$ of
$G_{\Theta_0}(\sigma_0,\pi[\U_{i{+}1}])$ we put 
$\hat{\F}(\alpha_0,\ldots,\alpha_j) = 
\hat{f}(\alpha_0),\ldots,\hat{f}(\alpha_j)$
and define the strategy $\hat{\sigma}_i$ so that the function 
$\hat{\F}$ becomes an isomorphism. The rest of the proof is almost the same
as for Condition~(1).
\end{proof}

Note that an immediate consequence of Lemma~\ref{lem-subset-prelim} is 
the following:
\begin{lemma}
\label{lem-subset}
For every $0 \leq i \leq K$ we have that 
$\mathcal{W}_i \subseteq \wsetD{T_\varepsilon}{<}{1}$ and
$\U_i \subseteq \wsetD{T}{<}{1}$ 
\end{lemma}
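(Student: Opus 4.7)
The plan is to prove Lemma~\ref{lem-subset} by induction on~$i$, where the inductive step combines Lemma~\ref{lem-subset-prelim} with two closure properties of the winning regions in $\Theta_0 = \bar{\Delta}$.

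The closure properties I would establish first are: (a)~if $X \in \bar{\Gamma} \cap \wsetD{T_\varepsilon}{<}{1}$ then $X\beta \in \wsetD{T_\varepsilon}{<}{1}$ for every $\beta \in \bar{\Gamma}^*$; and (b)~if $Y \in \bar{\Gamma} \cap \wsetD{T}{<}{1}$ and $\alpha \in \wsetD{T_\varepsilon}{<}{1}$ (resp.\ $\alpha \in \wsetD{T}{<}{1}$), then $Y\alpha \in \wsetD{T_\varepsilon}{<}{1}$ (resp.\ $Y\alpha \in \wsetD{T}{<}{1}$). Both facts exploit that in BPA the bottom of the stack is frozen while the topmost symbol runs its subgame: player~$\Diamond$ plays the witnessing strategy on top and, upon popping, switches to the witnessing strategy for~$\alpha$. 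Writing $a = \calP^\sigma_Y(\R(T))$ and $b = \calP^\sigma_Y(\R(\varepsilon))$ in the $Y$-alone game and $c$ for the corresponding reachability probability from~$\alpha$, one obtains $\calP^{\sigma,\pi}_{Y\alpha}(\R(\cdot)) = a + bc$; since $a < 1$, $b \le 1-a$, and $c < 1$, a brief case split on whether $b = 0$ gives $a + bc < 1$.

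Armed with (a) and (b), the induction is direct. The base case $i=0$ is vacuous since $\mathcal{W}_0 = \U_0 = \emptyset$. For the inductive step, assume $\mathcal{W}_{i-1} \subseteq \wsetD{T_\varepsilon}{<}{1}$ and $\U_{i-1} \subseteq \wsetD{T}{<}{1}$. Iterating (b) yields $\U_{i-1}^* \subseteq \wsetD{T}{<}{1}$, and combining this with (a) and (b) gives
\[
\wsetD{T_\varepsilon,i-1}{<}{1} \;=\; \U_{i-1}^*\, \mathcal{W}_{i-1}\, \bar{\Gamma}^* \;\subseteq\; \wsetD{T_\varepsilon}{<}{1},
\]
and analogously $\wsetD{T,i-1}{<}{1} \subseteq \wsetD{T}{<}{1}$ (using also $\wsetD{T_\varepsilon}{<}{1} \subseteq \wsetD{T}{<}{1}$).

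Finally, for $\bar{X} \in \mathcal{W}_i$ I would build a witness strategy for player~$\Diamond$ by playing $\pi[\mathcal{W}_i]$ from Lemma~\ref{lem-subset-prelim} until a configuration $\alpha \in \wsetD{T_\varepsilon,i-1}{<}{1}$ is visited for the first time, and then switching to a strategy certifying $\alpha \in \wsetD{T_\varepsilon}{<}{1}$ (available by the previous paragraph). For any $\sigma_0 \in \Sigma_0$, condition~(1) of Lemma~\ref{lem-subset-prelim} yields two cases: either the switch never occurs with positive probability, in which case $\calP^{\sigma_0,\pi[\mathcal{W}_i]}_{\bar{X}}(\R(T_\varepsilon)) < 1$ outright; or the switch occurs with positive probability strictly before $T_\varepsilon$ is hit (every configuration in $\wsetD{T_\varepsilon,i-1}{<}{1}$ is nonempty and its first symbol lies in $\U_{i-1} \cup \mathcal{W}_{i-1} \subseteq \bar{\Gamma} \smallsetminus \bar{\Gamma}_T$, so this set is disjoint from $T_\varepsilon$), whereupon the composite strategy avoids $T_\varepsilon$ forever with positive probability. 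Either way $\bar{X} \in \wsetD{T_\varepsilon}{<}{1}$; the argument for $\U_i \subseteq \wsetD{T}{<}{1}$ is entirely analogous via condition~(2). The main obstacle is isolating and proving the closure properties (a) and (b); once they are in place, the lemma follows by routine bookkeeping.
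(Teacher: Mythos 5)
Your proposal is correct and follows the same route as the paper, which dispatches this statement in one line as ``a trivial induction on~$i$, using Lemma~\ref{lem-subset-prelim} in the induction step'': your base case, your use of conditions~(1) and~(2) via a composed strategy, and your closure properties~(a) and~(b) (which are exactly the stacking/composition arguments already underlying Proposition~\ref{prop-equal-one-regular}, and which give $\wsetD{T_\varepsilon,i-1}{<}{1} \subseteq \wsetD{T_\varepsilon}{<}{1}$ and $\wsetD{T,i-1}{<}{1} \subseteq \wsetD{T}{<}{1}$ from the induction hypothesis) are precisely the details the paper leaves implicit. The only cosmetic looseness is that in your first case you state the bound for $\pi[\mathcal{W}_i]$ rather than for the composite strategy, but since the two coincide on all histories preceding the first visit to $\wsetD{T_\varepsilon,i-1}{<}{1}$ (an event of probability zero in that case), the transfer is immediate.
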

Lemma~\ref{lem-subset} is proven by a trivial induction on~$i$,
using Lemma~\ref{lem-subset-prelim} in the induction step.
Thus, the ``$\subseteq$'' direction of Proposition~\ref{prop:cor_alg} 
is established. The opposite direction is shown in the next lemma.

\begin{lemma}
\label{lem-superset}
  We have that $\mathcal{W}_K \cap \A_0 = \emptyset$ and 
  $\U_K \cap \C_0 = \emptyset$. 
\end{lemma}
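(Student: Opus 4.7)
The plan is to establish the inclusions $\A_0\subseteq\mathcal{W}_K$ and $\C_0\subseteq\U_K$; combined with Lemma~\ref{lem-subset} these produce $\mathcal{W}_K=\A_0$ and $\U_K=\C_0$, completing Proposition~\ref{prop:cor_alg}. Since Lemma~\ref{lem-subset} already yields $\mathcal{W}_K\subseteq\A_0$ and $\U_K\subseteq\C_0$, the disjointness assertions of Lemma~\ref{lem-superset} are naturally read as $(\bar{\Gamma}\smallsetminus\mathcal{W}_K)\cap\A_0=\emptyset$ and $(\bar{\Gamma}\smallsetminus\U_K)\cap\C_0=\emptyset$: no symbol that \emph{survives} all iterations of \FuncSty{Main} can belong to $\A_0$ or $\C_0$.

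The backbone is a forward induction on $i$ showing
\[
  \B_i\cap\bar{\Gamma}_i \ \subseteq \ \B_0,
\]
where $\bar{\Gamma}_i=\bar{\Gamma}\smallsetminus\mathcal{W}_i$ is the stack alphabet of $\Theta_i$ and $\B_i=\bar{\Gamma}_i\smallsetminus\A_i$ is player~$\Box$'s single-symbol winning region in $\Theta_i$. The base case $i{=}0$ is trivial. For the inductive step I take a $(T_\varepsilon,{=}1)$-winning strategy $\sigma_{i+1}$ for player~$\Box$ from $\bar{X}\in\bar{\Gamma}_{i+1}$ in $\Theta_{i+1}$ and lift it to a $(T_\varepsilon,{=}1)$-winning strategy $\sigma_i$ in $\Theta_i$ from the same $\bar{X}$. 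The lifting is governed by a correspondence $f$ between reachable plays in $\Theta_i$ and $\Theta_{i+1}$, essentially the \emph{inverse} of the translation used in the proof of Lemma~\ref{lem-subset-prelim}: wherever a replaced $\bar{\Gamma}[3]$-rule $\bar{A}\bbtran[i]{}Y\bar{C}$ with $\bar{C}\in\M_i$ has produced a segment $Y\bar{C}$ in the $\Theta_i$-stack, the corresponding $\Theta_{i+1}$-stack contains the single twin symbol $\pp{Y}$ placed there by the replacement rule $\bar{A}\bbtran[i+1]{}\pp{Y}$; the construction iterates over arbitrarily nested witness insertions.

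Two structural facts make the lifting succeed. First, the existential clauses in the definition of $\FuncSty{Att}_{\Theta_i,W_i}$ on $\bar{\Gamma}_\bigcirc\cup\bar{\Gamma}_\Diamond$ ensure that no surviving stochastic or player-$\Diamond$ transition can enter $\M_i$; hence the only $\Theta_i$-transitions absent in $\Theta_{i+1}$ are player-$\Box$ moves, either $\bar{\Gamma}[1]$-rules landing in $\M_i$ (which $\sigma_i$ simply declines to use, as does $\sigma_{i+1}$) or the unique $\bar{\Gamma}[3]$-rules $\bar{A}\bbtran[i]{}Y\bar{C}$ that $\sigma_i$ must take, handling them through $f$. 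Second, the $\pp{}$-symbols of $\initdelta$ are by construction non-terminating: they can never be popped. Consequently a $(T_\varepsilon,{=}1)$-winning strategy from a $\pp{Y}$-headed configuration in $\Theta_{i+1}$ actually reaches $T$ with probability one \emph{without} popping $\pp{Y}$; transported along $f$, the resulting $\sigma_i$-play from the corresponding $Y\bar{C}$-headed $\Theta_i$-configuration reaches $T\subseteq T_\varepsilon$ with probability one and never exposes the witness $\bar{C}\in\M_i$ hidden below. Verifying that $f$ is edge- and probability-preserving requires a case analysis on the rule shapes of Definition~\ref{ass:rules}, in the same style as the forward simulation inside the proof of Lemma~\ref{lem-subset-prelim}.

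The argument closes by observing that the main loop terminates with $W_K=\emptyset$, so Proposition~\ref{prop:witnesses} applied to $\Theta_K$ yields $\A_K=\emptyset$, i.e.\ $\bar{\Gamma}_K=\B_K$. The induction then delivers $\bar{\Gamma}_K\subseteq\B_0$, whence $(\bar{\Gamma}\smallsetminus\mathcal{W}_K)\cap\A_0=\emptyset$, which is $\A_0\subseteq\mathcal{W}_K$. For the $\U_K$/$\C_0$ statement I exploit the twin symmetry of $\initdelta$: because $\pp{Y}$ cannot reach $\varepsilon$, the $(T_\varepsilon,{<}1)$- and $(T,{<}1)$-objectives coincide at $\pp{Y}$, so combining Lemma~\ref{lem-relationship} with its converse gives $Y\in\C_0$ iff $\pp{Y}\in\A_0$ (and likewise $\pp{Y}\in\C_0$ iff $\pp{Y}\in\A_0$); together with the definition $\U_K=\{\bar{Y}\mid\pp{Y}\in\mathcal{W}_K\}$ and the already-established identity $\A_0=\mathcal{W}_K$, this yields $\C_0=\U_K$, hence $(\bar{\Gamma}\smallsetminus\U_K)\cap\C_0=\emptyset$. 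The principal obstacle of the formal write-up will be the bookkeeping of the reverse correspondence $f$ across all rule shapes of Definition~\ref{ass:rules} under arbitrarily deep nested $\M_i$-insertions --- mechanical but tedious, and structurally parallel to the forward lifting carried out in Lemma~\ref{lem-subset-prelim}.
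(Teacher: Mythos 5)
Your proposal is correct in substance, and you rightly observe that the disjointness claims must be read with respect to the complements $\B_K=\bar{\Gamma}\smallsetminus\W_K$ and $\scrD_K=\bar{\Gamma}\smallsetminus\U_K$ (the statement as printed is a typo: the paper's own proof derives $\B_K\subseteq\wsetB{T_\varepsilon}{=}{1}$ and $\scrD_K\subseteq\wsetB{T}{=}{1}$, i.e.\ $\A_0\subseteq\W_K$ and $\C_0\subseteq\U_K$). For the first inclusion your route differs from the paper's mainly in organization: the paper lifts the $(T_\varepsilon,{=}1)$-winning strategy $\sigma_K$ of $\Theta_K$ (obtained from $W_K=\emptyset$ via Proposition~\ref{prop:no-wit-reg-win}) back to $\Theta_0$ in a single shot, using global correspondence functions $g,h:\bar{\Gamma}^*\to\bar{\Gamma}_K^*$ that collapse all $K$ layers of rule replacement at once, whereas you compose $K$ one-step liftings $\Theta_{i+1}\to\Theta_i$. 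Both rest on the two pillars you identify: the $\FuncSty{Att}$ closure guarantees that only player-$\Box$ rules are removed or replaced, and $\pp{\cdot}$-symbols are never popped, so the buried $\M_i$-symbols are exposed with probability zero. The one-shot version has the practical advantage of directly producing the single regular strategy $\sigma_0$ (and its variant $\hat{\sigma}_0$) that Theorem~\ref{thm:box_strat_1} later extracts from this proof; your composition would also yield regular strategies, at the cost of more bookkeeping. For the second inclusion your argument is genuinely different and shorter: instead of repeating the whole construction with modified functions $\hat{g},\hat{h}$ as the paper does, you reduce $\C_0=\U_K$ to the already-established $\A_0=\W_K$ via the twin symmetry and the definition $\U_K=\{\bar{Y}\mid\pp{Y}\in\W_K\}$. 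This is a valid shortcut for the set identity, though it does not produce the witnessing strategy $\hat{\sigma}_0$.

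One correction to that last step: the equivalence $Y\in\C_0\Leftrightarrow\pp{Y}\in\A_0$ is true, but it does not follow from ``Lemma~\ref{lem-relationship} and its converse.'' The literal converse of Lemma~\ref{lem-relationship} (that $\pp{X}\in\wsetD{T}{<}{1}$ implies $X\in\wsetD{T_\varepsilon}{<}{1}$) is false: if $X\in\Gamma[2]$, so that $X\btran{}\varepsilon$ is its only rule, then $X\in\wsetB{T_\varepsilon}{=}{1}$, yet $\pp{X}$ only loops via $\pp{X}\btran{}\pp{X}$ and lies in $\wsetD{T}{<}{1}$. What your chain actually needs is $Y\in\C_0\Leftrightarrow\pp{Y}\in\C_0$, i.e.\ the twin symmetry for the $\R(T)$ objective, which holds because the plays from $Y$ and from $\pp{Y}$ are isomorphic up to replacing the absorbing non-target configuration $\varepsilon$ by an absorbing non-target configuration $\pp{W}$ with $W\in\Gamma[2]$; combined with the coincidence of $\R(T)$ and $\R(T_\varepsilon)$ at $\pp{Y}$ this yields your equivalence. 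The needed fact is as elementary as Lemma~\ref{lem-relationship} itself, so this is a misattribution rather than a gap, but it should be stated and proved explicitly in a full write-up.
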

\begin{proof}
Since $W_K = \emptyset$, due
to Proposition~\ref{prop:no-wit-reg-win} there is a regular MR strategy
$\sigma_K \in \Sigma_K$ which is \mbox{$(T_\varepsilon,{=}1)$-winning}
in every $\alpha \in \bar{\Gamma}_K^*$. Moreover, the strategy 
$\sigma_K$ is computable in time which is polynomial in the size
of $\Theta_K$ (assuming that $\Theta_K$ has already been computed).
Let $\B_K = \bar{\Gamma} \smallsetminus \mathcal{W}_K = \bar{\Gamma}_K$
and $\scrD_K = \bar{\Gamma} \smallsetminus \U_K$. We show that the strategy
$\sigma_K$ can be efficiently transformed into regular MR strategies
$\sigma_0,\hat{\sigma}_0 \in \Sigma_0$ such that 
$\sigma_0$ is \mbox{$(T_\varepsilon,{=}1)$-winning}
in every configuration of $\B_K^*$, and $\hat{\sigma}_0$ is 
\mbox{$(T,{=}1)$-winning}
in every configuration of $\B_K^*\scrD_K\bar{\Gamma}$. In particular,
this means that $\B_K \subseteq \wsetB{T_\varepsilon}{=}{1}$ and
$\scrD_K \subseteq \wsetB{T}{=}{1}$, hence 
$\mathcal{W}_K \cap \A_0 = \emptyset$ and $\U_K \cap \C_0 = \emptyset$ 
as needed. 

First we show how to construct the strategy $\sigma_0$. 
We start by defining (partial) functions 
$g,h : \bar{\Gamma}^* \rightarrow \bar{\Gamma}_K^*$ inductively as follows:
\begin{itemize}
\item $g(\varepsilon) = h(\varepsilon) = \varepsilon$
\item $g(\bar{Y}\beta) = \begin{cases}
                           \bar{Y}g(\beta) & \mbox{if } \bar{Y} \in \B_K
                           \mbox{ and } 
                           \beta \in \B_K\bar{\Gamma}^* \cup \{\varepsilon\};\\
                           \pp{Y}h(\beta) & \mbox{if }  \bar{Y},\pp{Y} 
                           \in \B_K \mbox{ and } 
                           \beta \in (\bar{\Gamma} \smallsetminus \B_K)
                                  \bar{\Gamma}^*;\\
                           \perp & \mbox{otherwise.}
                         \end{cases}$
\item $h(\bar{Y}\beta) = \begin{cases}
                           g(\bar{Y}\beta) & \mbox{if } \bar{Y} \in \B_K;\\
                           h(\beta) & \mbox{otherwise.}
                         \end{cases}$
\end{itemize}
A configuration $\alpha \in \bar{\Gamma}^*$ is called \emph{$g$-eligible}
if $g(\alpha) \neq {\perp}$. The strategy $\sigma_0$ is constructed so that 
for every $g$-eligible $\bar{A}\alpha \in \bar{\Gamma}_\Box\bar{\Gamma}^*$, 
the following conditions are satisfied:
\begin{itemize}
 \item If $\bar{A} \in \bar{\Gamma}[1]$ and $\sigma_K(g(\bar{A}\alpha))$
  selects a rule $\hat{A} \bbtran[K]{} \hat{B}$ with
  probability~$x$, then $\sigma_0(\bar{A}\alpha)$ selects the rule 
  $\bar{A} \bbtran[0]{} \bar{B}$ with probability~$x$.
 \item If $\bar{A} \in \bar{\Gamma}[2] \cup \bar{\Gamma}[3]$, then 
  $\sigma_0(\bar{A}\alpha)$ selects the only available rule
  with probability~$1$. 
\end{itemize}
Note that the definition of $\sigma_0$ is effective in the sense that
if the finite-state automaton $\A_{\sigma_K}$ associated with the 
regular MR strategy $\sigma_K$ (see Definition~\ref{def-regular-strategy}) 
has already been computed, then the finite-state automaton 
$\A_{\sigma_0}$ associated with $\sigma_0$ simply ``simulates'' the 
execution of $\A_{\sigma_K}$ on the reverse of $g(\alpha)$ for every
$g$-eligible $\alpha \in \bar{\Gamma}^*$. Hence, the automaton 
$\A_{\sigma_0}$ is constructible in polynomial time assuming that
the BPA game $\Theta_K$ has already been computed 
(cf.~Proposition~\ref{prop:no-wit-reg-win}).

We show that for every $g$-eligible initial configuration 
$\gamma \in \bar{\Gamma}^*$ and every $\pi_0 \in \Pi_0$ we have that 
$\calP^{\sigma_0,\pi_0}_{\gamma}(\R(T_\varepsilon),\Theta_0) = 1$.
Assume the converse, i.e., there is a strategy $\pi_0 \in \Pi_0$ and 
a $g$-eligible configuration $\gamma$ such that
$\calP^{\sigma_0,\pi_0}_{\gamma}(\R(T_\varepsilon),\Theta_0) < 1$. 
We show that then there is a strategy $\pi_K \in \Pi_k$ such that
\begin{equation}
\calP^{\sigma_0,\pi_0}_{\gamma}(\R(T_\varepsilon),\Theta_0) \ = \
\calP^{\sigma_K,\pi_K}_{g(\gamma)}(\R(T_\varepsilon),\Theta_K) \ = \ 1
\end{equation}
which is a contradiction. For every finite sequence of $g$-eligible 
configurations $\alpha_0,\ldots,\alpha_n \in \bar{\Gamma}^*$
such that $\alpha_n = \bar{A}\beta \in \bar{\Gamma}_{\Diamond}\bar{\Gamma}^*$,
the strategy $\pi_K$ selects a rule $\hat{A} \bbtran[K]{} \hat{B}$
in $g(\alpha_0),\ldots,g(\alpha_n)$ with probability~$x$ iff 
the strategy $\pi_0$ selects a rule $\bar{A} \bbtran[K]{} \bar{B}$
in $\alpha_0,\ldots,\alpha_n$ with probability~$x$.
We show that every reachable state $\alpha_0,\ldots,\alpha_j$ of the play 
$G_{\Theta_0}(\sigma_0,\pi_0)$ initiated in $\gamma$ is a sequence
of $g$-eligible configurations and the function $\G$ over the reachable 
states of $G_{\Theta_0}(\sigma_0,\pi_0)$ defined by
$\G(\alpha_0,\ldots,\alpha_j) = g(\alpha_0),\ldots,g(\alpha_j)$ is an
isomorphism between the reachable parts of the plays 
$G_{\Theta_0}(\sigma_0,\pi_0)$  and $G_{\Theta_K}(\sigma_K,\pi_K)$
initiated in $\gamma$ and $g(\gamma)$, respectively. 
We proceed by induction on~$j$. The base case is immediate. Now assume that 
$\alpha_0,\ldots,\alpha_{j}$ is a reachable state of 
$G_{\Theta_0}(\sigma_0,\pi_0)$ such that
$\alpha_0,\ldots,\alpha_{j-1} \tran{x} \alpha_0,\ldots,\alpha_j$.
Then $\alpha_{j-1} \gtran[0]{} \alpha_j$ is an edge in $G_{\Theta_0}$,
which is assigned the probability~$x$ either by $\Prob$, $\pi_0$,
or $\sigma_0$, depending on whether the first symbol of
$\alpha_{j-1}$ belongs to $\bar{\Gamma}_\bigcirc$, $\bar{\Gamma}_\Diamond$,
or $\bar{\Gamma}_\Box$, respectively. By induction hypothesis,
$\alpha_0,\ldots,\alpha_{j-1}$ is a sequence of $g$-eligible states
and hence it suffices to show that
$g(\alpha_{j-1}) \gtran[K]{} g(\alpha_j)$ is an edge in $G_{\Theta_K}$
which is assigned the same probability~$x$ by $\Prob$, $\pi_K$,
or $\sigma_K$, respectively. Let $\alpha_{j-1} = \bar{A}\beta$. 
We distinguish three possibilities:
\begin{itemize}
\item $\bar{A} \in \bar{\Gamma}_\bigcirc \cup \bar{\Gamma}_\Diamond
  \cup \bar{\Gamma}[1]$. Then $\alpha_{j-1} = \bar{A}\beta \gtran[0]{}
  \bar{B}\beta = \alpha_j$ where $\bar{A} \bbtran[0]{} \bar{B}$. If
  $g(\bar{A}\beta) = \bar{A}g(\beta)$, then $\beta \in
  \B_K\bar{\Gamma}^* \cup \{\varepsilon\}$ and since $\bar{A}
  \bbtran[K]{} \bar{B}$, we have that $g(\bar{A}\beta) =
  \bar{A}g(\beta) \gtran[K]{} \bar{B}g(\beta) = g(\bar{B}\beta)$. If
  $g(\bar{A}\beta) = \pp{A}h(\beta)$, then $\pp{A} \in \B_K$ and
  $\beta \in (\bar{\Gamma} \smallsetminus \B_K)\bar{\Gamma}^*$, hence
  $\pp{A} \bbtran[K]{} \pp{B}$ and $g(\bar{A}\beta) = \pp{A}h(\beta)
  \gtran[K]{} \pp{B}h(\beta) = g(\bar{B}\beta)$. It follows immediately
  from the definition of $\sigma_0$ and $\pi_K$ that the edges
  $\alpha_{j-1} \gtran[0]{} \alpha_j$ and 
  $g(\alpha_{j-1}) \gtran[K]{} g(\alpha_j)$ are assigned the same 
  probability.
\item $\bar{A} \in \bar{\Gamma}[2]$. Then $\bar{A} = A$ and 
  $\alpha_{j-1} = A\beta \gtran[0]{} \beta = \alpha_j$. Further,
  observe that $\pp{A} \not\in \bar{\Gamma}_K$, because otherwise
  $\pp{A} \bbtran[K]{} \pp{A}$ is the only rule with $\pp{A}$ on 
  the left-hand side, hence $\pp{A} \in W_K$ and we have a contradiction.
  So, $g(A\beta) = Ag(\beta) \gtran[K]{} g(\beta)$. Obviously,
  $\alpha_{j-1} \gtran[0]{} \alpha_j$ and 
  $g(\alpha_{j-1}) \gtran[K]{} g(\alpha_j)$ are assigned 
  probability~$1$.
\item $\bar{A} \in \bar{\Gamma}[3]$. Then 
  $\alpha_{j-1} = \bar{A}\beta \gtran[0]{}
  B\bar{C}\beta = \alpha_j$ where $\bar{A} \bbtran[0]{} B\bar{C}$.
  If $g(\bar{A}\beta) = \bar{A}g(\beta)$, then $\beta \in
  \B_K\bar{\Gamma}^* \cup \{\varepsilon\}$ and there are two
  possibilities:
  \begin{itemize}
  \item $\bar{A} \bbtran[K]{} B\bar{C}$. By the definition of $g$,
    we have that $g(B\bar{C}\beta) = B\bar{C}g(\beta)$, hence
    $g(\bar{A}\beta) = \bar{A}g(\beta) \gtran[K]{} B\bar{C}g(\beta)=
    g(B\bar{C}\beta)$ as needed.
  \item $\bar{A} \bbtran[K]{} \pp{B}$. Then $\bar{A},\pp{A} \in \B_K$ and
    $\bar{C} \not\in \B_K$,
    hence $g(B\bar{C}\beta) = \pp{B}h(\bar{C}\beta)= \pp{B}g(\beta)$
    by the definition of $g$. Thus, $g(\bar{A}\beta) = \bar{A}g(\beta) 
    \gtran[K]{} \pp{B}g(\beta)= g(B\bar{C}\beta)$.
  \end{itemize}
  If $g(\bar{A}\beta) = \pp{A}h(\beta)$, then $\pp{A} \in \B_K$ and
  $\beta \in (\bar{\Gamma} \smallsetminus \B_K)\bar{\Gamma}^*$. Again,
  there are two possibilities:
  \begin{itemize}
  \item $\pp{A} \bbtran[K]{} B\pp{C}$. By the definition of $g$,
    we have that $g(B\pp{C}\beta) = B\pp{C}h(\beta)$, hence
    $g(\bar{A}\beta) = \pp{A}h(\beta) \gtran[K]{} B\pp{C}h(\beta)=
    g(B\bar{C}\beta)$.
  \item $\pp{A} \bbtran[K]{} \pp{B}$. Then $\bar{A},\pp{A} \in \B_K$ and
    $\pp{C} \not\in \B_K$,
    hence $g(B\pp{C}\beta) = \pp{B}h(\pp{C}\beta)= \pp{B}h(\beta)$
    by the definition of $g$. Thus, $g(\bar{A}\beta) = \pp{A}h(\beta) 
    \gtran[K]{} \pp{B}h(\beta)= g(B\pp{C}\beta)$.
  \end{itemize}
  In all of the above discussed subcases, we have that the edges
  $\alpha_{j-1} \gtran[0]{} \alpha_j$ and 
  $g(\alpha_{j-1}) \gtran[K]{} g(\alpha_j)$ are assigned 
  probability~$1$.
\end{itemize}
Since every configuration of $\B_K^*$ is $g$-eligible, the strategy
$\sigma_0$ is \mbox{$(T_\varepsilon,{=}1)$-winning}
in every configuration of $\B_K^*$.

The definition of $\hat{\sigma}_0$ and the proof that $\hat{\sigma}_0$
is \mbox{$(T,{=}1)$-winning} in every configuration of 
$\B_K^*\D_K\bar{\Gamma}^*$ are very similar as in the case of $\sigma_0$. 
The main (and only) difference is the definition of the function~$g$. 
Instead of $g$ and $h$, we use partial functions 
$\hat{g},\hat{h} : \bar{\Gamma}^* \rightarrow \bar{\Gamma}_K^*$ 
defined  as follows:
\begin{itemize}
\item $\hat{g}(\varepsilon) = {\perp}$, $h(\varepsilon) = \varepsilon$
\item $\hat{g}(\bar{Y}\beta) = \begin{cases}
                           \bar{Y}\hat{g}(\beta) & \mbox{if } \bar{Y} \in \B_K
                           \mbox{ and } 
                           \beta \in \B_K^*\scrD_K \bar{\Gamma}^*;\\
                           \pp{Y}\hat{h}(\beta) & \mbox{if }  \bar{Y},\pp{Y} 
                           \in \B_K \mbox{ and } 
                           \beta \not\in \B_K^*\scrD_K \bar{\Gamma}^*;\\
                           \perp & \mbox{otherwise.}
                         \end{cases}$
\item $\hat{h}(\bar{Y}\beta) = \begin{cases}
                           \hat{g}(\bar{Y}\beta) & 
                               \mbox{if } \bar{Y} \in \B_K;\\
                           \hat{h}(\beta) & \mbox{otherwise.}
                         \end{cases}$
\end{itemize}
The strategy $\hat{\sigma}_0$ and the function $\hat{\G}$ are defined
in the same way as $\hat{\sigma}_0$ and $\G$, using $\hat{g}$ instead
of~$g$. The strategy $\pi_K$ is defined in the same way as above.
Observe that 
$\calP^{\sigma_K,\pi_K}_{\hat{g}(\gamma)}(\R(T_\varepsilon),\Theta_K) = 1$
and since $g(\gamma)$ contains at least one symbol of $\hat{\Gamma}$,
we have that 
$\calP^{\sigma_K,\pi_K}_{\hat{g}(\gamma)}(\R(\varepsilon),\Theta_K) = 0$ which
means that $\calP^{\sigma_K,\pi_K}_{\hat{g}(\gamma)}(\R(T),\Theta_K) = 1$. 
The case analysis which reveals that $\hat{\G}$ is an isomorphism
between the reachable parts of the
plays $G_{\Theta_0}(\hat{\sigma}_0,\pi_0)$  and $G_{\Theta_K}(\sigma_K,\pi_K)$
initiated in $\gamma$ and $\hat{g}(\gamma)$, respectively, is almost
the same as above. 
\end{proof}

Lemma~\ref{lem-subset} and Lemma~\ref{lem-superset} together imply
Proposition~\ref{prop:cor_alg}. It remains to prove 
Theorem~\ref{thm:box_strat_1}. The strategy $\hat{\sigma}_0$ 
constructed in the proof of Lemma~\ref{lem-superset} is
$(T,{=}1)$-winning in every configuration of 
$\B_K^*\scrD_K\bar{\Gamma}^*$. Since $\B_K = \B$ and $\scrD_K = \scrD$ by
Proposition~\ref{prop:cor_alg}, the strategy $\hat{\sigma}_0$
is $(T,{=}1)$-winning in every configuration of 
$\wsetB{T}{=}{1}$. As it was noted in the proof of Lemma~\ref{lem-superset},
the strategy $\hat{\sigma}_0$ is constructible in polynomial time
assuming that the BPA game $\Theta_K$ has already been computed.
Since $\Theta_K$ is computable by a deterministic polynomial-time algorithm with
$\NP \cap \coNP$ oracle, the first part of Theorem~\ref{thm:box_strat_1}
is proven. It remains to show that there is a regular strategy
$\pi \in \Pi$ constructible by a deterministic polynomial-time algorithm
with $\NP \cap \coNP$ oracle such that $\pi$ is $(T,{<}1)$-winning 
in every configuration of 
$\wsetD{T}{<}{1} = \C^* \A \bar{\Gamma}^* \cup \C^*$. 
For all $\bar{X} \in \A$ and $\bar{Y} \in \C$, let
$I_{\A}(\bar{X})$ and $I_{\C}(\bar{Y})$ be the least $i$ and $j$
such that $\bar{X} \in \A_i$ and $\bar{Y} \in \C_j$, respectively
(note that this definition makes sense because $\C = \C_K$ and
$\A = \A_K$ by Proposition~\ref{prop:cor_alg}). Further, for
all $\alpha \in \bar{\Gamma}^*$ and $\bar{X} \in \bar{\Gamma}$ we define 
\begin{itemize}
\item $\price_{\C^*}(\alpha) = \begin{cases}
            \max\{0,I_{\C}(\alpha(i)) \mid 0 \leq i < \len{\alpha} \}
                                   & \mbox{if } \alpha \in \C^*\\
                            \infty & \mbox{otherwise.}
                             \end{cases}$
\item $\mathit{value}_{\C^*\A}(\alpha\bar{X}) = \begin{cases}
            \max\{\price_{\C^*}(\alpha), I_{\A}(\bar{X})\}
                                   & \mbox{if } \alpha \in \C^* 
                                     \mbox{ and } \bar{X} \in \A\\
                            \infty & \mbox{otherwise.}
                             \end{cases}$

\item $\price_{\C^*\A}(\alpha) = 
   \min\{\mathit{value}_{\C^*\A}(\beta) \mid \alpha=\beta\gamma \}$
\item $\price(\gamma) = \min\{\price_{\C^*\A}(\gamma),\price_{\C^*}(\gamma)\}$
\end{itemize}
Let $\sqsubset$ be a strict (i.e., irreflexive) ordering over 
$\C^* \A \bar{\Gamma}^* \cup \C^*$
defined as follows: $\alpha \sqsubset \beta$ if 
either $\price(\alpha) < \price(\beta)$, or 
$\price(\alpha) = \price(\beta)$ and 
$\price_{\C^*}(\gamma_1) < \price_{\C^*}(\gamma_2)$, where
$\alpha = \gamma_1\eta$, $\beta =\gamma_2\eta$, and 
$\eta$ is the longest common suffix of $\alpha$ and $\beta$.
One can easily verify that the ordering $\sqsubset$ is well-founded.
Let $\pi[\W_K],\pi[\U_K]$ be the SMD strategies of 
Lemma~\ref{lem-subset-prelim}. The strategy $\pi$ is defined so
that the following conditions are satisfied:
\begin{itemize}
\item if $\price(\bar{Z}\alpha) = \infty$, then $\pi(\bar{Z}\alpha)$
  is defined arbitrarily;
\item if $\bar{Z} \in \A$ and $I_{\A}(\bar{Z}) \leq \price(\bar{Z}\alpha)$,
  then $\pi(\bar{Z}\alpha) = \pi[\W_K](\bar{Z}\alpha)$;
\item otherwise, $\pi(\bar{Z}\alpha) = \pi[\U_K](\bar{Z}\alpha)$.
\end{itemize}
Observe that $\pi$ is regular, and the associated finite-state automaton
$\A_\pi$ is constructible in time polynomial in $\initdelta$ if the
strategies $\pi[\W_K],\pi[\U_K]$, the sets $\A,\C$, and the functions
$I_\A,I_\C$ have already been computed. Since all of these objects
are computable by a deterministic polynomial-time algorithm with 
$\NP\cap\coNP$ oracle,
the automaton $\A_\pi$ is also constructible by a deterministic
polynomial-time algorithm 
with $\NP\cap\coNP$ oracle. It remains to show that 
the definition of $\pi$ is correct, i.e., for every 
$\gamma \in \C^* \A \bar{\Gamma}^* \cup \C^*$ and every 
$\sigma \in \Sigma$ we have that 
$\calP^{\sigma,\pi}_{\gamma}(\R(T),\initdelta) < 1$. We proceed by  
induction with respect to the well-founded ordering $\sqsubset$.
The only minimal element of $\C^* \A \bar{\Gamma}^* \cup \C^*$ 
is $\varepsilon$ where we have 
$\calP^{\sigma,\pi}_{\varepsilon}(\R(T),\initdelta) = 0$. Now let 
$\bar{Z}\alpha \in \C^* \A \bar{\Gamma}^* \cup \C^*$ be some non-minimal
element. By Lemma~\ref{lem-subset-prelim} and the definition
of $\pi$ we immediately have that either
$\calP^{\sigma,\pi}_{\bar{Z}\alpha}(\R(T),G_{\initdelta}) < 1$
  or 
  $\calP^{\sigma,\pi}_{\bar{Z}\alpha}(\R(\gamma\alpha,G_{\initdelta})) > 0$
where $\gamma\alpha \sqsubset \bar{Z}\alpha$. In the first case,
we are done immediately, and in the second case we apply induction 
hypothesis.

\section{Conclusions}

We have solved the qualitative reachability problem for stochastic
BPA games, retaining the same upper complexity bounds that have
previously been established for termination \cite{EY:RMDP-efficient}. One 
interesting question which remains unsolved is the decidability
of the problem whether $\val(\alpha) = 1$ for a given BPA configuration
$\alpha$ (we can only decide whether player~$\Box$ has
a $({=}1)$-winning strategy, which is sufficient but not necessary
for $\val(\alpha) = 1$). Another open problem is
quantitative reachability for stochastic BPA games, where the 
methods presented in this paper seem~insufficient.

\bibliographystyle{plain}
\bibliography{str-long,concur}

\end{document}